\def\eqref#1{equation~\ref{#1}}
\def\1{\bm{1}}
\DeclareMathAlphabet{\mathsfit}{\encodingdefault}{\sfdefault}{m}{sl}
\SetMathAlphabet{\mathsfit}{bold}{\encodingdefault}{\sfdefault}{bx}{n}
\newcommand\independent{\protect\mathpalette{\protect\independenT}{\perp}}
\def\independenT#1#2{\mathrel{\rlap{$#1#2$}\mkern2mu{#1#2}}}
\newtheorem{theorem}{Theorem}[section]
\newtheorem{definition}{Definition}
\newtheorem{remark}[theorem]{Remark}
\newcommand{\emptycomment}[1]{}
\icmltitlerunning{On Heterogeneous Treatment Effects in %Estimation in 
Heterogeneous Causal Graphs}
\begin{document}

\doparttoc % Tell to minitoc to generate a toc for the parts
\faketableofcontents % Run a fake tableofcontents command for the partocs

\linepenalty=1000

\setlength{\abovedisplayskip}{3pt}
\setlength{\belowdisplayskip}{3pt}
\setlength{\abovedisplayshortskip}{3pt}
\setlength{\belowdisplayshortskip}{3pt}

\twocolumn[
\icmltitle{\bf On Heterogeneous Treatment Effects in %Estimation in 
Heterogeneous Causal Graphs}

% It is OKAY to include author information, even for blind
% submissions: the style file will automatically remove it for you
% unless you've provided the [accepted] option to the icml2022
% package.

% List of affiliations: The first argument should be a (short)
% identifier you will use later to specify author affiliations
% Academic affiliations should list Department, University, City, Region, Country
% Industry affiliations should list Company, City, Region, Country

% You can specify symbols, otherwise they are numbered in order.
% Ideally, you should not use this facility. Affiliations will be numbered
% in order of appearance and this is the preferred way.
\icmlsetsymbol{equal}{*}

\begin{icmlauthorlist}
\icmlauthor{Richard Watson}{equal,ncsu}
\icmlauthor{Hengrui Cai}{equal,uci}
\icmlauthor{Xinming An}{unc}
\icmlauthor{Samuel Mclean}{unc}
\icmlauthor{Rui Song}{ncsu}
\end{icmlauthorlist}

\icmlaffiliation{ncsu}{Department of Statistics, North Carolina State University, %Raleigh, North Carolina, 
USA}
\icmlaffiliation{uci}{Department of Statistics, University of California Irvine, %Irvine, California, 
USA}
\icmlaffiliation{unc}{Department of Anesthesiology, University of North Carolina Chapel Hill, %Chapel Hill, North Carolina, 
USA}
%\icmlaffiliation{ama}{Amazon, Seattle, Washington, USA}

\icmlcorrespondingauthor{Rui Song}{songray@gmail.com}
% \icmlcorrespondingauthor{Hengrui Cai}{hengrc1@uci.edu}

% You may provide any keywords that you
% find helpful for describing your paper; these are used to populate
% the "keywords" metadata in the PDF but will not be shown in the document
\icmlkeywords{Causal Inference, Causal Graphs, Causal Discovery, Heterogeneous Casual Effects, Machine Learning, ICML}

\vskip 0.3in
]

% this must go after the closing bracket ] following \twocolumn[ ...

% This command actually creates the footnote in the first column
% listing the affiliations and the copyright notice.
% The command takes one argument, which is text to display at the start of the footnote.
% The \icmlEqualContribution command is standard text for equal contribution.
% Remove it (just {}) if you do not need this facility.

%\printAffiliationsAndNotice{}  % leave blank if no need to mention equal contribution
\printAffiliationsAndNotice{\icmlEqualContribution} % otherwise use the standard text.

\begin{abstract}
Heterogeneity and comorbidity are two interwoven challenges associated with various healthcare problems that greatly hampered research on developing effective treatment and understanding of the underlying neurobiological mechanism. Very few studies have been conducted to investigate \textit{heterogeneous causal effects} (HCEs) in graphical contexts due to the lack of statistical methods. To characterize this heterogeneity, we first conceptualize \textit{heterogeneous causal graphs} (HCGs) by generalizing the causal graphical model with confounder-based interactions and multiple mediators. Such confounders with an interaction with the treatment are known as \textit{moderators}. This allows us to flexibly produce HCGs given different moderators and explicitly characterize HCEs from the treatment or potential mediators on the outcome. We establish the theoretical forms of HCEs and derive their properties at the individual level in both linear and nonlinear models. An interactive structural learning is developed to estimate the complex HCGs and HCEs with confidence intervals provided. Our method is empirically justified by extensive simulations and its practical usefulness is illustrated by exploring causality among psychiatric disorders for trauma survivors. %The code used to obtain the results in this work can be found here
Code implementing the proposed algorithm is open-source and publicly available at: \url{https://github.com/richard-watson/ISL}.
\end{abstract}
%\vspace{-0.2cm}
\section{Introduction}
%\vspace{-0.1cm}
%During the last several decades, little progress has been made in the understanding of the underlying neurobiological mechanism and developing effective treatments for psychiatric disorders due to several unique challenges \citep{mclean2020aurora}. \textit{Heterogeneity} is a common issue for mental health disorders and most of the existing studies address heterogeneity based on the mean structure and identify subgroups/subtypes that differ only on the level of mean or severity \citep{feczko2019heterogeneity}. Such studies are less likely to provide new insights about subgroups of patients that share the same neurobiological mechanisms or respond similarly to treatment options. In comparison, subtypes with the same causal relation structures are more likely to share the same underlying neurobiological mechanism and the same response to treatment.

During the last several decades, little progress has been made in the understanding of the underlying neurobiological mechanism and in developing effective treatments for psychiatric disorders due to several unique challenges \citep{mclean2020aurora}. Accumulating evidence from existing research suggests that \textit{Heterogeneity} is a common issue for psychiatric disorders based on traditional classification and diagnoses. Most of the existing studies that address heterogeneity based on symptom severity levels \citep{feczko2019heterogeneity} are less likely to provide new insights about underlying neurobiological mechanisms. On the other hand, studying heterogeneous causal relationships of different psychiatric disorders has the potential to improve understanding, risk prediction, and treatment selection. However, to date, no such studies have been conducted due to the lack of appropriate analysis techniques. To fill in this gap and advance research for psychiatric disorders, we developed an innovative analysis approach to investigate and test heterogeneous causal relationships and applied it to study psychiatric disorders after trauma exposure using data from a large cohort study.

\textit{Causal discovery}, the task of discovering the causal relations between variables in a dataset, has interesting and important applications in many areas, such as epidemiology \citep{hernan2004definition}, medicine \citep{hernan2000marginal},  economics \citep{panizza2014public}, etc. Under a general causal graph, an event or a treatment may have a direct effect on the outcome of interest as well as an indirect effect regulated by a set of \textit{mediators} (which are variables that are affected by the treatment and in turn affect the outcome as $M$ does in Figure \ref{fig1}) \citep[see an overview in][]{pearl2009causal}. Yet, due to the heterogeneity of individuals  in response to different events/treatments, there may not exist a uniform causal graph for everyone. This implies the existence of heterogeneous causal graphs under different \textit{moderators} which are pre-treatment confounders that also have an interaction term with the treatment as $\bm{X}$ does in Figure \ref{fig1}). Learning such heterogeneity is one of the most crucial obstacles that continue to hamper advances in many fields such as psychiatric disorders \citep{marquand2016beyond}. With the launch of the Advancing Understanding of RecOvery afteR traumA (AURORA) study \citep{mclean2020aurora}, where thousands of trauma survivors were recruited after traumatic experiences and followed to collect a broad range of bio-behavioral data, discovering  heterogeneous causal patterns thus becomes a timely issue.

\begin{figure*}[!htb]
\centering
\begin{subfigure}{.31\linewidth}
  \centering
  \includegraphics[width=\linewidth]{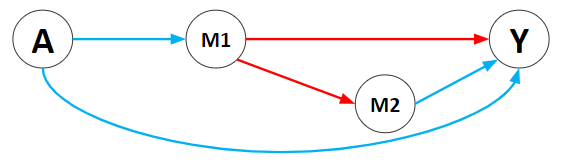}
\end{subfigure}%
\begin{subfigure}{.35\linewidth}
  \centering
  \includegraphics[width=\linewidth]{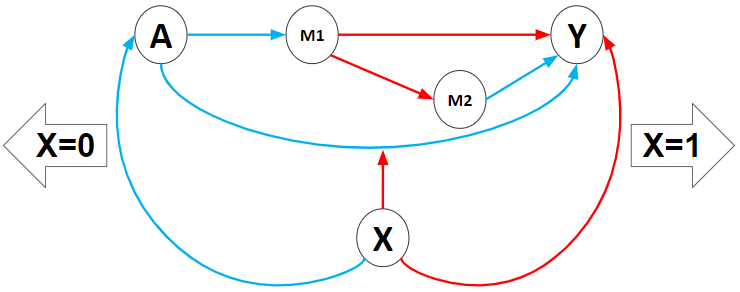}
\end{subfigure}%
\begin{subfigure}{.31\linewidth}
  \centering
  \includegraphics[width=\linewidth]{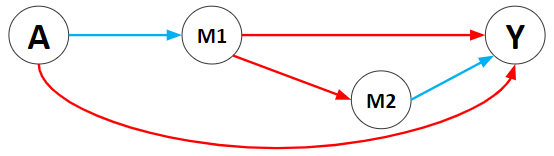}
\end{subfigure}%
\vspace{-0.2cm}
\caption{The middle panel is the whole causal graph, and the left and right graphs are heterogeneous causal graphs for different values of $\bm{X}$. Here $\bm{X}$ confounds the relationship between $A$ (the treatment or event) and $Y$ (the outcome of interest), and modifies the direct effect of $A$ on $Y$ dependent on the value of $\bm{X}$. $M_1$ and $M_2$ are two different mediators that mediate the indirect effect of $A$ onto $Y$. The red and blue arrows represent whether the causal relation is positive or negative, respectively.}
%\vspace{-0.3cm}
    \label{fig1}
\end{figure*}

Despite the popularity of causal discovery methods \citep[e.g.,][]{spirtes2000constructing,shimizu2006linear,kalisch2007estimating,buhlmann2014cam,ramsey2017million,zheng2018dags,yu2019dag,zhu2019causal}, none of these works considers heterogeneous causal graphs to account for the heterogeneity among different subjects. On the other hand, all existing works of learning heterogeneity in causal inference literature \citep[e.g.,][]{wager2018estimation,kunzel2019metalearners,nie2021quasi} either assume a known homogeneous causal graph or completely ignore the graphical structure, and thus fail to capture the heterogeneous causal pathways. In this paper, we focus on learning heterogeneous causal graphs and effects with multiple (possibly sequentially ordered) mediators. Our \textbf{contributions} are three-fold. 

\hspace{0.2cm} $\bullet$\hspace{0.2cm}  To our knowledge, this is the first work that considers heterogeneity in terms of causal graphs. We first conceptualize \textit{heterogeneous causal graphs} (HCGs), by incorporating the information of moderators and their interactions into the causal graphical model. This introduces several unique challenges in presenting the interacted nodes in the graphical model and handling such interactions in a consistent self-generated system. We address these difficulties by proposing a novel structural equation model with interactions to flexibly produce HCGs from a hybrid causal graph.

\hspace{0.2cm} $\bullet$ \hspace{0.2cm} We propose \textit{heterogeneous causal effects} (HCEs) to systematically quantify the impact of treatments and mediators on the outcome of interest given different values of moderators. Based on the proposed graphical model with interactions, we establish the explicit forms of the defined HCEs and derive their theoretical properties at the individual level, in either \textbf{linear} or \textbf{non-linear} structural equation model. 

\hspace{0.2cm} $\bullet$\hspace{0.2cm}  We develop a comprehensive procedure for extracting heterogeneous causal reasoning. Specifically, we propose an interactive structural learning algorithm to learn the complex HCGs, estimate HCEs, and compute bootstrap confidence intervals for these estimates via a debiasing process. Our method is general enough to address most causal structures.
%\vspace{-0.15cm}
\subsection{Related Works}
%\vspace{-0.1cm}
\textbf{Conditional average treatment effect.} A vast number of approaches have been proposed for estimating the heterogeneity via the conditional average treatment effect (CATE) that quantifies the effect size of treatment on the outcome of interest given different confounders \citep[see recent advances in][]{athey2015machine,shalit2017estimating,wager2018estimation,kunzel2019metalearners,nie2021quasi,farrell2021deep}. Here, confounders used in CATE usually play the role of the moderators \citep{kraemer2002mediators} which modify the impact of the treatment on the outcome as a presence of confounder-treatment interaction in modeling the outcome. All these works neglect the underlying causal structure among potential mediators %(see more related works in Appendix \ref{sec:more_related}) 
in regulating the treatment effects and thus lead to a less interpretable causal mechanism.

\textbf{Traditional mediation analysis.} There is an abundance of literature for traditional mediation analysis \citep[see a review in][and the reference therein]{van2016med} quantifying treatment effects in the presence of multiple mediators. Due to multiple causally dependent mediators, \citet{robin2003semantics} found that causal effects are not identifiable unless no treatment-mediator interaction is imposed, with more discussions on the identification of causal effects later on \citep{imai2013identification,tchetgen2014mediation,van2019med}. % for other discussions on the identification of causal effects. 
Yet, all of these works primarily assumed a known causal graph with  only a few mediators included for analysis. In addition, the moderators in traditional mediation analysis \citep{muller2005modmed} interact with the treatment but are  independent of the treatment. In this work, we do not enforce such an independence requirement.

 %\vspace{-0.15cm}

\textbf{Causal structural learning.} Plentiful causal structural learning (CSL) methods have been proposed to learn the unknown causal
structure within a class of directed acyclic graphs from observed data. The large literature can be categorized into three types. The testing-based methods \citep[e.g.,][for the well-known PC algorithm]{spirtes2000constructing} rely on the conditional independence tests to find the causal skeleton and edge orientations under the linear structural equation model (SEM). Based on additional and proper assumptions on noises and models, the functional-based methods handle both linear SEM \citep[e.g.,][]{shimizu2006linear} and non-linear SEM \citep[e.g.,][]{buhlmann2014cam}. Recently, the score-based methods formulate the CSL problem into optimization by certain score functions, for both linear SEM \citep[e.g.,][]{ramsey2017million,zheng2018dags} and non-linear SEM \citep[e.g.,][]{yu2019dag,zhu2019causal,zheng2020learning,rolland2022score}. Yet, all these works neglect causal contexts (i.e., treatments, mediators, and outcomes) and thus cannot develop HCGs.% or HTEs.

\textbf{Causal mediation analysis with CSL.} To visualize causes and counterfactuals, \citet{pearl2009causal} proposed to use the causal graphical model and the `do-operator' to quantify the causal effects. A number of follow-up works \citep[e.g.,][]{maathuis2009estimating,nandy2017estimating,chakrabortty2018inference} have been developed recently to estimate direct and indirect causal effects that are regulated by mediators in the linear SEM. These studies relied on the PC algorithm \citep{spirtes2000constructing} which requires strong assumptions of graph sparsity and noise normality due to computational limits. To overcome these difficulties,  \citet{cai2020anoce} proposed to leverage score-based CSL methods \citep[e.g.,][]{ramsey2017million,zheng2018dags,yu2019dag,zhu2019causal} with background causal knowledge to estimate mediation effects. These works assumed no moderator in their linear SEMs, such that the causal graph or effect learned is on the population level, and thus cannot access the heterogeneity.

% \textbf{Causal mediation analysis and structural learning.} To visualize causes and counterfactuals, \citet{pearl2009causal} proposed to use the causal graphical model and the `do-operator' to quantify the causal effects. A number of follow-up works \citep[e.g.,][]{maathuis2009estimating,nandy2017estimating,chakrabortty2018inference} have been developed recently to estimate direct and indirect causal effects that are regulated by mediators in the linear structural equation model (LSEM). These studies relied on conditional independence tests to learn the unknown causal structure within a class of directed acyclic graphs \citep[e.g.,][]{spirtes2000constructing,kalisch2007estimating}, which require strong assumptions of graph sparsity and noise normality due to computational limits. To overcome these difficulties,  \citet{cai2020anoce} proposed to leverage cutting-edge causal structural learning methods \citep[e.g.,][]{ramsey2017million,zheng2018dags,yu2019dag,zhu2019causal} with background causal knowledge to estimate mediation effects. All of these cited works assumed that there is no moderator in the model, such that the causal graph learned is on the population level, and thus cannot access heterogeneous causal graphs and effects. 
 %\vspace{-0.3cm}
\section{Framework}\label{sec:assump}
%\vspace{-0.1cm}
\textbf{Causal graph terminology.} A directed acyclic graph (DAG) is defined as a directed graph that does not contain directed cycles. Specifically, let a DAG $\mathbb{G} =(\bm{Z},\bm{E})$ characterize the causal relationship among a set of random variables $\bm{Z}$ and an edge set $\bm{E}$. Here, $\bm{Z}=\{Z_1,\dots,Z_w\}$ represents a random vector of $w=|\bm{Z}|$ nodes and an edge $Z_i\rightarrow Z_j \in \bm{E}$ means that $Z_i$ is a direct cause of $Z_j$. A variable node $Z_i$ is said to be a parent of $Z_j$ if there is a directed edge from $Z_i$ to $Z_j$. The set of all parents of node $Z_j$ in $\mathbb{G}$ is denoted as $PA_{ \mathbb{G}}({Z_j})$. Let $\bm{B}=\{b_{i,j}\}_{1\leq i\leq w,1\leq j\leq w}$ be a $w\times w$ matrix, where $b_{i,j}$ is the weight of the edge $Z_i\rightarrow Z_j \in \bm{E}$, and $b_{i,j}=0$ otherwise. Then, we say that $\mathbb{G}=(\bm{Z},\bm{B})$ is a weighted DAG with the variable/node set $\bm{Z}$ and the weighted adjacency matrix $\bm{B}$.

\textbf{Notations and assumptions.} Let $\bm{X}=[X_1,X_2,\dots,X_p]^\top$ be a $p$-dimensional vector of baseline information prior to the event or the treatment $A$,  $\bm{M}=[M_1,M_2,\dots,M_s]^\top$ be a $s$-dimensional vector of post-treatment possible mediators, and $Y$ be the final outcome of interest. We define $\bm{X}$ generally as a vector of possible moderators that precede the treatment with no independence restriction. There may be no interaction between $X_1$ and $A$, in which case it is not a moderator, but a confounder. Further $X_2$ may only affect $Y$, in which case it is a covariate. We allow $\bm{X}$ to be any combination of covariates, confounders, and moderators. As commonly imposed in CSL works \citep[e.g.,][]{spirtes2000causation,peters2014causal}, we assume the Markov and faithfulness %, and causal sufficiency 
assumptions (see explicit definitions and further discussion in Appendix \ref{asec:assump}). Let $Y^*\equiv Y^*(A=a, \bm{M}=\bm{m})$ when $A=a$ and $\bm{M}=\bm{m}$ as the potential outcome and $M_i^* \equiv M_i^*(A=a,M_j=m_j)$ when $A=a$ and $M_j=m_j$ as the potential mediator. For any mediator $M_i$ and any of its parent $M_j\in PA_{\mathbb{G}}(M_i)$\textbackslash$\{\bm{X},A\}$, we assume no unmeasured confounders for:\\
%\begin{enumerate}[label=(A\arabic{*})]
    %\item 
    \textbf{(A1)}. the treatment on the outcome and mediators:
   \begin{align*}
        Y^* \independent A|\bm{X};   \quad  \quad
        M_i^*  \independent A|\bm{X};
   \end{align*}
   % \item 
   \textbf{(A2)}. the mediators on the outcome:
   \begin{align*}
       \quad  Y^* \independent M_j|\{A,\bm{X}\}; \quad  \quad Y^* \independent M_i|\{A,\bm{X},M_j\};
    \end{align*}
   % \item 
   \textbf{(A3)}. the potential mediators:
   \begin{align*}
      & Y^* \independent M_j^*|\bm{X}; \quad   M_i^* \independent M_j^*|\bm{X};\quad   Y^* \independent M_i^*|\{M_j^*,\bm{X}\}. \end{align*}
Assumptions (A1-A3) are concerned with the completeness of the data as common in causal inference \citep[e.g.,][]{pearl2009causal,nandy2017estimating}, where we assume that all variables causally related to any variable in the data are included in data. Also note that we \textbf{make no assumptions on the structure of the mediators} which allows our method to take into account the fact that not all potential mediators are true mediators.  Finally, the causal graph, in general, is only identifiable to its Markov Equivalence Class (MEC), however if, for example, the data is assumed to be Gaussian with equal variance, it can be identified from its MEC \citep{peters2014identifiability} (see details in Appendix \ref{sec:identif}).
%\vspace{-0.25cm}
\section{Heterogeneous Causal Graphs}  
%\vspace{-0.15cm}
% In this section we conceptualize heterogeneous causal graphs among observed variables, by generalizing the causal graphical model with moderation.
We conceptualize the heterogeneous causal graph (HCG), by generalizing
the causal graphical model with moderation.%, under the linear SEM for good interpretation and the non-linear SEM to handle complex relationship.
 %\vspace{-0.25cm}
\subsection{%Linear SEM %
Structural Equation Model 
with  Interactions}\label{sec:lsem_intera}
%\vspace{-0.15cm}
 Considering interactions among variables  introduces several unique challenges in the causal graphical model, for example, presenting the interacted nodes in the graph and modeling  heterogeneous interactions in a consistent self-generated system. To address these issues, we first generalize the linear SEM for the causal graphical model with interactions, which yields good interpretation to easily link the parameters with causal effects to be specified (see the functional SEM version of our method to handle complex relationships in Section \ref{ext}). 
 For simplicity of exposition, we use the product of $\bm{X}$ and $A$, $\bm{X}A$, as the interaction term (see discussions on the importance of such an interaction term in Remark \ref{rem:whyinter}). $\bm{D}=[\bm{X},A,\bm{X}A,\bm{M},Y]^\top$ is a $(2p+s+2)$-dimensional data vector for an individual. Suppose there exists a weighted adjacency matrix  $\bm{B}$  such that $\bm{D}$ and $\bm{B}$ together form a DAG, denoted as $\mathbb{G}=(\bm{D},\bm{B})$. This DAG presents the causality among interested $w=(2p+s+2)$ variables, where the node set $\bm{D}$ can be described using a linear SEM with an unknown $\bm{B}$ and some noise vector $\bm{\epsilon}$ as 
\begin{equation}
\resizebox{0.9\hsize}{!}{$
    \label{LSEM}
    \bm{D} = \bm{B}^\top \bm{D}+\bm{\epsilon}  :\quad  \begin{matrix}
p\\1\\p\\s\\1
\end{matrix}\begin{bmatrix}\bm{X}\\A\\\bm{X}A\\\bm{M}\\Y\end{bmatrix} = \bm{B}^\top\begin{bmatrix}\bm{X}\\A\\\bm{X}A\\\bm{M}\\Y\end{bmatrix} +\begin{bmatrix} \bm{\epsilon_X}\\\epsilon_A\\\bm{\epsilon}_{\bm{X}A}\\\bm{\epsilon_M}\\ \epsilon_Y\end{bmatrix},
$}
\end{equation}
where  $\bm{\epsilon_X},\epsilon_A,\bm{\epsilon_M}, \epsilon_Y$ are random error variables associated with $\bm{X}, A, \bm{M},$ and $Y$, respectively, and are jointly independent with mean zero. Note that {the noise is distribution-free}, so our method can handle any arbitrary distribution without breaking the linear SEM. In addition, since $\bm{X}A$ is determined given $\bm{X}$ and $A$, in order to accommodate the interaction setting for the linear SEM, we define $\bm{\epsilon}_{\bm{X}A} \equiv \bm{X}A$ with a constraint that $\bm{B}^\top \bm{D}$ produces a zero vector for elements corresponding to $\bm{X}A$. Alternatively, one can let $\bm{\epsilon}_{\bm{X}A} $ be a zero vector, with a  constraint that $\bm{B}^\top \bm{D}$ produces exact $\bm{X}A$. For simplicity, we use the first formulation. We next explicitly characterize the weighted adjacency matrix  $\bm{B}$ that satisfies Model (\ref{LSEM}) based on causal knowledge among $\bm{X}, A, \bm{M},$ and $Y$, as well as the interaction term $\bm{X}A$. Specifically, according to HCGs as illustrated in Figure \ref{fig1} and assumptions in Section \ref{sec:assump}, we have:
\begin{enumerate}
\vspace{-0.3cm}
    \item $\bm{X}$ has no parents, i.e.,\\ $ g_1(\bm{B})=\sum_{j=1}^p \sum_{i=1}^{2p+s+2} |b_{i,j}| = 0;$
    \vspace{-0.25cm}
    \item the only parents of $A$ are $\bm{X}$, i.e.,\\ $g_2(\bm{B})=\sum_{i=p+1}^{2p+s+2} |b_{i,{p+1}}|= 0; $
    \vspace{-0.25cm}
    \item $Y$ has no descendants, i.e.,\\ $g_3(\bm{B})=\sum_{i=1}^{2p+s+2} |b_{2p+s+2,i}| = 0;$ and
    \vspace{-0.25cm}
    \item the interaction $\bm{X}A$ also does not have parents, i.e.,\\ $g_4(\bm{B})=\sum_{j=p+2}^{2p+1}\sum_{i=1}^{2p+s+2} |b_{i,j}| = 0.$
    \vspace{-0.3cm}
\end{enumerate}
The conditions in $ g_1(\bm{B})$ to $ g_4(\bm{B})$ yield the following matrix  $\bm{B}^\top$ consisting of unknown parameters whose  sparsity is due to prior causal information:
\[\bm{B}^\top = \begin{bmatrix}
\bm{0}_{p\times p}&\bm{0}_{p\times 1}&\bm{0}_{p\times p}&\bm{0}_{p\times s}&\bm{0}_{p\times 1}\\
\bm{\delta_X}&0&\bm{0}_{1\times p}&\bm{0}_{1\times s}&0\\
\bm{0}_{p\times p}&\bm{0}_{p\times 1}&\bm{0}_{p\times p}&\bm{0}_{p\times s}&\bm{0}_{p\times 1}\\
\bm{B_X}^\top&\bm{\beta}_A&\bm{B}^\top_{\bm{X}A}&\bm{B_M}^\top&\bm{0}_{s\times 1}\\
\bm{\gamma_X}&\gamma_A&\bm{\gamma}_{\bm{X}A}&\bm{\gamma_M}&0
\end{bmatrix},\]
where $\bm{0}_{a\times b}$ is a $a\times b$ zero matrix/vector, and the parameters $\bm{\delta_X},\bm{B_X}^\top,$ and $\bm{\gamma_X}$ represent the influence of $\bm{X}$, on the treatment $A$, the mediators $\bm{M}$, and the outcome $Y$, respectively. Likewise, $\bm{\beta}_A$ and $\gamma_A$ represent the influence of $A$ on $\bm{M}$ and $Y$, respectively, and $\bm{\gamma_M}$ represent the influence of $\bm{M}$ on $Y$. $\bm{B_M}^\top$ represents the influence of the mediators on other mediators. If $\bm{B_M}^\top = \bm{0}_{s\times s}$ then we say that mediators are \textit{parallel}, otherwise they are \textit{sequentially ordered}. Finally, $\bm{B}^\top_{\bm{X}A}$ and $\bm{\gamma}_{\bm{X}A}$ represent the influences of the interaction between the possible moderators and the treatment, $\bm{X}A$, on $\bm{M}$ and $Y$. This gives the proposed model the capability to characterize moderation to allow heterogeneity and multiple sequentially ordered mediators to allow complexity. 
\begin{remark}\label{rem:whyinter}
    We extend the linear SEM by directly integrating moderation into the casual graph. This is done for computational purposes and easier interpretation. Moderation is essential in learning HCGs and quantifying heterogeneity in causal effects. Specifically, the interaction $\bm{X}A$ in  Model (\ref{LSEM}) allows causal effects to depend on the value of $\bm{X}$. %We call these heterogeneous causal effects which will be formally defined in Section \ref{sec:HCTs}. 
    As will be seen in Theorem \ref{mainthm}, without interactions in the model, causal effects would be the same for all subjects within a population. Our model is thus flexible enough to handle heterogeneity and is general enough to cover other frameworks such as moderated mediation and mediated moderation as detailed in Appendix \ref{asec:med_model}. %Following a similar logic, theoretically, w
    We also allow a fully heterogeneous causal relationship among mediators, including changes in causal direction, by adding $\bm{XM}$ interactions to the model such that $\bm{X}$ would moderate the way mediators affect each other (see Section \ref{XMext}). %A “fully heterogeneous” model may not be practically necessary since it is very uncommon to expect the causal direction to be heterogeneous. The slightly restricted version we proposed in this study should be able to address the majority of applied research questions with respect to heterogeneous causal relationships. 
    Also see further discussion on HCGs %and extension to $\bm{XM}$ interaction 
in Appendix \ref{asec:more_HCGs}. %Also see further discussion on HCGs and extension to $\bm{XM}$ interaction in Appendix \ref{asec:more_HCGs}.
\end{remark}

%\vspace{-0.2cm}

\begin{remark}
The structural constraints in $g_1$ to $g_4$ can be modified or more can be added to account for prior knowledge. The more information is known about the variables, the more structural constraints can be added to improve the estimation of the unknown causal graph \citep{cai2020anoce}.
\end{remark}

\subsection{Generating Heterogeneous Causal Graphs% from Model \eqref{LSEM}
}\label{sec:subgraph}
 %\vspace{-0.1cm}
We next show how to produce an HCG given matrix  $\bm{B}^\top$ and the value of moderators based on Model (\ref{LSEM}). To this end, we utilize the $do(\cdot)$ operator studied in \citet{pearl2000causality}, which simulates an intervention such that a variable can take a specific value irrespective of parent variable effects. Suppose we are interested in a group of subjects with $\bm{X} = \bm{x}$. By setting the value of $\bm{X}$ as $\bm{x}$ via $do(\cdot)$ operator,
all the edges that go to $\bm{X}$ will be eliminated, which yields a new linear SEM:  
\begin{align}\label{project_HCGs}
\bm{D}_{do(\bm{X=x})} &= \bm{B}_{do(\bm{X=x})}^\top \bm{D}_{do(\bm{X=x})}+\bm{\epsilon}' \to\\
    \begin{bmatrix}A\\\bm{M}\\Y\end{bmatrix} &=\begin{bmatrix} \bm{\delta_X x}  \\ \nonumber\bm{B_X}^\top\bm{x} \\ \bm{\gamma_Xx} \end{bmatrix} + \bm{B}_{do(\bm{X=x})}^\top\begin{bmatrix}A\\\bm{M}\\Y\end{bmatrix} +\begin{bmatrix}  \epsilon_A\\  \bm{\epsilon_M}\\  \epsilon_Y\end{bmatrix},
\end{align} 
 \[\text{where }~~~ \bm{B}_{do(\bm{X=x})}^\top = \begin{bmatrix}
0&\bm{0}_{1\times s}&0\\
\bm{\beta}_A + \bm{x}\bm{\beta}_{\bm{X}A}&\bm{B_M}^\top&\bm{0}_{s\times 1}\\
\gamma_A + \bm{x}\gamma_{\bm{X}A}&\bm{\gamma_M}&0
\end{bmatrix},~~~~~\]
characterizes the heterogeneous causal graph with respect to $\bm{X} = \bm{x}$, and the additional intercept is a constant effect due to the fixed baseline. Thus, any HCG given estimated parameters for $\bm{B}^\top$ and $\bm{X} = \bm{x}$ can be obtained without extra model training, as illustrated in Figure \ref{fig1}. %The whole causal graph on the population level is then defined as the average of HCGs. 

% \begin{remark}[Clarification of Causal Graph on Population Level v.s. HCGs]
Our proposed method first estimates a causal graph on the population level by including the treatment-and-moderator interaction in modeling to reflect heterogeneity, which allows us to project the estimated causal graph on the subgroup level to a certain value of moderator $\bm{X}$ through the do-operator as in \eqref{project_HCGs}. Under different values of $\bm{X} = \bm{x}$, the corresponding projected causal graphs are thus produced as the HCGs. Mathematically, it can be understood as $E(\mathcal{\widehat{G}})=E_{\bm{X}}\{E(\mathcal{\widehat{G}}|\bm{X} = \bm{x})\}=E_{\bm{X}}\{E(\mathcal{\widehat{G}}|do(\bm{X} = \bm{x})\}$, where $\mathcal{\widehat{G}}$ is the estimated causal graph on the population level, the first equation is due to the formula of iterative expectation, and the second equation is owing to the no unmeasured confounder assumption. Thus, % we say 
the entire causal graph at the population level can be viewed as %the average of HCGs, or more precisely, as 
the weighted average of HCGs and the weights are the density of $\bm{X}$. %More details can be found in Section 3.2.
In the case that the subgroup of interest is characterized only by a subset of variables, we can set the values of those variables as appropriate and average across the rest as done in the population-level case.
% \end{remark}
 %\vspace{-0.3cm}
\section{Heterogeneous Causal Effects}\label{sec:HCTs}
 %\vspace{-0.1cm}
In this section, we officially define heterogeneous causal effects (HCEs) %of treatment or potential mediators on the outcome of interest 
and derive their theoretical properties.% under Model (\ref{LSEM}).
 %\vspace{-0.1cm}
\subsection{Heterogeneous Causal Effects of Treatment}
 %\vspace{-0.1cm}
We focus on HCEs of treatment first. Following \citet{pearl2000causality}, we have the heterogeneous total effect (HTE) to be the total effect of the treatment on the outcome, the natural heterogeneous direct effect (HDE) to be the effect of the treatment on the outcome that is not mediated by mediators, and the natural heterogeneous indirect effect (HIE) to be the effect of the treatment on the outcome that is regulated by mediators, given the baseline information. 
 %\vspace{-0.05cm}
\begin{definition}
\label{ind_caus_eff}
\textit{Heterogeneous Causal Effects of Treatment:}
% \begin{align*}
%     HTE(\bm{x}) = &E\big\{Y|do(A=a+1),\bm{X}=\bm{x}\big\}\\&\ - E\big\{Y|do(A=a),\bm{X}=\bm{x}\big\},\\
%     HDE(\bm{x}) = &E\big\{Y|do(A=a+1,\bm{M}=\bm{m}^{(a)}),\bm{X}=\bm{x}\big\}\\&\  - E\big\{Y|do(A=a),\bm{X}=\bm{x}\big\},\\
%     HIE(\bm{x}) = &E\big\{Y|do(A=a,\bm{M}=\bm{m}^{(a+1)}),\bm{X}=\bm{x}\big\}\\&\  - E\big\{Y|do(A=a),\bm{X}=\bm{x}\big\}.
% \end{align*}
\begin{align*}
    HTE(\bm{x}) = &\partial E\big\{Y|do(A=a),\bm{X}=\bm{x}\big\}/\partial a,\\
    HDE(\bm{x}) = &\partial E\big\{Y|do(A=a,\bm{M}=\bm{m}^{(a')}),\bm{X}=\bm{x}\big\}/\partial a,\\
    HIE(\bm{x}) = &\partial E\big\{Y|do(A=a',\bm{M}=\bm{m}^{(a)}),\bm{X}=\bm{x}\big\}/\partial a.
\end{align*}
 %\vspace{-0.1cm}
\textit{Here, $\bm{m}^{(a)}$ is the value of $\bm{M}$ by setting $A = a$, and $a'$ is some fixed value.}
\end{definition}
 %\vspace{-0.15cm}
In Definition \ref{ind_caus_eff}, the effects are functions of $\bm{x}$ as they depend on the particular individual's baseline information. Specifically, $HTE(\bm{x})$ is the change in the outcome of interest if we could have increased the treatment by one unit. Similarly, $HDE(\bm{x})$ can be interpreted as the change of the outcome due to one unit increase of the treatment when holding all mediators fixed, as the direct edge between $A$ and $Y$ in Figure \ref{fig1} given different $\bm{x}$. In contrast, $HIE(\bm{x})$ captures the indirect effect of the treatment on the outcome regulated by mediators as two indirect paths (i.e., $A\to M_1\to Y$ and $A\to M_1\to M_2 \to Y$) in Figure \ref{fig1}, under different $\bm{x}$.  Similar causal effects, dubbed interventional effects, were proposed by \citet{vansteelandt2017interventional}, yet, under a very specific causal graph with only two mediators, without considering $\bm{X}$ as possible moderators. In contrast, the above HCEs are defined for a much more general scenario with multiple mediators. We develop the explicit form of the proposed HCTs under Model (\ref{LSEM}) in the following theorem.
\begin{theorem}
\label{mainthm}
Under assumptions (A1-A3) and the model described in Equation (\ref{LSEM}), we have: \\
1). $HDE(\bm{x}) = \gamma_A + \bm{\gamma}_{\bm{X}A}\bm{x}$; \\
2). $HIE(\bm{x})=\bm{\gamma_M}(\bm{I}_s - \bm{B_M}^\top)^{-1}\big(\bm{\beta}_A + \bm{B}^\top_{\bm{X}A}\bm{x}\big);$  \\
 3). $ HTE(\bm{x})=HDE(\bm{x})+HIE(\bm{x}),$\
 \
where $\bm{I}_s$ is a $s\times s$ identity matrix and $\bm{x}$ is the value of $\bm{X}$.
 %\vspace{-0.2cm}
\end{theorem}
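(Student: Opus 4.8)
The plan is to work directly from the reduced structural system obtained after applying the $do(\cdot)$ operator, exactly as in Equation~(\ref{project_HCGs}), and to read each causal effect off the resulting affine expression for $E\{Y\mid\cdot\}$ by differentiating in $a$. First I would fix $\bm{X}=\bm{x}$ and record the two structural equations that matter: for the mediators, $\bm{M} = \bm{B_X}^\top\bm{x} + (\bm{\beta}_A + \bm{B}^\top_{\bm{X}A}\bm{x})A + \bm{B_M}^\top\bm{M} + \bm{\epsilon_M}$, and for the outcome, $Y = \bm{\gamma_X}\bm{x} + (\gamma_A + \bm{\gamma}_{\bm{X}A}\bm{x})A + \bm{\gamma_M}\bm{M} + \epsilon_Y$; the key observation is that intervening on $A$ forces the interaction node $\bm{X}A$ to take the value $\bm{x}A$, which is precisely why the $A$-coefficients pick up the $\bm{x}$-dependent terms $\bm{B}^\top_{\bm{X}A}\bm{x}$ and $\bm{\gamma}_{\bm{X}A}\bm{x}$. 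Solving the mediator block gives $\bm{M} = (\bm{I}_s-\bm{B_M}^\top)^{-1}\big[\bm{B_X}^\top\bm{x} + (\bm{\beta}_A + \bm{B}^\top_{\bm{X}A}\bm{x})A + \bm{\epsilon_M}\big]$, and substituting this into the $Y$-equation expresses $Y$ as an affine function of $A$ plus mean-zero noise.

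Next I would dispatch the three effects separately. For $HDE(\bm{x})$, the intervention $do(\bm{M}=\bm{m}^{(a')})$ replaces $\bm{M}$ in the $Y$-equation by a quantity depending only on $a'$, so $E\{Y\mid do(A=a,\bm{M}=\bm{m}^{(a')}),\bm{X}=\bm{x}\} = \bm{\gamma_X}\bm{x} + (\gamma_A + \bm{\gamma}_{\bm{X}A}\bm{x})a + \bm{\gamma_M}E\{\bm{m}^{(a')}\mid\bm{X}=\bm{x}\}$, and differentiating in $a$ leaves $\gamma_A + \bm{\gamma}_{\bm{X}A}\bm{x}$. For $HIE(\bm{x})$, the treatment in the $Y$-equation is frozen at $a'$ while $\bm{M}$ is set to $\bm{m}^{(a)}$; using the solved mediator block, $E\{\bm{m}^{(a)}\mid\bm{X}=\bm{x}\} = (\bm{I}_s-\bm{B_M}^\top)^{-1}\big[\bm{B_X}^\top\bm{x} + (\bm{\beta}_A + \bm{B}^\top_{\bm{X}A}\bm{x})a\big]$, so the only $a$-dependence in $E\{Y\mid\cdot\}$ enters through $\bm{\gamma_M}(\bm{I}_s-\bm{B_M}^\top)^{-1}(\bm{\beta}_A + \bm{B}^\top_{\bm{X}A}\bm{x})\,a$, which yields the stated formula. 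For $HTE(\bm{x})$, nothing is intervened on except $A$, so $\bm{M}$ follows its own equation and both the direct $A$-term and the mediated term contribute; the derivative is then exactly $HDE(\bm{x})+HIE(\bm{x})$, which also matches the classical total/direct/indirect decomposition of \citet{pearl2000causality}.

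The steps above are essentially bookkeeping once two technical points are secured, and this is where I expect the real work to be. The first is justifying the passage from the counterfactual expectations in Definition~\ref{ind_caus_eff} to the intervened SEM: this is exactly where assumptions (A1)--(A3) are used, via the no-unmeasured-confounders conditions they encode, together with the Markov and faithfulness assumptions, so that e.g. $E\{Y\mid do(A=a),\bm{X}=\bm{x}\}$ is identified by the truncated factorization on the DAG $\mathbb{G}=(\bm{D},\bm{B})$ and coincides with the mean of $Y$ in the intervened system; I would spell this out with the standard $g$-formula argument, also noting that because $\bm{X}$ has no parents, conditioning on $\bm{X}=\bm{x}$ agrees with $do(\bm{X}=\bm{x})$. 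The second is the invertibility of $\bm{I}_s-\bm{B_M}^\top$: since $\mathbb{G}$ is acyclic and $\bm{B_M}$ records only edges among the $\bm{M}$-coordinates, a topological reordering makes $\bm{B_M}$ strictly triangular, so $\bm{I}_s-\bm{B_M}^\top$ is unipotent hence invertible, and the Neumann expansion $(\bm{I}_s-\bm{B_M}^\top)^{-1}=\sum_{k\ge 0}(\bm{B_M}^\top)^k$ gives the "sum over directed mediator paths" reading of $HIE(\bm{x})$. A minor additional check is that the interaction-node convention ($\bm{\epsilon}_{\bm{X}A}\equiv\bm{X}A$ with $\bm{B}^\top\bm{D}$ zero on those coordinates) propagates the value $\bm{x}a$ downstream under $do(A=a)$ rather than the pre-intervention $\bm{X}A$; this is consistent with the construction in Section~\ref{sec:subgraph} and only needs to be stated.
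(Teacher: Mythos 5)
Your proposal is correct and follows essentially the same route as the paper's proof: identify the $do$-expectations with conditional/intervened-SEM means via (A1)--(A3), solve the mediator block with $(\bm{I}_s-\bm{B_M}^\top)^{-1}$, substitute into the $Y$-equation, and read off the three effects (the paper uses unit increments in $a$ rather than derivatives, which coincide under linearity). Your added remarks on the invertibility of $\bm{I}_s-\bm{B_M}^\top$ via acyclicity and on the $g$-formula/interaction-node bookkeeping are technical points the paper leaves implicit, not a different argument.
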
 
The proof of Theorem \ref{mainthm} is provided in Section \ref{asec:proof} in Appendix. We make a few remarks. First, results in 1) and 2) provide the exact form of HCEs as functions of a given $\bm{x}$, where the functions are determined by the unknown parameters in the weighted adjacency matrix  $\bm{B}$ in Model (\ref{LSEM}). Therefore, learning $\bm{B}$ is not only the key to recovering the underlying causal structure but also the essential middle step of estimating the HCEs of interest. Second, results in 3) produce an exact decomposition of the heterogeneous total causal effect as the summation of the natural heterogeneous direct and indirect effects. This result thus generalizes Section 5.1.3 in \citet{pearl2009causal} by considering extra baseline information. This can be seen more clearly by setting $\bm{X}=0$, in which case Pearl's equations will be generated. Third, it can be shown based on results in 1) to 3) that when $\bm{\gamma}_{\bm{X}A}$ and $\bm{B}^\top_{\bm{X}A}$ are all zeros, the HCTs do not depend on $\bm{x}$ anymore. In other words, without the interaction terms, the causal impacts for different sub-populations are homogeneous. This supports the statements in Remark \ref{rem:whyinter}.
 %\vspace{-0.2cm}
\subsection{Heterogeneous Causal Effects of Mediators}
 %\vspace{-0.1cm}
We next define the HCEs of a particular mediator  on the outcome. To this end, we define some intermediate quantities that deliver the partial causal impact of potential mediators on the outcome. Specifically, denote the effect of the treatment on mediator $M_i$ given $\bm{X}=\bm{x}$ as
\begin{align*}
% \resizebox{1.03\hsize}{!}{$
    \Delta_i(\bm{x}) \equiv  \partial E\{M_i|do(A=a),\bm{X}=\bm{x}\}/ \partial a,%$}%E\{M_i|do(A=a+1),\bm{X}=\bm{x}\}- E\{M_i|do(A=a),\bm{X}=\bm{x}\},$}
\end{align*}
which quantifies how the treatment causally influences a single mediator given $\bm{X}=\bm{x}$. Next, we conceptualize the conditional mean outcome when increasing mediator $M_i$ by 1 while fixing all other mediators as
\begin{align*}
\resizebox{1.03\hsize}{!}{$
    \Phi_i(a,\bm{x}) \equiv  E\{Y|do(A=a,M_i=m_i^{(a)}+1, \bm{\Omega}_i=\bm{o}_i^{(a)}),\bm{X}=\bm{x}\},$}
\end{align*}
where $\bm{\Omega}_i$ is a vector of mediators that do not include mediator $M_i$ and $\bm{o}_i^{(a)}$ is the value of $\bm{\Omega}_i$ by setting $A=a$. With these intermediate definitions, we define the natural heterogeneous direct mediation effect (HDM), the natural heterogeneous indirect mediation effect (HIM), and the natural heterogeneous total mediation effect (HTM) as follows for each mediator on the outcome of interest, given the baseline. 
 %\vspace{-0.05cm}
\begin{definition}\label{HIM_HDM}
\textit{Heterogeneous Causal Effects of Mediator 
$M_i$:}
\begin{equation*}
\resizebox{1.03\hsize}{!}{$
\begin{split}
    HDM_i(\bm{x}) = \big[&\Phi_i(a,\bm{x})- E\{Y|do(A=a),\bm{X}=\bm{x}\}\big]  \times \Delta_i(\bm{x}),\\
    HIM_i(\bm{x}) = \big[&E\{Y|do(A=a,M_i=m_i^{(a)}+1),\bm{X}=\bm{x}\}\\&\   -\Phi_i(a,\bm{x}) \big] \times \Delta_i(\bm{x}),\\
    HTM_i(\bm{x}) = \big[&E\{Y|do(M_i=m_i+1),\bm{X}=\bm{x}\}\\&\  -E\{Y|do(M_i=m_i),\bm{X}=\bm{x}\}\big]\times \Delta_i(\bm{x}).
    \end{split}$}
\end{equation*}
\end{definition}
 %\vspace{-0.15cm}
Similar to Definition \ref{ind_caus_eff}, the mediation effects are functions of $\bm{x}$. The product inside Definition \ref{HIM_HDM} is used to combine the causal effects of mediator $M_i$ on the outcome with the total effect of the treatment on this mediator $M_i$, i.e., $\Delta_i$. Here, $HDM_i(\bm{x})$ indicates the direct change of the outcome of interest if we could have increased a particular mediator by one unit while holding all its descendent/children mediators fixed. Likewise, $HIM_i(\bm{x})$ corresponds to the indirect effect of a particular mediator on the outcome regulated by its descendent/children mediators. Finally,  $HTM_i(\bm{x})$ characterizes the total influence of a particular mediator regardless of the format. Take mediator $M_1$ in Figure \ref{fig1} as an instance. The effect associated with the direct path $A\to M_1\to Y$ is $HDM_1(\bm{x})$ and  with the indirect path  $A\to M_1\to M_2 \to Y$ is $HIM_1(\bm{x})$,  where the summation of these two paths representing $HTM_1(\bm{x})$. Our definition also generalizes the natural causal effects for an individual mediator proposed by \citet{cai2020anoce} and \citet{chakrabortty2018inference} by incorporating  baseline information and handling heterogeneity. Based on these definitions, we establish the theoretical forms of the mediation effects below proved in Appendix \ref{asec:proof}.

\begin{algorithm}[!t]
\begin{algorithmic}
    \caption{Interactive  Structural Learning}\label{algo1}
   \STATE {\bfseries Input:} data $\bm{D}$, baseline dimension $p$, dimension of mediators $s$, tolerance for estimating causal skeleton $\delta_b$.
   \STATE {\bfseries Initialize:} $\widehat{\bm{B}}\empty^0\xleftarrow{}\bm{0}$; $\widehat{\bm{B}}\xleftarrow{}\bm{0}$
   \STATE \textbf{Estimate} $\widehat{\bm{B}}\empty^0$ using a causal discovery method of choice by incorporating $h_2$ into the loss function.
   \STATE \textbf{Compute} $\widehat{\bm{B}}\empty^b$, where $\widehat{\bm{B}}\empty^b_{i,j} = \widehat{\bm{B}}\empty^0_{i,j} > \delta_b$
   \FOR{$i=1$ {\bfseries to} $2p+s+2$}
   \STATE \textbf{Estimate} direct children of $i$th node, denoted $D_i$, by $\{D_j|\widehat{\bm{B}}\empty^b_{i,j} \neq 0\}$
   \STATE \textbf{Update} $i$th row of $\widehat{\bm{B}}$ using coefficients of fitted LASSO model using $\{D_j|\widehat{\bm{B}}\empty^b_{i,j} \neq 0\}$ as the predictors and $D_i$ as the response
   \ENDFOR
   \STATE \textbf{Compute} the HCEs and desired CIs using $\widehat{\bm{B}}$
   \STATE \textbf{Return} $\widehat{\bm{B}}$, HCEs, and CIs
\end{algorithmic}
\end{algorithm}

\begin{theorem}
\label{secthm}
Under assumptions (A1-A3) and  Model (\ref{LSEM}),\\ %, we have:
1a).$~HDM_i(\bm{x}) = \{\bm{\gamma_M}\}_i\{(\bm{I}_s - \bm{B_M}^\top)^{-1}\big(\bm{\beta}_A + \bm{B}^\top_{\bm{X}A}\bm{x}\big)\}_i;$\\
1b). $\sum_{i=1}^s HDM_i (\bm{x}) = HIE(\bm{x});$\\
2). $HIM_i(\bm{x})=HTM_i(\bm{x}) - HDM_i(\bm{x});$\\
3). $HTM_i(\bm{x})=HIE(\bm{x}) - HIE_{\mathbb{G}(-i)}(\bm{x}),$\\
where $\{\cdot\}_i$ is the $i$th element of a vector and $HIE_{\mathbb{G}(-i)}$ is the HIE under the causal graph $\mathbb{G}(-i)$ in which the $i$th mediator is removed from the original causal graph $\mathbb{G}$.
\end{theorem}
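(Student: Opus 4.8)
The plan is to work directly from the linear SEM in Equation~(\ref{LSEM}), using the explicit post-intervention model~(\ref{project_HCGs}) to evaluate each $do(\cdot)$-expectation appearing in Definition~\ref{HIM_HDM}, and then to assemble the pieces. The key preliminary computation is to solve~(\ref{project_HCGs}) for $\bm{M}$ as a function of $A$ and $\bm{x}$: since the mediator block is $\bm{M} = \bm{B_X}^\top \bm{x} + (\bm{\beta}_A + \bm{B}^\top_{\bm{X}A}\bm{x}) A + \bm{B_M}^\top \bm{M} + \bm{\epsilon_M}$, moving terms gives $\bm{M} = (\bm{I}_s - \bm{B_M}^\top)^{-1}\big[\bm{B_X}^\top \bm{x} + (\bm{\beta}_A + \bm{B}^\top_{\bm{X}A}\bm{x})A + \bm{\epsilon_M}\big]$ (invertibility follows because $\bm{B_M}$ is strictly triangular up to permutation, the mediators forming a DAG). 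Taking expectations and differentiating in $a$ recovers $\Delta_i(\bm{x}) = \{(\bm{I}_s - \bm{B_M}^\top)^{-1}(\bm{\beta}_A + \bm{B}^\top_{\bm{X}A}\bm{x})\}_i$, which already matches the vector appearing in Theorem~\ref{mainthm}(2). Likewise $E\{Y \mid do(A=a), \bm{X}=\bm{x}\} = \bm{\gamma_X}\bm{x} + (\gamma_A + \bm{\gamma}_{\bm{X}A}\bm{x})a + \bm{\gamma_M}\,E\{\bm{M}\mid do(A=a),\bm{X}=\bm{x}\}$, an affine function of $a$.

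For part 1a), I would expand $\Phi_i(a,\bm{x}) - E\{Y\mid do(A=a),\bm{X}=\bm{x}\}$. Both terms share the same $\bm{\gamma_X}\bm{x}$, the same $A$-contribution $(\gamma_A + \bm{\gamma}_{\bm{X}A}\bm{x})a$, and the same contribution from the coordinates in $\bm{\Omega}_i$; the only surviving difference is that $\Phi_i$ sets $M_i$ one unit higher, contributing exactly $\{\bm{\gamma_M}\}_i$. Multiplying by $\Delta_i(\bm{x})$ gives $\{\bm{\gamma_M}\}_i \cdot \{(\bm{I}_s - \bm{B_M}^\top)^{-1}(\bm{\beta}_A + \bm{B}^\top_{\bm{X}A}\bm{x})\}_i$ as claimed. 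Part 1b) is then immediate: $\sum_i \{\bm{\gamma_M}\}_i\{\bm{v}\}_i = \bm{\gamma_M}\bm{v}$ with $\bm{v} = (\bm{I}_s - \bm{B_M}^\top)^{-1}(\bm{\beta}_A + \bm{B}^\top_{\bm{X}A}\bm{x})$, which is precisely $HIE(\bm{x})$ from Theorem~\ref{mainthm}(2). For part 2), I would read it off as a definitional identity: $HIM_i + HDM_i$ telescopes to $[E\{Y\mid do(A=a, M_i=m_i^{(a)}+1),\bm{X}=\bm{x}\} - E\{Y\mid do(A=a),\bm{X}=\bm{x}\}]\Delta_i(\bm{x})$, and I would argue (using the no-unmeasured-confounder assumptions A1--A3 to equate the $do(A=a)$-conditional with the $do(M_i=m_i)$-conditional at the realized mediator value) that this bracket equals the $HTM_i$ bracket, giving $HTM_i = HIM_i + HDM_i$.

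Part 3) is the one requiring the most care. The bracket in $HTM_i$ is the total effect of a unit change in $M_i$ on $Y$, propagated through all downstream mediators; I would compute it by the same matrix-inversion trick applied to the sub-system of mediators with $M_i$ held at a fixed value, and compare it against the analogous inversion for the graph $\mathbb{G}(-i)$ with the $i$th mediator deleted. The identity $HTM_i(\bm{x}) = HIE(\bm{x}) - HIE_{\mathbb{G}(-i)}(\bm{x})$ should fall out of a block decomposition of $(\bm{I}_s - \bm{B_M}^\top)^{-1}$ relative to the $\{i\}$ versus $\{1,\dots,s\}\setminus\{i\}$ partition — essentially a Schur-complement-style argument showing that removing node $i$ from the path algebra subtracts exactly the flow through $M_i$. \textbf{The main obstacle} I anticipate is bookkeeping the $do(M_i=m_i)$ interventions correctly: one must verify that intervening on $M_i$ severs its incoming edges (from $A$, $\bm{X}$, and upstream mediators) without disturbing the downstream propagation matrix, so that the ``removed-graph'' $HIE_{\mathbb{G}(-i)}$ is genuinely the right comparison object; this is where assumptions (A1--A3) and the DAG structure of $\bm{B_M}$ must be invoked carefully rather than the algebra itself being hard.
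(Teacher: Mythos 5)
Your proposal is correct, and for parts 1a), 1b) and 2) it follows essentially the same route as the paper: compute $E\{\bm{M}\mid do(A=a),\bm{X}=\bm{x}\}$ and $E\{Y\mid do(A=a,\bm{M}=\bm{m}^{(a)}),\bm{X}=\bm{x}\}$ from the linear SEM, observe that the $\Phi_i$-difference isolates $\{\bm{\gamma_M}\}_i$ while $\Delta_i(\bm{x})=\{(\bm{I}_s-\bm{B_M}^\top)^{-1}(\bm{\beta}_A+\bm{B}^\top_{\bm{X}A}\bm{x})\}_i$, sum over $i$ to recover $HIE$, and obtain 2) by telescoping the brackets in Definition~\ref{HIM_HDM} (the paper justifies the final identification of the combined bracket with the $HTM_i$ bracket exactly as you do, by linearity and the arbitrariness of $m_i$). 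Where you genuinely diverge is part 3). The paper does not compute $HTM_i$ from Definition~\ref{HIM_HDM} at all: it invokes the interpretation $HTM_i(\bm{x})=HTE(\bm{x})-HTE_{\mathbb{G}(-i)}(\bm{x})$ and then cancels the two direct effects, which by Theorem~\ref{mainthm} are both $\gamma_A+\bm{\gamma}_{\bm{X}A}\bm{x}$, leaving $HIE(\bm{x})-HIE_{\mathbb{G}(-i)}(\bm{x})$; the link between the graph-deletion quantity and the $do(M_i)$ bracket of Definition~\ref{HIM_HDM} is asserted rather than derived. Your route instead evaluates the $HTM_i$ bracket directly as the total sensitivity of $Y$ to an intervened unit change in $M_i$, namely $\{\bm{\gamma_M}(\bm{I}_s-\bm{B_M}^\top)^{-1}\bm{e}_i\}$, and matches $\{\bm{\gamma_M}(\bm{I}_s-\bm{B_M}^\top)^{-1}\bm{e}_i\}\Delta_i(\bm{x})$ against $HIE(\bm{x})-HIE_{\mathbb{G}(-i)}(\bm{x})$ by a path-factorization (Schur-complement) argument: in a DAG every $A\to\cdots\to Y$ path through $M_i$ splits uniquely into an $A\to M_i$ segment and an $M_i\to Y$ segment, so the difference of the two Neumann series factors exactly as required. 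This is more work than the paper's two lines, but it buys a proof that starts from Definition~\ref{HIM_HDM} rather than from an interpretive identity, and it yields the closed form $HTM_i(\bm{x})=\{\bm{\gamma_M}(\bm{I}_s-\bm{B_M}^\top)^{-1}\}_i\,\Delta_i(\bm{x})$ as a byproduct. The obstacle you flag is real but surmountable: under $do(M_i=m_i)$ the $i$th structural equation is replaced, so the response vector $\bm{u}$ satisfies $u_i=1$ and $u_j=\{\bm{B_M}^\top\bm{u}\}_j$ for $j\neq i$ rather than $(\bm{I}_s-\bm{B_M}^\top)\bm{u}=\bm{e}_i$ verbatim; however $(\bm{I}_s-\bm{B_M}^\top)^{-1}\bm{e}_i$ has zero entries at all ancestors of $M_i$ (no directed path from $i$ back to them), so $\{\bm{B_M}^\top\bm{u}\}_i=0$ and the two characterizations coincide, confirming that severing the incoming edges of $M_i$ leaves the downstream propagation untouched.
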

\vspace{-0.1cm}
We make a few remarks for Theorem \ref{secthm}. First, results in 1a) give the explicit form of the proposed HDM, where each multiplier corresponds to the $i$th element of the vector multiplier in results 2) of Theorem \ref{mainthm} for HIE. This immediately reveals the relationship between HIE and HDMs, i.e., the summation of all heterogeneous direct mediation effects equals to the heterogeneous indirect effect of the treatment, shown by results 1b) in Theorem \ref{secthm}. Second, the total mediation effect for a mediator $M_i$, $HTM_i$, can be interpreted as the effect of the treatment $A$ on the outcome $Y$ regulated by mediator $M_i$, or inversely as the change in total treatment effect that is due to $M_i$ being removed from the causal graph. This provides a feasible way to obtain $HTM$s as indicated in results 3). Third, according to Definition \ref{HIM_HDM}, $HTM_i$ can be further decomposed into direct effect that goes directly to the outcome through mediator $M_i$ from the treatment (i.e., $HDM_i$) and indirect effect that is regulated by other descendant mediators of $M_i$ as $HIM_i$. Such decomposition gives an alternative method to calculate $HIM_i$, as implied in results 2).
 \vspace{-0.2cm}
\section{Main Algorithm}%Estimation of Heterogeneous Causal Graphs and Effects}
\label{sec:est_main}
 \vspace{-0.1cm}
In this section, we detail the procedure used to estimate HCGs and HCEs with three components. 

%First, we estimate the heterogeneous causal graph by leveraging structural constraints and the state-of-the-art causal discovery method. Then we debias the estimate using regularized regression and calculate the effects following Theorems \ref{mainthm} and \ref{secthm}. Finally, we compute bootstrap confidence intervals for the effects and specific coefficients of the weighted adjacency matrix.
%  %\vspace{-0.1cm}
% \subsection{Causal Discovery With Structural Constraints}
%  %\vspace{-0.1cm}
\textbf{1. Causal discovery with structural constraints.} In order to estimate the weighted adjacency matrix $\bm{B}$, we follow the approach taken by \citet{cai2020anoce} and combine the background structural knowledge with the score-based CSL \citep[e.g.,][]{ramsey2017million,zheng2018dags,yu2019dag,zhu2019causal}. This requires finding the best $\bm{B}$ given that it must be a DAG and satisfy the structural constraints posed in Section \ref{sec:lsem_intera}. In order for $\bm{B}$ to be a DAG, it can be shown that it must satisfy the following continuous acyclicity constraint typical of score-based CSL \citep[e.g.,][]{yu2019dag,zhu2019causal}: 
\[h_1(\bm{B}) = \text{tr}[(\bm{I}_{2p+s+2}+t\bm{B}\odot\bm{B})^{2p+s+2}] - (2p+s+2) = 0,\]
where $\text{tr} $ is the trace of a matrix, $t$ is a hyper-parameter that depends on an estimation of the largest eigenvalue of the matrix $\bm{B}$, and $\odot$ denotes the element-wise square. In order for $\bm{B}$ to satisfy  structural constraints, $g_1(\bm{B})$ to $ g_4(\bm{B})$ (see Section \ref{sec:lsem_intera}), it must satisfy: $
h_2(\bm{B}) =\sum_{i=1}^4g_i(\bm{B}) =0.$ \\
As remarked earlier, more structural constraints can be added and any added would be included in $h_2(\bm{B})$. Combining the two constraints above ($h_1$ and $h_2$) yields the following objective loss by an augmented Lagrangian 
\citep{cai2020anoce}, 
\begin{align*}
    L(\bm{B},\theta) &= f(\bm{B},\theta) + \lambda_1 h_1(\bm{B}) \\&\ +  \lambda_2 h_2(\bm{B}) + c|h_1(\bm{B})|^2 + d|h_2(\bm{B})|^2,
\end{align*}
where $f(\bm{B},\theta)$ is some loss function such as the Kullback-Leibler (KL) divergence in DAG-GNN  by  \citet{yu2019dag} (see details and choices of the base learner in Appendix \ref{asec:algo}), where model parameter $\theta$, $\lambda_1$ and $\lambda_2$ are Lagrange multipliers, and $c$ and $d$ are tuning parameters to ensure a hard constraint on $h_1$ and $h_2$. 
With these methods we can generate an estimate for $\bm{B}$, denoted $\widehat{\bm{B}}\empty^0$.

% \subsection
\textbf{2. Debiasing with regularized regression.} 
To reduce the bias in $\widehat{\bm{B}}\empty^0$ introduced during causal discovery, we apply a refitting procedure \citep{shi2021logan} in accommodating Model (\ref{LSEM}). First, we extract the estimated causal skeleton from $\widehat{\bm{B}}\empty^0$ by the binary matrix $\widehat{\bm{B}}\empty^b$ where $\widehat{\bm{B}}\empty^b_{i,j} =\mathbb{I} \{|\widehat{\bm{B}}\empty^0_{i,j}| > \delta\}$. Here, $\delta$ is a pre-determined threshold to prune edges in the estimated causal graph with weak signals. Applying a \textit{tolerance level} is a standard technique used in the causal discovery literature \citep[see more details in ][]{zheng2018dags,yu2019dag,zhu2019causal}. In practice, we choose the best $\delta$ to minimize the loss (such as the mean squared error) between the original data and the data generated by $\widehat{\bm{B}}\empty^0 \odot \mathbb{I} \{|\widehat{\bm{B}}\empty^0 | > \delta\}$. For each node $D_i$, we obtain its parent set as $PA_{\widehat{\mathbb{G}}}(D_i)$, where $\widehat{\mathbb{G}}$ is the estimated causal graph based on $\widehat{\bm{B}}\empty^b$, and regress, using LASSO, $D_i$ on $PA_{\widehat{\mathbb{G}}}(D_i)$ to obtain the refitted coefficients $\widehat{b}_{i,j}$ if $D_j\in PA_{\widehat{\mathbb{G}}}(D_i)$ and $\widehat{b}_{i,j} =0$ otherwise. Denote the refitted matrix as $\widehat{\bm{B}}$, which is used to estimate causal effects based on Theorems \ref{mainthm} and \ref{secthm}. While we agree that the regularized regression is biased, yet, with sufficient sample size, the bias term is negligible. Such a debiasing idea is commonly used for statistical inference in high-dimensional settings \citep{minnier2011per, zhang2014conf, ning2017gentheory}.

% \subsection{
\textbf{3. Construction of confidence intervals (CIs).} We construct CIs of the estimated effects based on bootstrap \citep{efron1994introduction} and false discovery control \citep{benjamini2001control}. Specifically, we resample $n$ data points with replacements for independent $K$ times. In each bootstrap sample, we estimate the causal graph and compute the desired effect. For distribution-free noises, one can apply quantile-based CI by using $\alpha/2$-quantile of the bootstrap estimates as the lower bound and the $(1-\alpha/2)$-quantile as the upper bound, where $\alpha$ is the significance level. In the case that the data is Gaussian in nature, the Gaussian bootstrap CI can be applied. We name the combined procedure Interactive Structural Learning (ISL). The full algorithm is included in Appendix \ref{asec:algo}, however, an abbreviated version can be found in Algorithm \ref{algo1}, with the time complexity provided in Appendix \ref{asec:time}.

% \begin{remark}[Time complexity]
% Our method uses another causal discovery method as a base in order to find the estimated topological ordering of the nodes. Due to this, the time complexity of our method is based on the base method used. Let the time complexity of the base method be $g(N,n)$ where $N$ is the data sample size, and the number of nodes, $n = 2p + s + 2$ where $p$ is the number of confounders/moderators and $s$ is the number of mediators. Let the time complexity of LASSO be $f(N,n)$. Computing the HCEs requires inverting a size $n$ matrix and a size $n-1$ matrix $s$ times. Putting this together, the time complexity of our method is $O(g(N,n) + f(N,n) + n^2)$ for $s<<n$ and $O(g(N,n) + f(N,n) + sn^2)$ otherwise.
% \end{remark}

% \begin{remark}
% Although ISL is able to identify causal heterogeneity, to have a complete examination of such a causality,  %heterogeneity, 
% one has to apply the method to all possible configurations of the moderators. This may be a limitation in an extremely high dimensional setting, though we demonstrate a reasonably good performance of our method in Section \ref{sec:simu} with $p=22$ and more than $50$ nodes in total. We leave the variable selection in causal graphs as a future investigation.
% \end{remark}
\vspace{-0.2cm}
\section{Extension to Functional SEM}
\label{ext}
% \vspace{-0.1cm}
We can extend our methodology to the \textit{nonlinear} setting by generalizing the nature of the relationship between the nodes and their parents. Specifically, for each node $Z_i\in\bm{Z}$, we model the relationship between this node and one or more of its parents as $\alpha f_{ij}\{PA^{j}_{{\mathbb{G}}}(Z_i)\}$ where $PA^{j}_{{\mathbb{G}}}(Z_i)$ is the $j$th parent of $Z_i$ and $\alpha$ is a scalar, vector, or matrix coefficient that is appropriately shaped serving the same purpose as the coefficients in $\bm{B}^\top$ from Model (\ref{LSEM}). Keep in mind, that the ``combined" variable of $\bm{X}$ and A is considered a single distinct from $\bm{X}$ and $A$ such that the heterogeneity of the data can be fully captured and interpreted. For example, the relationship between $Y$ and $A$ is modeled by $\gamma_A f_{Y,a}(A)$ and the modification of $A$'s relationship with $\bm{M}$ is modeled by $\bm{B^\top_{XA} f_{M,xa}(X},A)$. Putting these all together, we have the following functional SEM:
\begin{equation}\label{NonLin-SEM}
\resizebox{0.85\hsize}{!}{$
\left\{\begin{aligned}
\bm{X} &= \bm{\epsilon_X},\\
A &= \bm{\delta_X f_{A,x}(X)} + \epsilon_A,\\
\bm{M} &= \bm{B^\top_X f_{M,x}(X)} + \bm{\beta}^\top_A \bm{f_{M,a}}(A)\\&\ \ \ + \bm{B^\top_{XA} f_{M,xa}(X},A) + \bm{B^\top_M M} + \bm{\epsilon_M},\\
Y &= \bm{\gamma_X f_{Y,x}(X)} + \gamma_A f_{Y,a}(A)\\&\ \ \ + \bm{\gamma}_{\bm{X}A}f_{Y,xa}(\bm{X},A) + \bm{\gamma_M f_{Y,m}(M)} + \epsilon_Y.
\end{aligned}\right.$}
\end{equation}
We take a functional approach here, rather than a fully nonlinear approach, in order to have a closed form for the causal effects and allow for the model to be interpreted in a similar manner to the linear Model (\ref{LSEM}). This also allows us to directly extend the ISL algorithm in Section \ref{sec:est_main} %ISL can be seen as comprising of two steps: (1) Estimate the adjacency matrix using the base learner, %DAG-GNN in this case, and then (2) regress each variable on the estimated parents of that variable, resulting in $\bm{\widehat{B}}$ which is then used to compute the estimated causal effects. 
%To extend ISL 
to the functional case,  by replacing the base learner with a learner proven to work efficiently in the nonlinear setting and generalizing the regression step to be appropriate for any functional model. The base learner we have chosen to use is the one recently introduced by \citet{rolland2022score} which uses the data distribution's score function to prune the full DAG to estimate the causal graph. For more information see Appendix \ref{sec:append_algo2}.  
After the causal graph has been estimated, we can use it to estimate the coefficients in Model (\ref{NonLin-SEM}) by first positing a functional model and fitting it, given the estimated parents for each of the nodes. Here is where background information, like structural constraints, can be incorporated. While structural constraints can also be applied in the above ordering process, we have found that in practice it is more efficient to do so afterwards during the fitting process. Background information can also be incorporated in the form of the chosen functional models used for each relationship. Without background information, misspecification can occur, leading to inaccurate causal effect estimation.

\begin{algorithm}[!t]
\begin{algorithmic}
    \caption{Interactive Structural Learning Extended}\label{algo2}
   \STATE {\bfseries Input:} data $\bm{D}$, baseline dimension $p$, dimension of mediators $s$, tolerance for estimating causal skeleton $\delta_b$.
   \STATE {\bfseries Initialize:} $\widehat{\bm{B}}\empty^0\xleftarrow{}\bm{0}$; $\widehat{\bm{B}}\xleftarrow{}\bm{0}$
   \STATE \textbf{Estimate} the skeleton of $\widehat{\bm{B}}\empty^0$ using \citet{rolland2022score}'s SCORE method.
   \STATE \textbf{Fit} chosen functional model using estimated skeleton while incorporating structural constraints to estimate $\widehat{\bm{B}}\empty^0$
   \STATE \textbf{Compute} $\widehat{\bm{B}}\empty^b$, where $\widehat{\bm{B}}\empty^b_{i,j} = \widehat{\bm{B}}\empty^0_{i,j} > \delta_b$
   \FOR{$i=1$ {\bfseries to} $2p+s+2$}
   \STATE \textbf{Estimate} direct children of $i$th node, denoted $D_i$, by $\{D_j|\widehat{\bm{B}}\empty^b_{i,j} \neq 0\}$
   \STATE \textbf{Update} $i$th row of $\widehat{\bm{B}}$ using coefficients of fitted LASSO model using $\{D_j|\widehat{\bm{B}}\empty^b_{i,j} \neq 0\}$ as the predictors and $D_i$ as the response
   \ENDFOR
   \STATE \textbf{Compute} the HCEs and desired CIs using $\widehat{\bm{B}}$
   \STATE \textbf{Return} $\widehat{\bm{B}}$, HCEs, and CIs
\end{algorithmic}
\end{algorithm}

Once the functional model has been fitted, analogous theorems to those posed in Section \ref{sec:HCTs} can be used to calculate the causal effects. We emphasize here that the quantities defined in Definitions \ref{ind_caus_eff} and \ref{HIM_HDM} in Section \ref{sec:HCTs} are still applicable in the functional case, though the explicit form of heterogeneous causal effects derived in Theorems \ref{mainthm} and \ref{secthm} are somewhat different in the functional model setting. According to Definition \ref{ind_caus_eff}, we can develop the explicit form of the proposed HCEs for the treatment under Model (\ref{NonLin-SEM}) as follows. %in the following corollary:
\begin{theorem}
\label{asec:mainthm}
Under assumptions (A1-A3) and Model (\ref{NonLin-SEM}),\\% we have:
1). $HDE(\bm{x}, a) = \gamma_A\frac{\delta f_{Ya}(a)}{\delta a} + \bm{\gamma}_{\bm{X}A}\frac{\delta f_{Yxa}(\bm{x},a)}{\delta a};$\\
2). $HIE(\bm{x}, a) = \bm{\gamma_M}\frac{\delta \bm{f_{Ym}}(m^{(a)})}{\delta \bm{m}^{(a)}}\frac{\delta \bm{m}^{(a)}}{\delta a};$\\
3). $HTE(\bm{x}, a)=HDE(\bm{x}, a)+HIE(\bm{x}, a);$\\
where $\bm{m}^{(a)} = E[\bm{M}|do(A=a),\bm{X}=\bm{x}].$
\end{theorem}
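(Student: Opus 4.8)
The plan is to mimic the derivation of Theorem \ref{mainthm}, but now carrying derivatives through the functional links rather than reading coefficients off $\bm{B}^\top$. First I would apply the $do(\cdot)$-operator to Model (\ref{NonLin-SEM}) with $\bm{X}=\bm{x}$ fixed, exactly as in \eqref{project_HCGs}, to obtain the post-intervention system for $(A,\bm{M},Y)$. For the direct effect, I would use the definition $HDE(\bm{x},a)=\partial E\{Y\mid do(A=a,\bm{M}=\bm{m}^{(a')}),\bm{X}=\bm{x}\}/\partial a$: under the intervention $do(\bm{M}=\bm{m}^{(a')})$, the terms $\bm{\gamma_M}\bm{f_{Y,m}}(\bm{M})$ and $\bm{\gamma_X f_{Y,x}(X)}$ do not depend on $a$, the noise $\epsilon_Y$ is mean-zero and independent, so only $\gamma_A f_{Y,a}(A)$ and $\bm{\gamma}_{\bm{X}A}f_{Y,xa}(\bm{X},A)$ survive differentiation. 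Taking $\partial/\partial a$ and invoking (A1)--(A3) to justify exchanging expectation and differentiation (the link functions are assumed smooth, and the noises are independent of $A$) yields part 1). Setting $\bm{X}=0$ and linear links recovers the linear-model result $\gamma_A+\bm{\gamma}_{\bm{X}A}\bm{x}$ as a sanity check.

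For the indirect effect, $HIE(\bm{x},a)=\partial E\{Y\mid do(A=a',\bm{M}=\bm{m}^{(a)}),\bm{X}=\bm{x}\}/\partial a$, the treatment is held at $a'$ in the direct path but the mediators are set to their natural values $\bm{m}^{(a)}$ under $A=a$. Here I would write $E\{Y\mid do(\bm{M}=\bm{m}^{(a)}),\bm{X}=\bm{x},\ldots\}=\bm{\gamma_M}\bm{f_{Y,m}}(\bm{m}^{(a)})+(\text{terms constant in }a)$, using that $\bm{m}^{(a)}=E[\bm{M}\mid do(A=a),\bm{X}=\bm{x}]$ by (A1) and linearity of expectation over the mean-zero $\bm{\epsilon_M}$. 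Differentiating in $a$ and applying the chain rule gives $\bm{\gamma_M}\,\tfrac{\delta \bm{f_{Y,m}}(\bm{m}^{(a)})}{\delta \bm{m}^{(a)}}\,\tfrac{\delta \bm{m}^{(a)}}{\delta a}$, which is part 2); the Jacobian $\delta\bm{m}^{(a)}/\delta a$ exists because $\bm{m}^{(a)}$ solves the (post-$do(A=a)$) fixed-point equation $\bm{m}=\bm{B}_{\bm{X}}^\top\bm{f_{M,x}}(\bm{x})+\bm{\beta}_A^\top\bm{f_{M,a}}(a)+\bm{B}_{\bm{X}A}^\top\bm{f_{M,xa}}(\bm{x},a)+\bm{B}_M^\top\bm{m}$, so by the implicit function theorem (assuming $\bm{I}_s-\bm{B}_M^\top$ invertible, as already needed in Theorem \ref{mainthm}) it is differentiable in $a$. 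Part 3) is then the additive decomposition: as in the proof of Theorem \ref{mainthm} part 3), one shows $E\{Y\mid do(A=a),\bm{X}=\bm{x}\}$ differentiated in $a$ splits into the $do(A=a,\bm{M}=\bm{m}^{(a')})$ piece plus the $do(A=a',\bm{M}=\bm{m}^{(a)})$ piece by the mediation-functional-composition identity, generalizing Pearl's Section 5.1.3 decomposition to the functional links.

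The main obstacle I anticipate is making the interchange of differentiation and expectation rigorous and handling the implicit dependence $\bm{m}^{(a)}$ cleanly: unlike the linear case where $\bm{m}^{(a)}$ is an explicit affine function of $a$ giving $\delta\bm{m}^{(a)}/\delta a=(\bm{I}_s-\bm{B}_M^\top)^{-1}(\partial/\partial a)[\bm{\beta}_A^\top\bm{f_{M,a}}(a)+\bm{B}_{\bm{X}A}^\top\bm{f_{M,xa}}(\bm{x},a)]$, here it is only defined implicitly through a nonlinear fixed point, so I must argue existence/uniqueness of the fixed point and differentiability via the implicit function theorem, and be careful that the post-intervention expectation genuinely factorizes (which is where (A1)--(A3), i.e., the no-unmeasured-confounder assumptions, do the work). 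A secondary technical point is that $\bm{f_{M,x}}, \bm{f_{M,a}}$, etc.\ are vector/matrix-valued, so the chain rule must be stated at the level of Jacobians; I would keep the exposition at that level and defer the component-wise bookkeeping, since it is routine once the Jacobian identities are set up.
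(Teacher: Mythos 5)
Your proposal is correct and follows essentially the same route as the paper: use (A1)--(A3) to identify the post-intervention expectations with conditional expectations under Model (\ref{NonLin-SEM}), write them out explicitly, and differentiate in $a$ (holding the mediator argument fixed for $HDE$, and via the chain rule through $\bm{m}^{(a)}$ for $HIE$), with part 3) following by adding the two pieces. The only superfluous element is the implicit-function-theorem step: in Model (\ref{NonLin-SEM}) the mediator equation is linear in $\bm{M}$, so $\bm{m}^{(a)}=(\bm{I}_s-\bm{B}_M^\top)^{-1}\bigl(\bm{B}_{\bm{X}}^\top\bm{f_{M,x}}(\bm{x})+\bm{\beta}_A^\top\bm{f_{M,a}}(a)+\bm{B}_{\bm{X}A}^\top\bm{f_{M,xa}}(\bm{x},a)\bigr)$ is explicit, which is exactly how the paper computes $\delta\bm{m}^{(a)}/\delta a$.
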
 
\vspace{-0.1cm}
Similarly, the form of HCEs for mediators can be developed.
\begin{theorem}
\label{asec:secthm}
Under assumptions (A1-A3) and Model (\ref{NonLin-SEM}),\\% we have:
1a). $HDM_i(\bm{x}, a) = \{\bm{\gamma_M}\}_i \frac{\delta \bm{f_{Ym}(m^{(a)})}}{\delta m_i^{(a)}} \frac{\delta \bm{m_i}^{(a)}}{\delta a};$\\
1b). $\sum_{i=1}^s HDM_i (\bm{x},a) = HIE(\bm{x},a);$\\
2). $HIM_i(\bm{x},a)=HTM_i(\bm{x},a) - HDM_i(\bm{x},a);$\\
3). $HTM_i(\bm{x},a)=HIE(\bm{x},a) - HIE_{\mathbb{G}(-i)}(\bm{x},a).$
\end{theorem}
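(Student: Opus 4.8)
The plan is to mirror the proof of the linear counterpart, Theorem~\ref{secthm}, replacing the matrix $(\bm{I}_s-\bm{B_M}^\top)^{-1}$ and the linear edge maps by Jacobians and multivariate chain-rule expressions, and to push the $do(\cdot)$ operators through the functional SEM in Model~(\ref{NonLin-SEM}) exactly as in the derivation of Theorem~\ref{asec:mainthm}. Throughout I take $\bm{m}^{(a)}=E[\bm{M}\mid do(A=a),\bm{X}=\bm{x}]$ to solve the fixed-point recursion obtained by taking conditional expectations in the $\bm{M}$-equation of Model~(\ref{NonLin-SEM}), and $\Delta_i(\bm{x})=\partial m_i^{(a)}/\partial a$ directly from the definition of $\Delta_i$ together with differentiation under the expectation.

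For 1a), I would evaluate each ingredient of $HDM_i$ in Definition~\ref{HIM_HDM} under Model~(\ref{NonLin-SEM}). Intervening on $A$ only gives $E\{Y\mid do(A=a),\bm{X}=\bm{x}\}=\bm{\gamma_X f_{Y,x}(x)}+\gamma_A f_{Y,a}(a)+\bm{\gamma}_{\bm{X}A}f_{Y,xa}(\bm{x},a)+\bm{\gamma_M f_{Y,m}}(\bm{m}^{(a)})$, while $\Phi_i(a,\bm{x})$ is the same expression with $\bm{m}^{(a)}$ replaced by the vector agreeing with $\bm{m}^{(a)}$ except in its $i$th coordinate, which is increased by one (the remaining mediators being clamped at $\bm{o}_i^{(a)}$). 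Subtracting and linearizing the unit increment — consistent with the derivative-based effects used throughout the paper — gives $\Phi_i(a,\bm{x})-E\{Y\mid do(A=a),\bm{X}=\bm{x}\}=\bm{\gamma_M}\,\partial\bm{f_{Y,m}}(\bm{m}^{(a)})/\partial m_i^{(a)}$, i.e. the $i$th coordinate term of the Jacobian (matching the stated $\{\bm{\gamma_M}\}_i$ form since $\bm{f_{Y,m}}$ is applied coordinatewise in Model~(\ref{NonLin-SEM})); multiplying by $\Delta_i(\bm{x})=\partial m_i^{(a)}/\partial a$ yields 1a). Part 1b) then follows by summing 1a) over $i$ and recognizing $\sum_i \bigl(\partial\bm{f_{Y,m}}(\bm{m}^{(a)})/\partial m_i^{(a)}\bigr)\bigl(\partial m_i^{(a)}/\partial a\bigr)$ as the total derivative $\bigl(\partial\bm{f_{Y,m}}(\bm{m}^{(a)})/\partial\bm{m}^{(a)}\bigr)\bigl(\partial\bm{m}^{(a)}/\partial a\bigr)$, which equals $HIE(\bm{x},a)$ by part 2) of Theorem~\ref{asec:mainthm}.

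Part 2) is the definitional decomposition: adding the bracketed differences appearing in $HDM_i$ and $HIM_i$ telescopes $\Phi_i$ away and leaves $\bigl[E\{Y\mid do(A=a,M_i=m_i^{(a)}+1),\bm{X}=\bm{x}\}-E\{Y\mid do(A=a),\bm{X}=\bm{x}\}\bigr]\times\Delta_i(\bm{x})$, so it suffices to identify this with $HTM_i(\bm{x},a)$; this uses the modularity of Model~(\ref{NonLin-SEM}) — the mechanism producing $Y$ from $M_i$ and the mediators downstream of $M_i$ is invariant to whether $A$ attained the value $a$ naturally or by intervention, and $m_i=m_i^{(a)}$ at the natural value of $A$. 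For part 3) I would run the path-algebra argument underlying the linear case: differentiating $E\{Y\mid do(M_i=m_i),\bm{X}=\bm{x}\}$ in $m_i$ through Model~(\ref{NonLin-SEM}) (with $M_i$ clamped and the remaining mediators solving their fixed point) and multiplying by $\Delta_i(\bm{x})=\partial m_i^{(a)}/\partial a$ produces, via the multivariate chain rule on the map $a\mapsto\bm{m}^{(a)}$, exactly those terms of $\partial E\{Y\mid do(A=a),\bm{X}=\bm{x}\}/\partial a$ that route through node $M_i$; since $HIE_{\mathbb{G}(-i)}(\bm{x},a)$ is by construction the indirect effect carried by the remaining (non-$M_i$) routes, the difference $HIE(\bm{x},a)-HIE_{\mathbb{G}(-i)}(\bm{x},a)$ is precisely $HTM_i(\bm{x},a)$.

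I expect part 3) to be the main obstacle. In the linear model the ``paths through $M_i$'' bookkeeping is a matrix identity relating $(\bm{I}_s-\bm{B_M}^\top)^{-1}$ to its deletion analogue, whereas here the contribution of $M_i$ must be isolated purely by the multivariate chain rule on the implicitly defined map $a\mapsto\bm{m}^{(a)}$, and one must verify that deleting $M_i$ removes exactly those chain-rule terms while changing the surviving $\partial m_j^{(a)}/\partial a$ in the correspondingly bookkept way. A secondary technical point, already needed for 1a)--1b), is justifying the interchange of expectation with the structural maps when mediators feed into one another with noise, i.e. that $\bm{m}^{(a)}$ obeys the stated fixed-point recursion; this is immediate for the additive-noise form of Model~(\ref{NonLin-SEM}) but should be stated explicitly.
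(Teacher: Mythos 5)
Your proposal is correct and follows essentially the same route as the paper: convert the $do(\cdot)$ expressions to conditional expectations via (A1--A3), evaluate them under Model~(\ref{NonLin-SEM}) with $\bm{m}^{(a)}$ solving the mediator fixed point, differentiate to obtain 1a), sum via the chain rule for 1b), and inherit 2) and 3) from the decomposition and graph-removal arguments of the linear case. Note that the paper's own proof is even terser than your plan --- it only computes $HDM_i$ explicitly and dismisses 2) and 3) as ``model-independent'' consequences of the proof of Theorem~\ref{secthm} --- so the chain-rule bookkeeping for part 3) that you flag as the main obstacle is extra rigor the paper never supplies rather than a gap in your argument.
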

\vspace{-0.1cm}
Proofs are presented for both theorems in the appendix. The main results under Theorems \ref{asec:mainthm} and \ref{asec:secthm} can be interpreted analogously as in Theorems \ref{mainthm} and \ref{secthm}. The extended functional ISL is given in Algorithm \ref{algo2}. % in Appendix \ref{sec:append_algo2}.

%\vspace{-0.25cm}
\section{Simulation Studies}\label{sec:simu}
%\vspace{-0.1cm}
We test the ability of our method on estimating HCGs and HCEs in a variety of situations. The computing infrastructure used is a virtual machine at Google Colab with the Pro tier for the majority of the computations, and a cluster server with 30 processor cores with 15GB of random access memory for computing bootstrap estimates in parallel. %All code sources are provided in the supplementary material.
% %\vspace{-0.1cm}
% \subsection{Scenario Generation}
% %\vspace{-0.1cm}

\textbf{Scenario generation.} We generate 8 scenarios. Scenario 1 (S1) is the simple case in which only $\bm{X}$, $A$, and $Y$ have non-zero weights; Scenario 2 (S2) has parallel mediators ($\bm{B_M}^\top = \bm{0}_{s\times s}$); and Scenario 3 (S3) has sequentially ordered mediators ($\bm{B_M}^\top \neq \bm{0}_{s\times s}$). The true causal graphs for S1 and S2 are randomly generated given structural constraints, while the causal graph for S3 is generated using the Erd\H os-Re\'nyi (ER) model with a degree of 2. The weights in graphs are randomly chosen from $\{-1,0,1\}$. Each of these scenarios sets $p=2$ and $s=6$ for a total of 12 nodes in the graph (including treatment, interaction, and outcome).
In addition, we require that there is at least one non-zero interaction term to ensure that causal graphs are heterogeneous as discussed in Remark \ref{rem:whyinter}. However, we also study two alternative cases: (1) a variant of S3 in which there is no interaction term (S3nx) and (2) a variant of S3 in which $\bm{X}$ is a moderator as defined by \citet{kraemer2002mediators} (S3mod), that is, $\bm{X}$ is independent of $A$ and there is an interaction term. Scenario 4 (S4), 5 (S5), and 6 (S6) are higher-dimensional versions of S3, with $s=38$ in S4, $p=18$ in S5, and $s=10$ plus $p=22$ in S6, where the underlying graphs are generated by the ER model with a degree of 4. The rest settings are the same as in S3. The sample size, $n$, is chosen from $\{500,1000\}$. For S4-S6, the sample size is fixed at $n=1000$. We also set the noise to be standard gaussian for $\bm{X}$, $A$, $\bm{M}$, and $Y$. Given the noise and the edge weights, Model \ref{LSEM} can be directly used to generate the data by setting $\bm{X}$ as its noise and using $\bm{X}$ to generate $A$ with the edge weights and associated noise. $\bm{X}$ and $A$ can then be used with the edge weights and associated noise to generate $\bm{M}$ which is in turn used to generate $Y$ in a similar manner. After this process is complete a baseline value of 1.0 is added to the response. Finally, the column-wise mean value for the dataset is subtracted from the dataset. The seeds used to generate the 100 datasets used for each simulation scenario are 1 through 100.

\begin{figure}[!t]
        \centering
        \begin{subfigure}{0.23\textwidth}
            \centering
            \includegraphics[width=1\textwidth]{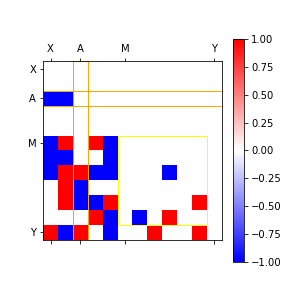}
            \caption{True}%
        \end{subfigure}
        \begin{subfigure}{0.23\textwidth}  
            \centering 
            \includegraphics[width=1\textwidth]{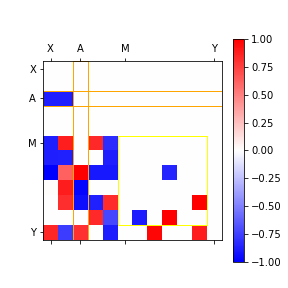}
            \caption{ISL($n=500$)}%
        \end{subfigure}
        \vskip\baselineskip
        \vspace{-0.4cm}
        \begin{subfigure}{0.23\textwidth}   
            \centering 
            \includegraphics[width=1\textwidth]{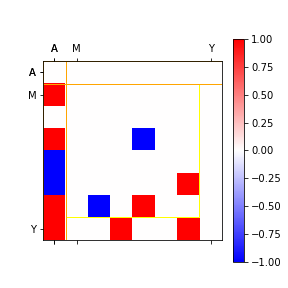}
            \caption{$\bm{X}=[1,0]$}%
        \end{subfigure}
        \begin{subfigure}{0.23\textwidth}   
            \centering 
            \includegraphics[width=1\textwidth]{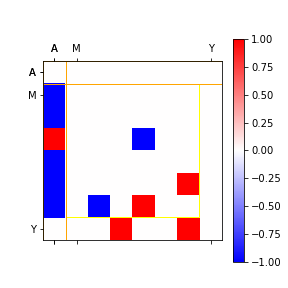}
            \caption{$\bm{X}=[0,1]$}%
        \end{subfigure}
         \vspace{-0.2cm}
        \caption{Results for S3 with subgraphs for different $\bm{X}$.}
        \label{S3sub}
\end{figure}
\vspace{-0.2cm}

% \subsection
\textbf{Simulation results.} We apply the proposed method to estimate the heterogeneous causal graphs and effects. The averaged estimated matrix $\widehat{\bm{B}}^\top$ over 100 replicates under different sample sizes with its true causal graph are summarized in Figure \ref{s1b} to \ref{s6b} in Appendix for S1 to S6 to evaluate the proposed method. Each plot has a yellow box that identifies the inner weighted adjacency matrix for mediators and intersecting orange lines to emphasize the direct effects on or from the treatment. The corresponding accuracy metrics, here we use the false discovery rate (FDR), the true positive rate (TPR), and the structural Hamming distance (SHD), with their standard errors are provided in Tables \ref{accs135h} and \ref{accs135h2}. The bias and standard error of the estimated heterogeneous causal effects for S1-S3mod are summarized in Tables \ref{bias135h}-\ref{biasnxmod} in the Appendix. The effects for S4-S6 are omitted for brevity. In addition, we compare our method with NOTEARS proposed by \citet{zheng2018dags} and DAG-GNN proposed by \citet{yu2019dag} for S1 to S3 with $n=500$, as summarized in Tables \ref{accs135n} and \ref{accs135d}. The penalty term in NOTEARS is set at $l_1=0.03$ the settings for DAG-GNN are left as default to achieve optimal performance. Here, for all estimated causal graphs, we set a threshold of 0.4 as the maximum absolute value that an edge weight needs to be to be considered nonzero. We also conduct a bootstrap simulation using 100 runs of 1000 bootstrap resamples to create 100 bootstrap CIs for each causal effect associated with S3. The average coverage probabilities are reported in Table \ref{boot} with $\alpha =0.05$. Here we note that in estimating graphs for use in constructing these confidence intervals, thresholding is used to accurately determine the causal skeleton. As will be seen in the real data section, thresholding can sometimes lead to cases in which all bootstrap samples return the same estimate yielding a confidence interval that only contains one value. While this is not ideal, it should be noted that this is not an error, but a consequence of using accuracy-improving techniques, such as thresholding.

As can be seen in Table \ref{accs135h} and \ref{accs135h2}, our method can estimate the weighted adjacency matrix of a causal graph with high accuracy which in turn leads to highly accurate estimates for the causal effects as seen in Tables \ref{bias135h}-\ref{biasnxmod}. The accuracy also increases marginally as the sample size increases. Our method can also be seen, in Figures \ref{s1b} to \ref{s3b} and Table \ref{accs135n}, to perform better than NOTEARS. Furthermore, our method can be seen to perform well in scenarios without interaction and in the case in which moderators that are independent of the treatment are used. In the higher-dimension case, it can be seen in Table \ref{accs135h2} and \ref{accs46} that the accuracy of the proposed method remains quite high. S5 had the best accuracy metrics of the three and S4 had the worst, with S6 in the middle. It can then be concluded that in general, it is easier to estimate a graph with more potential moderators than with potential mediators. Here, SHD should be interpreted given the total possible errors a model could make. Since S4 has 44 nodes, $44^2/2=968$ errors are possible. So an SHD of 14.67 for S4 would be equivalent to an accuracy of 98.5\%.

%\vspace{-0.35cm}
\section{Real Data Analysis}\label{sec:real}
%\vspace{-0.15cm}
 
To demonstrate the practical usefulness of our method, we apply it to investigate the causal relationship of psychiatric disorders for trauma survivors using data collected from the AURORA study. The response of interest is post-traumatic stress disorder (PTSD) measured three months after trauma. The event of interest is pre-trauma insomnia of trauma survivors. Peri-traumatic (PT) PTSD, stress, acute distress (ASD), and depression  measured 2 weeks after the trauma event are included as potential mediators. Moderators used for this study include age, gender, race, education, pre-trauma physical and mental health, perceived stress level, neuroticism, and childhood trauma. Before applying our method, categorical variables were one-hot encoded, numerical variables were centered, and missing data were removed. Results are summarized in causal graphs in Figures \ref{male_graph} - \ref{black_female_graph} with higher resolution for easier viewing. Motivations %of this real data analysis 
and overall effects are summarized in Appendix \ref{asec:real_des}.

\begin{table}[!t]
%\begin{center}
\centering
\small
    \captionof{table}{Accuracy metrics (standard errors) %for different scenarios 
    for $n=500$.}
    \label{accs135h}
    %\vspace{-0.25cm}
\begin{tabular}{clll}
  \hline
  & FDR & TPR & SHD\\  \hline
  S1 & 0.00(0.00) & 1.00(0.00) & 0.00(0.00)\\      \hline
  S2 & 0.01(0.02) &1.00(0.03) & 0.25(0.86)\\      \hline
  S3 & 0.00(0.01) & 1.00(0.01) & 0.04(0.24)\\      \hline
    S3nx & 0.00(0.01) & 1.00(0.01) & 0.03(0.22)\\      \hline
  S3mod & 0.00(0.00) & 1.00(0.01) & 0.02(0.14)\\      \hline
\end{tabular} 
%\vspace{-0.25cm}
%\end{center}
\end{table}
\begin{table}[!t]
%\begin{center}
\centering
% \scriptsize
\small
    \captionof{table}{Accuracy metrics (standard errors) %for different scenarios 
    for $n=1000$.}
    \label{accs135h2}
     %\vspace{-0.25cm}
\begin{tabular}{clll}
  \hline
  & FDR & TPR & SHD\\  \hline
  S1 & 0.00(0.00) & 1.00(0.00) & 0.00(0.00)\\      \hline
  S2 & 0.00(0.00) & 1.00(0.01) & 0.01(0.10)\\      \hline
  S3 & 0.00(0.00) & 1.00(0.00) & 0.00(0.00)\\      \hline
  S3nx & 0.00(0.00) & 1.00(0.00) & 0.00(0.00)\\      \hline
  S3mod & 0.00(0.00) & 1.00(0.00) & 0.00(0.00)\\      \hline
  S4 & 0.03(0.01) & 0.90(0.02) & 14.67(3.45)\\      \hline
    S5 & 0.00(0.01) & 1.00(0.02) & 0.43(2.53)\\      \hline
  S6 & 0.00(0.01) & 0.99(0.02) & 2.39(6.26)\\      \hline
\end{tabular} 
%\end{center}
%\vspace{-0.3cm}
\end{table}

\begin{figure}[!t]
   \centering
   \includegraphics[width=1\linewidth]{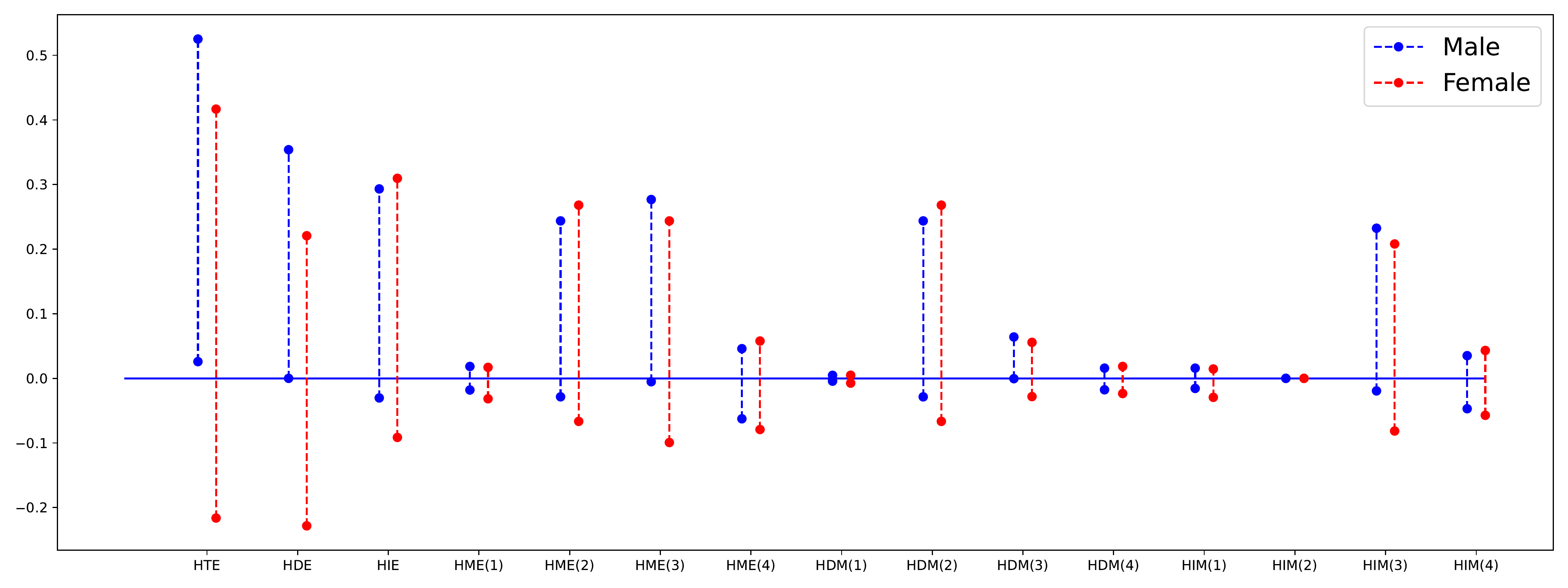}
   %\vspace{-0.2cm}
   \caption{$95\%$ Confidence intervals for effects of male and female subgraphs, produced using 1000 bootstrap samples.} 
     \label{male_female_confint}
     \vspace{-0.2cm}
\end{figure}
%\vspace{-0.2cm}

Heterogeneous causal effects for this real data example are characterized by the interactions of pre-trauma insomnia with moderators, such as gender and race. These interaction terms can affect the effect of pre-trauma insomnia to response, or to potential mediators. Results are summarized in Figures \ref{resp_confint} - \ref{med4_confint}. Results in Figure \ref{med1_confint} suggest the interaction of pre-trauma insomnia with Gender, education, and lifetime chronic stress are marginally significant for 3-month PTSD which indicates that the direct effect of pretrauma insomnia to 3-month PTSD is moderated by gender, education level, and level of chronic stress. In Figure \ref{resp_confint} and \ref{med2_confint} to \ref{med4_confint}, we can also observe that the chronic stress level moderate (increase) the effect of pretrauma insomnia to peritraumatic distress, stress, PTSD, and depression.  On the other hand, education level decreases the effect of pretrauma insomnia on the four peritraumatic disorders. Finally, Figure \ref{med2_confint} suggest that gender moderate the effect of pretrauma insomnia to peritraumatic stress but not the other three peritraumatic disorders. Heterogeneous causal effects for male and female trauma survivors are shown in Figure \ref{male_female_confint}. Results suggest that the total, direct and indirect effects of pretrauma insomnia to 3-month PTSD are (marginally) significant for male trauma survivors but not for females, and the effect of pretrauma insomnia on 3-month PTSD are mainly mediated by peritrauma stress and PTSD. These results also indicate that preventive intervention of 3-month PTSD after trauma exposure that focuses on reducing pertrauma stress and PTSD is more likely to be effective for male than for female trauma survivors. The causal effects for the male and female subgraph are summarized in Tables \ref{male_tab} and \ref{female_tab}, respectively.

\subsubsection*{Acknowledgments}
Research reported in this publication was supported in part by the National Heart, Lung, And Blood Institute of the National Institutes of Health under Award Number T32HL079896. The content is solely the responsibility of the authors and does not necessarily represent the official views of the National Institutes of Health.

\bibliography{mycite}
\bibliographystyle{icml2023}

\newpage

\onecolumn
\appendix
\addcontentsline{toc}{section}{Appendix} % Add the appendix text to the document TOC
\part{Appendix} % Start the appendix part
\parttoc % Insert the appendix TOC
\counterwithin{figure}{section}
\counterwithin{table}{section}
\counterwithin{equation}{section}
\counterwithin{definition}{section}
\newpage
\section{More Details on Framework}
% \subsection{Additional Related Works}\label{sec:more_related}
% \textbf{Traditional mediation analysis.} There is an abundance of literature for traditional mediation analysis \citep[see a review in][and the reference therein]{van2016med} quantifying treatment effects in the presence of multiple mediators. Due to multiple causally dependent mediators, \citet{robin2003semantics} found that causal effects are not identifiable unless no treatment-mediator interaction is imposed, with more discussions on the identification of causal effects later on \citep{imai2013identification,tchetgen2014mediation,van2019med}. % for other discussions on the identification of causal effects. 
% Yet, all of these works primarily assumed a known causal graph with  only a few mediators included for analysis. In addition, the moderators in traditional mediation analysis \citep{muller2005modmed} interact with the treatment but are  independent of the treatment. In this work, we do not enforce such an independence requirement.

\subsection{Complete Definitions for Assumptions and further discussion}\label{asec:assump}
In order to understand the Causal Markov and Faithfulness assumptions, one must first understand the concept of D-separation.
\begin{definition}[D-separation]
Nodes, $X$ and $Y$, are d-separated by a set of nodes, $Z$, if and only if for every path, $p$, there exists a node, $m\in Z$, that extends $p$ ($i\to m\to j$) or forks $p$ ($i\xleftarrow{} m\to j$) and for any node, $c$, along $p$ that is a so-called collider ($i\to m\xleftarrow{} j$), $c$ and all descendents of $c$ are not in $Z$ \citep{pearl2009causal}.
\end{definition}
Given that $Z$ d-separates $X$ and $Y$ and $X$ preceeds $Y$ causally, the implication of d-seperation is that $X\independent Y|Z$.
\begin{definition}[Causal Markov assumption]
For a given causal graph, $G=\{Z,\bm{E}\}$, the set of independences among the nodes, $Z$, contains the set of independences implied by d-separation.
\end{definition}
\begin{definition}[Faithfulness assumption]
For a given causal graph, $G=\{Z,\bm{E}\}$, the set of independences among the nodes, $Z$, is \textbf{exactly} described by the set of independences implied by applying d-separation to $G$.
\end{definition}

% \begin{definition}[Causal sufficiency assumption]
% The set of variables, $Z$, includes all of the common causes among every pair in $Z$. That is to say, there is no unmeasured variable, $U$, that is the common parent of any pair in $Z$.
% \end{definition}

It should be noted that the assumptions made in this paper are common in the literature of causal inference. Please refer to \cite{pearl2000causality, pearl2009causal, athey2015machine, nandy2017estimating, wager2018estimation, kunzel2019metalearners, nie2021quasi} for discussions of these assumptions and their impact. There are a few future extensions to relax or diagnose these assumptions. For instance, a full sensitivity analysis of the assumptions would be useful to the field when it is hard to include all variables causally related to any variable in the data in practice. In addition, utilizing the instrumental variables in the contexts of causal graphs with multiple mediators may be beneficial to address no unmeasured confounders. We leave these directions for future work.

\subsection{Further Discussion Regarding the Identifiability of the Causal Graph}\label{sec:identif}

Under the linear SEM, when the noises are Gaussian distributed, the model yields the class of standard linear-Gaussian model that has been studied in \citet{spirtes2000constructing, peters2017elements}. If the noises have equal variances, the DAG is uniquely identifiable from the observational data according to \citet{peters2014identifiability}. If the functions are linear but the noises are non-Gaussian, the true DAG can be uniquely identified under favorable conditions described in \citet{shimizu2006linear}. Also, the DAG can be naturally identified from the observational data if the corresponding MEC contains only one DAG, however, this cannot be known. The recent score-based causal discovery methods \citep{zheng2018dags, yu2019dag, zhu2019causal, cai2020anoce} usually consider synthetic datasets that are generated from fully identifiable models so that it is practically meaningful to evaluate the estimated graph with respect to the true DAG. The non-linear case shares similar identifiability statements, with more details provided in Sections 2.1 and 3.3 in \cite{zheng2020learning}.

\subsection{Comparison between Moderated Mediation and Mediated Moderation}\label{asec:med_model}
The interaction seen in Model (\ref{LSEM}) is an example of moderation in which the relationship between two variables depends on another. In this case, the moderator would be $\bm{X}$, and the relationships being moderated would be $A$ and $\bm{M}$ and $A$ and $Y$. For those familiar with moderation, our setting may seem similar to the moderated mediation and/or mediated moderation settings. In Figure \ref{modmed}, we have an example of moderation, moderated mediation, and mediated moderation taken from \citet{muller2005modmed}'s paper on the topic. 
Here $X$ is the treatment, $Me$ is the mediator, $Y$ is the response and $Mo$ is the moderator, not shown in the figure. Here it can be seen that $Mo$ moderates the relationship between $X$ and $Y$, which is the example of moderation described earlier. $Mo$ also moderates the relationship between $X$ and $Me$ and $Me$ and $Y$ which is an example of mediated moderation and moderated mediation respectively.

\begin{figure}[!htb]
  \centering
  \includegraphics[width=0.3\linewidth]{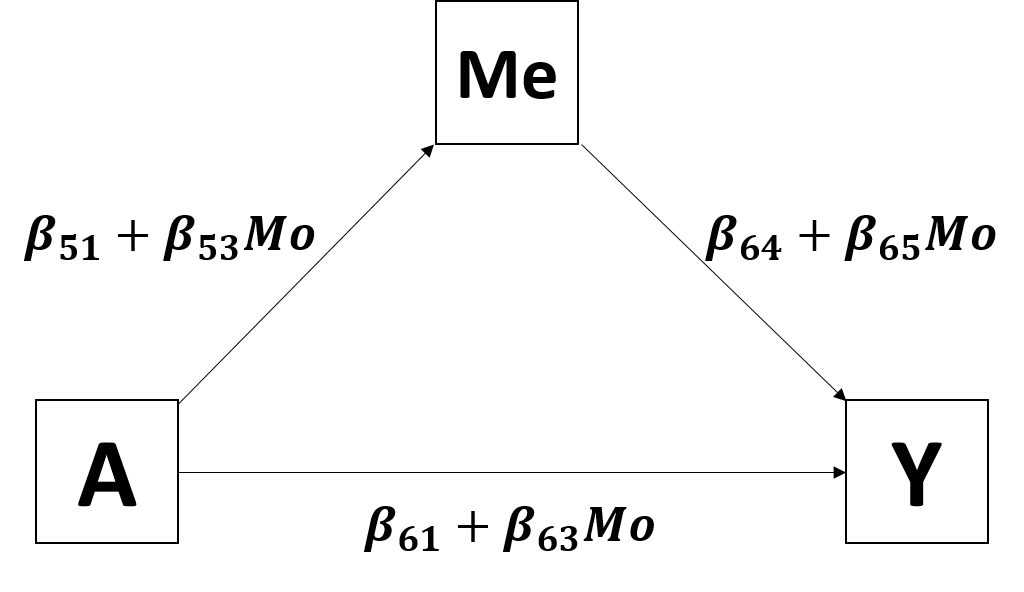} 
  \caption{Illustration of moderation, moderated mediation, and mediated moderation.}
\label{modmed}  
 \end{figure}
In their paper, \citet{muller2005modmed} give equations similar to the linear SEM equations above to model both of these settings:
\begin{align*}
    Me &= \beta_{50} + \beta_{51}X + \beta_{52}Mo + \beta_{53}XMo + \epsilon_5,\\
    Y &= \beta_{60} + \beta_{61}X + \beta_{62}Mo + \beta_{63}XMo + \beta_{64}Me + \beta_{65}MeMo + \epsilon_6.
\end{align*}
If we restate this as a linear SEM, the difference between our setting and the moderated settings can be seen:
\begin{equation}
    \begin{bmatrix}Mo\\X\\XMo\\Me\\MeMo\\Y\end{bmatrix} = \begin{bmatrix}0\\0\\0\\\beta_{50}\\0\\\beta_{60}\end{bmatrix} + \begin{bmatrix}
0&0&0&0&0&0\\
0&0&0&0&0&0\\
0&0&0&0&0&0\\
\beta_{52}&\beta_{51}&\beta_{53}&0&0&0\\
0&0&0&0&0&0\\
\beta_{62}&\beta_{61}&\beta_{63}&\beta_{64}&\beta_{65}&0
\end{bmatrix}\begin{bmatrix}Mo\\X\\XMo\\Me\\MeMo\\Y\end{bmatrix} + \bm{\epsilon}.
\end{equation}

Comparing this model to our model, we can see that our model is more general. The moderator in this model is allowed to directly affect the treatment in our model, that is to say, we allow the confounder to moderate relations between other variables.

\subsection{Further Discussion on Heterogeneous Causal Graphs}\label{asec:more_HCGs}

By heterogeneity, we are referring to the differences that may exist between subgroups within a population. We pose these differences via the way in which moderators affect and interact with the treatment variable in the causal graph. This leads to the possibility that different subgroups may have different causal graphs. This set of causal graphs is what we refer to as heterogeneous causal graphs (HCGs). Note that we assume here for practical reasons that $\bm{X}$ is a set of pre-treatment confounders and the topological order of $A$ is lower than that of $Y$, i.e. the causal direction of the edge between $A$ and $Y$ is assumed. Furthermore, the existence of any other edge is not assumed and may be different among the HCGs depending on how $\bm{X}$ moderates $A$. The model presented in this work, given the topological assumption from earlier, will not lead to HCGs with the same edge that differ in causal direction, except that under a certain value of $\bm{x}$ the edge from $A$ to $\bm{M}$ and $A$ to $Y$ may disappear based on the results in section \ref{sec:subgraph}. Such changes in causal edges from the treatment are owing to the $\bm{X}A$ interactions we considered. 

Following a similar logic, theoretically, we could allow a fully heterogeneous causal relationship among mediators, including changes in causal direction, by adding $\bm{XM}$ interactions to the model such that $\bm{X}$ would moderate the way mediators affect each other (see Section \ref{XMext}). A “fully heterogeneous” model may not be practically necessary since it is very uncommon to expect the causal direction to be heterogeneous. The slightly restricted version we proposed in this study should be able to address the majority of applied research questions with respect to heterogeneous causal relationships.

% \begin{remark}[Clarification of Causal Graph on Population Level v.s. HCGs]
% Our proposed method first estimates a causal graph on the population level by including the treatment-and-moderator interaction in modeling to reflect heterogeneity, which allows us to project the estimated causal graph on the subgroup level to a certain value of moderator X through the do-operator ($do(X=x)$). Under different values of X=x, the corresponding projected causal graphs are thus produced as the HCGs. Mathematically, it can be understood as $E(\mathcal{\widehat{G}})=E_x\{E(\mathcal{\widehat{G}}|X=x)\}=E_x\{E(\mathcal{\widehat{G}}|do(X=x)\}$, where $\mathcal{\widehat{G}}$ is the estimated causal graph on the population level, the first equation is due to the formula of iterative expectation, and the second equation is owing to the no unmeasured confounder assumption. Thus, we say the entire causal graph at the population level can be viewed as the average of HCGs, or more precisely, as the weighted average and the weights are the density of the moderator X. More details can be found in Section 3.2.
% \end{remark}

\subsection{Adding XM Interaction to Model \ref{LSEM}}\label{XMext}
In this section, we introduce an extension of Model \ref{LSEM} that includes $\bm{XM}$ interaction. Note that this interaction only includes a term for the response and does not include a term for mediators, i.e. we assume that $\bm{X}$ does not moderate how mediators affect other mediators. This is done mainly for three practical reasons: Firstly, we believe practitioners may not find value in modeling this relationship. Secondly, modeling this relationship dramatically increases the number of quantities that need estimating and thus increases the estimation error. Thirdly, including this term would not allow for a closed-form formula for the effects as we have done previously for Model \ref{LSEM} which reduces its interpretability.

\[\begin{matrix}
p\\1\\p\\s\\s*p\\1
\end{matrix}\begin{bmatrix}\bm{X}\\A\\\bm{X}A\\\bm{M}\\\bm{X\otimes M}\\Y\end{bmatrix} = \begin{bmatrix}
\bm{0}&\bm{0}&\bm{0}&\bm{0}&\bm{0}&\bm{0}\\
\bm{\delta_X}&0&\bm{0}&\bm{0}&\bm{0}&0\\
\bm{0}&\bm{0}&\bm{0}&\bm{0}&\bm{0}&\bm{0}\\
\bm{B^\top_X}&\bm{\beta_A}&\bm{B^\top_{XA}}&\bm{B^\top_M}&\bm{0}&\bm{0}\\
\bm{0}&\bm{0}&\bm{0}&\bm{0}&\bm{0}&\bm{0}\\
\bm{\gamma_X}&\gamma_A&\bm{\gamma_{XA}}&\bm{\gamma_M}&vec(\bm{\Gamma_{XM}})^\top&0
\end{bmatrix}\begin{bmatrix}\bm{X}\\A\\\bm{X}A\\\bm{M}\\\bm{X\otimes M}\\Y\end{bmatrix} + \begin{bmatrix}\bm{\epsilon_X}\\\epsilon_A\\\bm{X}A\\\bm{\epsilon_M}\\\bm{X\otimes M}\\\epsilon_Y\end{bmatrix}.
\]

\newpage
\begin{theorem}\label{app_thm_XM}
Under the modified assumptions and model above:
\begin{itemize}
    \item The natural direct effect \[DE = \gamma_A + \bm{\gamma_{XA} x};\]
    \item The natural indirect effect is \[IE=(\bm{\gamma_M} + \bm{x}^\top\bm{\Gamma_{XM}})\bm{(I_s - B^\top_M)^{-1}}\big(\bm{\beta_A} + \bm{B^\top_{XA} x}\big);\]
    \item The total effect of $A$ on $Y$ given $X$ is \[TE=\gamma_A + \bm{\gamma_{XA} x} + (\bm{\gamma_M} + \bm{x}^\top\bm{\Gamma_{XM}})\bm{(I_s - B^\top_M)^{-1}}\big(\bm{\beta_A} + \bm{B^\top_{XA} x}\big),\]
\end{itemize}
where $\bm{\Gamma_{XM}}$ is a $p$ by $s$ coefficient matrix and $vec(\bm{\Gamma_{XM}})^\top$ is the vectorization of this matrix.
\end{theorem}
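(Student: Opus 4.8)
The plan is to deduce Theorem~\ref{app_thm_XM} from Theorem~\ref{mainthm} by intervening on $\bm{X}$ and observing that, once $\bm{X}$ is held fixed, the new $\bm{X\otimes M}$ interaction behaves exactly like a shift of the mediator-to-outcome coefficient. First I would apply the $do(\bm{X}=\bm{x})$ operator to the model of Section~\ref{XMext}, exactly as in Section~\ref{sec:subgraph}: this deletes all edges into $\bm{X}$, freezes the deterministic blocks $\bm{X}A=\bm{x}A$ and $\bm{X\otimes M}=\bm{x}\otimes\bm{M}$, and leaves a reduced system in $(A,\bm{M},Y)$ whose nontrivial equations are
\begin{align*}
\bm{M} &= \bm{c}_M(\bm{x}) + \big(\bm{\beta_A}+\bm{B^\top_{XA}x}\big)A + \bm{B^\top_M M} + \bm{\epsilon_M},\\
Y &= c_Y(\bm{x}) + \big(\gamma_A+\bm{\gamma_{XA}x}\big)A + \bm{\gamma_M M} + vec(\bm{\Gamma_{XM}})^\top(\bm{x}\otimes\bm{M}) + \epsilon_Y,
\end{align*}
where $\bm{c}_M(\bm{x})$ and $c_Y(\bm{x})$ collect the constant (in $a$) shifts coming from $\bm{B^\top_X x}$ and $\bm{\gamma_X x}$. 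Crucially, the $\bm{M}$-equation is identical in form to that of Model~(\ref{LSEM}), since by construction (Section~\ref{XMext}) there is no $\bm{X\otimes M}$ term acting on the mediators.

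Second comes the only genuinely new ingredient: by the standard Kronecker--vectorization identity (with the column-stacking convention for $vec$ and the stated $p\times s$ orientation of $\bm{\Gamma_{XM}}$), one has $vec(\bm{\Gamma_{XM}})^\top(\bm{x}\otimes\bm{M}) = \bm{x}^\top\bm{\Gamma_{XM}}\bm{M}$ --- that is, conditionally on $\bm{X}=\bm{x}$ the $\bm{X\otimes M}$ term is a linear functional of $\bm{M}$ with coefficient row $\bm{x}^\top\bm{\Gamma_{XM}}$. Substituting this into the $Y$-equation turns it into a bona fide linear SEM of the form of Model~(\ref{LSEM}), now with the effective coefficients $\gamma_A\mapsto\gamma_A+\bm{\gamma_{XA}x}$, $\bm{\beta_A}\mapsto\bm{\beta_A}+\bm{B^\top_{XA}x}$, $\bm{B^\top_M}$ unchanged, and $\bm{\gamma_M}\mapsto\bm{\gamma_M}+\bm{x}^\top\bm{\Gamma_{XM}}$.

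Third, I would invoke Theorem~\ref{mainthm} verbatim on this reduced model (equivalently, rerun its short argument). Solving the $\bm{M}$-equation gives $\bm{m}^{(a)}=E[\bm{M}\mid do(A=a),\bm{X}=\bm{x}]=(\bm{I_s}-\bm{B^\top_M})^{-1}\big(\bm{c}_M(\bm{x})+(\bm{\beta_A}+\bm{B^\top_{XA}x})a\big)$, hence $\partial\bm{m}^{(a)}/\partial a=(\bm{I_s}-\bm{B^\top_M})^{-1}(\bm{\beta_A}+\bm{B^\top_{XA}x})$. Feeding the interventional values into Definition~\ref{ind_caus_eff}: under $do(A=a,\bm{M}=\bm{m}^{(a')})$ the block $\bm{x}\otimes\bm{M}$ is frozen at $\bm{x}\otimes\bm{m}^{(a')}$ and so contributes nothing to $\partial/\partial a$, giving $DE=\gamma_A+\bm{\gamma_{XA}x}$; under $do(A=a',\bm{M}=\bm{m}^{(a)})$ the outcome depends on $a$ only through $\bm{m}^{(a)}$, against the effective mediator coefficient $\bm{\gamma_M}+\bm{x}^\top\bm{\Gamma_{XM}}$, giving $IE=(\bm{\gamma_M}+\bm{x}^\top\bm{\Gamma_{XM}})(\bm{I_s}-\bm{B^\top_M})^{-1}(\bm{\beta_A}+\bm{B^\top_{XA}x})$; and $TE=DE+IE$ by the same linear decomposition as in part~3) of Theorem~\ref{mainthm}.

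The main obstacle is essentially bookkeeping rather than ideas: one must pin down the $vec$/$\otimes$ conventions so that $vec(\bm{\Gamma_{XM}})^\top(\bm{x}\otimes\bm{M})$ really is the length-$s$ row $\bm{x}^\top\bm{\Gamma_{XM}}$ contracted against $\bm{M}$ (and not its transpose or a reshuffled version), and one must check that the $do(\bm{M}=\cdot)$ interventions appearing in the definitions of $DE$ and $IE$ freeze the $\bm{X\otimes M}$ block consistently with how $\bm{x}\otimes\bm{M}$ was treated in the reduced SEM. Once these are settled, Theorem~\ref{app_thm_XM} is an immediate corollary of Theorem~\ref{mainthm}.
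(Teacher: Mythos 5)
Your proposal is correct and follows essentially the same route as the paper's proof: in both, the key observation is that fixing $\bm{X}=\bm{x}$ turns the $\bm{X}\otimes\bm{M}$ term into a shift of the mediator-to-outcome coefficient, $\bm{\gamma_M}\mapsto\bm{\gamma_M}+\bm{x}^\top\bm{\Gamma_{XM}}$, after which the computation is exactly that of Theorem~\ref{mainthm} with effective coefficients $\gamma_A+\bm{\gamma}_{\bm{X}A}\bm{x}$ and $\bm{\beta}_A+\bm{B}^\top_{\bm{X}A}\bm{x}$. The only cosmetic difference is that the paper re-runs the expectation calculations explicitly (using unit increments in $a$ rather than derivatives) instead of citing Theorem~\ref{mainthm} as a corollary, which is precisely the ``rerun its short argument'' option you allow for.
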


As one might expect, when taking $\bm{XM}$ interaction into account, the indirect effect of $A$ changes accordingly.

\section{More Details of Algorithms}
\subsection{ISL Algorithm Specifics}\label{asec:algo}
In this section, we provide more details for the proposed method we call Interactive Structural Learning (ISL). We start with the extension of the base learner DAG-GNN \citet{yu2019dag} in our proposed framework with confounder-based interaction, with the corresponding implementation details below. The complete pseudocode is also provided below. Other score-based algorithms are also applicable to our framework such as causal discovery with RL \citep{zhu2019causal} expect the NOTEARS method \citep{zheng2018dags} which requires differentiable constraints to support their optimization algorithm.

We use \citet{yu2019dag}'s deep generative approach for causal structural learning as an example to detail the estimation of $\bm{B}$. First, we convert Model (\ref{LSEM}) and treat the random error as independent latent variables such that 
\begin{equation*}
\bm{D} = (\bm{I}_{2p+s+2}-\bm{B}^\top)^{-1}\bm{\epsilon}.
\end{equation*}
We adopt the variational autoencoder model in \citet{yu2019dag} to generate $\bm{D}$ using two multilayer perceptrons (MLPs) as the encoder and the decoder with mean zero noise $\bm{\epsilon}$. The goal is to minimize the evidence lower bound $f$ such that the generated data is close to the observed data as follows,
\begin{align*}
  f(\bm{B},\theta|\bm{D})  = {\sum_{i=1}^{2p+s+2}\Delta_{KL}\{q(\bm{\epsilon}|\bm{D}_i)\|p(\bm{\epsilon})\}}/{(2p+s+2)}  - E_{q(\bm{\epsilon}|\bm{D}_i)}\{\log p(\bm{D}_i|\bm{\epsilon})\}, 
\end{align*}
where $\bm{D}_i$ is the $i$-th variable in $\bm{D}$, $\theta$ is a vector of weights for MLPs, $\Delta_{KL}$ refers to the KL divergence, $q(\bm{\epsilon}|\bm{D}_i)$ is the reconstructed posterior distribution, $p(\bm{\epsilon})$ is the prior distribution, and $p(\bm{D}_i|\bm{\epsilon})$ is the likelihood function. This objective can be minimized using any preferred black-box stochastic optimization.

The pseudocode of Interactive Structural Learning (ISL) is provided in Algorithm \ref{algo1_full}.

\begin{remark}
Although ISL is able to identify causal heterogeneity, to have a complete examination of such causal heterogeneity, one has to apply the method to all possible configurations of the moderators. This may be a limitation in an extremely high dimensional setting, though we demonstrate a reasonably good performance of our method in Section \ref{sec:simu} with $p=22$ and more than $50$ nodes in total. We leave the variable selection in causal graphs as a future investigation.
\end{remark}

\subsection{Time Complexity of ISL Algorithm}\label{asec:time}

Our method uses another causal discovery method as a base in order to find the estimated topological ordering of the nodes. Due to this, the time complexity of our method is based on the base method used. Let the time complexity of the base method be $g(N,n)$ where $N$ is the data sample size, and the number of nodes, $n = 2p + s + 2$ where $p$ is the number of confounders/moderators and $s$ is the number of mediators. Let the time complexity of LASSO be $f(N,n)$. Computing the HCEs requires inverting a size $n$ matrix and a size $n-1$ matrix $s$ times. Putting this together, the time complexity of our method is $O(g(N,n) + f(N,n) + n^2)$ for $s<<n$ and $O(g(N,n) + f(N,n) + sn^2)$ otherwise.

 \newpage
\subsection{Full Algorithms}

\begin{algorithm}[!htb]
\begin{algorithmic}
    \caption{Interactive  Structural Learning}\label{algo1_full}
   \STATE {\bfseries Input:} data $\bm{D}$, dimension of baseline information $p$, dimension of mediators $s$, number of epochs $H$, max number of iterations $K$, parameter upper bound $U$, original learning rate $r_0$, tolerance for constraint $\delta$, tuning parameters $\rho$ and $\tau$, tolerance for estimating causal skeleton $\delta_b$.
   \STATE {\bfseries Initialize:} $\lambda_1\xleftarrow{}0$; $\lambda_2\xleftarrow{}0$; $c\xleftarrow{}1$; $d\xleftarrow{}1$; $r\xleftarrow{}r_0$; $\widehat{\bm{B}}\empty^0\xleftarrow{}\bm{0}$; $\widehat{\bm{B}}\xleftarrow{}\bm{0}$; $h_1^{old}\xleftarrow{}\infty$; $h_2^{old}\xleftarrow{}\infty$; Encoder weights $W^1$, $W^2$ and Decoder weights $W^3$, $W^4$ with Gaussian initialization.
   \FOR{$k=0$ {\bfseries to} $K-1$}
   \REPEAT
   \FOR{$i=0$ {\bfseries to} $H-1$}
   \STATE {\bfseries 1.} Compute the distribution parameters for the noise using the encoder: \[(\bm{\mu}_\epsilon,\bm{\sigma}_\epsilon)\xleftarrow{}(\bm{I}_{2p+s+2}-(\widehat{\bm{B}}\empty^0)^\top)MLP\{\bm{D},W^1,W^2\}\]
   \STATE {\bfseries 2.} Compute the distribution parameters for the data using the decoder:
   \[(\bm{\mu}_{\bm{D}},\bm{\sigma}_{\bm{D}})\xleftarrow{}MLP\{(\bm{I}_{2p+s+2}-(\widehat{\bm{B}}\empty^0)^\top)^{-1}\bm{\epsilon},W^3,W^4\}\]
   \STATE {\bfseries 3.} Calculate structural constraints, $h_1^{new}\xleftarrow{}h_1(\widehat{\bm{B}}\empty^0)$, $h_2^{new}\xleftarrow{}h_2(\widehat{\bm{B}}\empty^0)$, and loss $\ell\xleftarrow{}L(\widehat{\bm{B}}\empty^0,W^1,W^2,W^3,W^4)$
   \STATE {\bfseries 3.} Use backward propagation to update $\widehat{\bm{B}}\empty^0$ and network weights $\{W^1,W^2,W^3,W^4\}$
   \STATE {\bfseries 4.} Update learning rate $r$
   \ENDFOR
   \IF{$h_1^{new} > \rho h_1^{old}$ \& $h_2^{new} > \rho h_2^{old}$}
   \STATE $c\xleftarrow{}c\times\tau$; $d\xleftarrow{}d\times\tau$
   \ELSIF{$h_1^{new} > \rho h_1^{old}$ \& $h_2^{new} < \rho h_2^{old}$}
   \STATE $c\xleftarrow{}c\times\tau$
   \ELSIF{$h_1^{new} < \rho h_1^{old}$ \& $h_2^{new} > \rho h_2^{old}$}
   \STATE $d\xleftarrow{}d\times\tau$
   \ELSE
   \STATE Break
   \ENDIF
   \STATE {\bfseries Update:}  $h_1^{old}\xleftarrow{}h_1^{new}$; $h_2^{old}\xleftarrow{}h_2^{new}$; $\lambda_1\xleftarrow{}\lambda_1\times h_1^{new}$; $\lambda_2\xleftarrow{}\lambda_2\times h_2^{new}$;
   \IF{$h_1^{new} < \delta$ \& $h_2^{new} > \delta$}
   \STATE Break
   \ENDIF
   \UNTIL{$c\times d < U$}
   \ENDFOR
   \STATE \textbf{Compute} $\widehat{\bm{B}}\empty^b$, where $\widehat{\bm{B}}\empty^b_{i,j} = \widehat{\bm{B}}\empty^0_{i,j} > \delta_b$
   \FOR{$i=1$ {\bfseries to} $2p+s+2$}
   \STATE \textbf{Estimate} direct children of $i$th node, denoted $D_i$, by $\{D_j|\widehat{\bm{B}}\empty^b_{i,j} \neq 0\}$
   \STATE \textbf{Update} $i$th row of $\widehat{\bm{B}}$ using coefficients of fitted LASSO model using $\{D_j|\widehat{\bm{B}}\empty^b_{i,j} \neq 0\}$ as the predictors and $D_i$ as the response
   \ENDFOR
   \STATE \textbf{Return} $\widehat{\bm{B}}$
\end{algorithmic}
\end{algorithm}

% \begin{algorithm}[!htb]
% \begin{algorithmic}
%     \caption{Interactive Structural Learning Extended}\label{algo2}
%    \STATE {\bfseries Input:} data $\bm{D}$, dimension of baseline information $p$, dimension of mediators $s$, tolerance for estimating causal skeleton $\delta_b$.
%    \STATE {\bfseries Initialize:} $\widehat{\bm{B}}\empty^0\xleftarrow{}\bm{0}$; $\widehat{\bm{B}}\xleftarrow{}\bm{0}$
%    \STATE \textbf{Estimate} the skeleton of $\widehat{\bm{B}}\empty^0$ using \citet{rolland2022score}'s SCORE method.
%    \STATE \textbf{Fit} chosen functional model using estimated skeleton while incorporating structural constraints to estimate $\widehat{\bm{B}}\empty^0$
%    \STATE \textbf{Compute} $\widehat{\bm{B}}\empty^b$, where $\widehat{\bm{B}}\empty^b_{i,j} = \widehat{\bm{B}}\empty^0_{i,j} > \delta_b$
%    \FOR{$i=1$ {\bfseries to} $2p+s+2$}
%    \STATE \textbf{Estimate} direct children of $i$th node, denoted $D_i$, by $\{D_j|\widehat{\bm{B}}\empty^b_{i,j} \neq 0\}$
%    \STATE \textbf{Update} $i$th row of $\widehat{\bm{B}}$ using coefficients of fitted LASSO model using $\{D_j|\widehat{\bm{B}}\empty^b_{i,j} \neq 0\}$ as the predictors and $D_i$ as the response
%    \ENDFOR
%    \STATE \textbf{Compute} the HCEs and desired CIs using $\widehat{\bm{B}}$
%    \STATE \textbf{Return} $\widehat{\bm{B}}$, HCEs, and CIs
% \end{algorithmic}
% \end{algorithm}

\subsection{More Information on \citet{rolland2022score}'s SCORE Method and Extended ISL}\label{sec:append_algo2}

\citet{rolland2022score}'s SCORE Method is based on their finding that leaf nodes can be found by using the Jacobian of the data distribution's score function or the variance of the Jacobian. This fact is then used to estimate a topological ordering for the variables by iteratively finding a leaf node, adding it to the top of the current topological order, removing it from the graph, and repeating until all nodes have been ordered \cite{rolland2022score}. The causal graph is then estimated by using any pruning method to prune the full graph associated with the estimated topological order \cite{rolland2022score}. The SCORE algorithm is summarized below in Algorithm \ref{algo_score}. Note that for practical reasons, the authors identify the leaf node as the node corresponding as the smallest element of the variance of the Jacobian \cite{rolland2022score}. For further information on the computation of the score function, please see \citet{rolland2022score}'s paper.

\begin{algorithm}[!htb]
\begin{algorithmic}
    \caption{SCORE method}\label{algo_score}
   \STATE {\bfseries Input:} data $\bm{X}$ with $d$ nodes and sample size $N$.
   \STATE {\bfseries Initialize:} topological order $\pi=[]$ and nodes$=\{1,...,d\}$
   \FOR{$i=1$ {\bfseries to} $d$}
   \STATE \textbf{Estimate} score function $s_nodes$
   \STATE \textbf{Estimate} variance $V_j=\text{Var}_{X_{\text{nodes}}}\big[\frac{\delta s_j(X)}{\delta x_j}\big]$
   \STATE \textbf{Find} the leaf node: $l \xleftarrow{} \text{nodes}[\text{arg min}_j V_j]$
   \STATE \textbf{Add} leaf node to topological order: $\pi\xleftarrow{}[l,\pi]$
   \STATE \textbf{Remove} $l$ from the node set and data matrix
   \ENDFOR
   \STATE \textbf{Prune} the full DAG associated with $\pi$ using desired method
   \STATE \textbf{Return} the pruned DAG
\end{algorithmic}
\end{algorithm}

 \section{More Information about the Real Data Analysis}\label{asec:real_des}

 Heterogeneity is common for many types of disorders in terms of phenotype, risk factors, treatment effect, and underlying neuro-biological mechanism. This is especially true for psychiatric disorders \citep{marquand2016beyond}, such as depression and post-traumatic stress disorder (PTSD).
Heterogeneity is one of the important obstacles that continue to hamper research for psychiatric disorders \citep{marquand2016beyond}. 
To help overcome these challenges and advance research for psychiatric disorders, the National Institutes of Mental Health, joined by several institutions and foundations, launched 
the Advancing Understanding of RecOvery afteR traumA (AURORA) study \citep{mclean2020aurora}, where thousands of trauma survivors were recruited at the emergency departments after trauma exposures, and followed for one year to collect a broad range of bio-behavioral data. 

One of the aims of the AURORA study is to address the heterogeneity issue by identifying homogeneous sub-types or subgroups of patients after trauma exposure (i.e., homogeneous prognosis patterns among patients with PTSD). Currently, most of the researchers for psychiatric disorders focus on stratifying patients into subgroups using different clustering algorithms \citep{marquand2016beyond} based on cross-sectional or longitudinal patterns using self-reported data and neuroimaging data. On the other hand, a large number of studies have also been conducted to investigate the associations among pre-trauma characteristics, peritraumatic symptoms, and post-traumatic disorders. However, due to the lack of appropriate statistical tools, very few studies have been conducted to explore heterogeneous causal relationships for psychiatric disorders. 
 
 To develop preventive interventions for psychiatric disorders, such as post-traumatic stress disorder (PTSD), in the immediate aftermath of trauma exposure, it is essential to identify risk factors that can be targeted to prevent or alleviate the corresponding disorder. For instance, a recent study by \citet{neylan2021prior} found that pre-trauma sleep disorders (e.g., nightmare and insomnia) are associated with 8-week PTSD, and these associations are mediated by peritraumatic outcomes (e.g., acute stress disorder) measured at 2 weeks. However, as the authors have discussed, the associations identified in this study do not support causal conclusions. Furthermore, the possible causal effect of pre-trauma sleep disorder on PTSD is very likely moderated by characteristics such as gender, race, pre-trauma mental health, etc. As a result, it is valuable to understand the heterogeneous causal pathways of pre-trauma characteristics with PTSD and other psychiatric disorders. 

Under our linear model setting, the marginal effect of each covariate is directly characterized by the coefficient corresponding to the interaction term of each covariate with treatment $A$. 
Our model can help identify real moderators from a set of potential covariates which can be potentially very useful for applied studies since prior knowledge about real moderators is often not available.

 We worked with domain experts to try to take into account as many possible confounders as possible based on previous research, but it is impossible to claim that all potential confounders have been identified and included in the model. Results from this real data analysis indicate that overall the total, direct and indirect effects of pre-trauma insomnia are all marginally significant (HTE, HDE, and HIE in Figure \ref{whole_confint}), and total mediation effects from peritraumatic acute stress and peritraumatic PTSD are also marginally significant (HME(2) and HME(3) in Figure \ref{whole_confint}). The mediation effect from peritraumatic acute stress is mainly from the direct mediation effect (HDM(2) in Figure \ref{whole_confint}). On the other hand, the mediation effect from peritraumatic PTSD is from indirect mediation effects (HIM(3) in Figure \ref{whole_confint}). The causal effects for the whole population graph are summarized in Table \ref{whole_tab}.

\section{Additional Simulation Figures and Tables}\label{asec:simu}

\begin{table}[!htb]
    \centering
    \captionof{table}{Bias (standard error) of causal effects for S1-S3.}
    \label{bias135h}
\begin{tabular}{cllllll}
  \hline
  \multirow{2}{*}{ } 
      & \multicolumn{3}{c}{$n=500$} & \multicolumn{3}{c}{$n=1000$} \\             \cline{2-7}
  & S1 & S2 & S3 & S1 & S2 & S3 \\  \hline
  $HDE$ & -0.01(0.05) & -0.25(0.23) & -0.13(0.07) & -0.02(0.04) & -0.29(0.09) & -0.13(0.04)\\      \hline
  $HIE$ & 0.00(0.00) & 0.40(0.17) & 0.29(0.23) & 0.00(0.00) & 0.42(0.11) & 0.25(0.06)\\      \hline
  $HDM_1$ & 0.00(0.00) & 0.53(0.14) & 0.00(0.00) & 0.00(0.00) & 0.52(0.06) & 0.00(0.00)\\      \hline
  $HDM_2$ & 0.00(0.00) & 0.00(0.03) & 0.00(0.00) & 0.00(0.00) & 0.00(0.03) & 0.00(0.00)\\      \hline
  $HDM_3$ & 0.00(0.00) & 0.55(0.08) & 0.13(0.20) & 0.00(0.00) & 0.56(0.05) & 0.10(0.05)\\      \hline
  $HDM_4$ & 0.00(0.00) & -0.55(0.09) & 0.00(0.02) & 0.00(0.00) & -0.54(0.05) & 0.00(0.00)\\      \hline
  $HDM_5$ & 0.00(0.00) & -0.12(0.07) & 0.00(0.00) & 0.00(0.00) & -0.11(0.05) & 0.00(0.00)\\      \hline
  $HDM_6$ & 0.00(0.00) & 0.00(0.04) & 0.16(0.07) & 0.00(0.00) & 0.00(0.02) & 0.15(0.04)\\      \hline
  $HIM_1$ & 0.00(0.00) & 0.00(0.00) & 0.00(0.00) & 0.00(0.00) & 0.00(0.00) & 0.00(0.00)\\      \hline
  $HIM_2$ & 0.00(0.00) & 0.00(0.00) & 0.00(0.00) & 0.00(0.00) & 0.00(0.00) & 0.00(0.00)\\      \hline
  $HIM_3$ & 0.00(0.00) & 0.00(0.00) & 0.00(0.00) & 0.00(0.00) & 0.00(0.00) & 0.00(0.00)\\      \hline
  $HIM_4$ & 0.00(0.00) & 0.00(0.00) & 0.01(0.06) & 0.00(0.00) & 0.00(0.00) & 0.00(0.00)\\      \hline
  $HIM_5$ & 0.00(0.00) & 0.00(0.00) & 0.12(0.09) & 0.00(0.00) & 0.00(0.00) & 0.10(0.03)\\      \hline
  $HIM_6$ & 0.00(0.00) & 0.00(0.00) & 0.01(0.05) & 0.00(0.00) & 0.00(0.00) & 0.00(0.00)\\      \hline
\end{tabular}
\end{table}

\begin{table}[!htb]
    \centering
    \captionof{table}{Bias (standard error) of causal effects for S3nx and S3mod.}
    \label{biasnxmod}
\begin{tabular}{cllll}
  \hline
   & \multicolumn{2}{c}{$n=500$} & \multicolumn{2}{c}{$n=1000$}\\             \cline{2-5}
  & S3nx & S3mod & S3nx & S3mod\\  \hline
  $HDE$ & -0.11(0.07) & 0.02(0.08) & -0.12(0.05) & 0.03(0.06)\\      \hline
  $HIE$ & 0.13(0.18) & 0.73(0.15) & 0.11(0.05) & 0.72(0.12)\\      \hline
  $HDM_1$ & 0.00(0.00) & 0.00(0.00) & 0.00(0.00) & 0.00(0.00)\\      \hline
  $HDM_2$ & 0.00(0.00) & 0.00(0.00) & 0.00(0.00) & 0.00(0.00)\\      \hline
  $HDM_3$ & -0.05(0.18) & 0.49(0.12) & -0.07(0.05) & 0.48(0.09)\\      \hline
  $HDM_4$ & 0.00(0.00) & 0.00(0.00) & 0.00(0.00) & 0.00(0.00)\\      \hline
  $HDM_5$ & 0.00(0.00) & 0.00(0.00) & 0.00(0.00) & 0.00(0.00)\\      \hline
  $HDM_6$ & 0.19(0.05) & 0.25(0.09) & 0.18(0.02) & 0.24(0.07)\\      \hline
  $HIM_1$ & 0.00(0.00) & 0.00(0.00) & 0.00(0.00) & 0.00(0.00)\\      \hline
  $HIM_2$ & 0.00(0.00) & 0.00(0.00) & 0.00(0.00) & 0.00(0.00)\\      \hline
  $HIM_3$ & 0.00(0.00) & 0.00(0.00) & 0.00(0.00) & 0.00(0.00)\\      \hline
  $HIM_4$ & 0.00(0.04) & 0.00(0.00) & 0.00(0.00) & 0.00(0.00)\\      \hline
  $HIM_5$ & 0.19(0.06) & 0.22(0.09) & 0.18(0.02) & 0.21(0.07)\\      \hline
  $HIM_6$ & 0.00(0.04) & 0.00(0.00) & 0.00(0.00) & 0.00(0.00)\\      \hline
\end{tabular}
\end{table}

\begin{table}[!htb]
\centering
    \captionof{table}{Accuracy metrics for S1-S3 using NOTEARS and a sample size of $500$ averaged across 100 simulated datasets.}
    \label{accs135n}
\begin{tabular}{clll}
  \hline
  \multirow{2}{*}{ } 
      & \multicolumn{3}{c}{$n=500$}\\
      \cline{2-4}
  & FDR & TPR & SHD \\  \hline
  S1 & 0.00(0.00) & 1.00(0.00) & 0.00(0.00)\\      \hline
  S2 & 0.09(0.09) & 0.90(0.11) & 3.06(3.14)\\      \hline
  S3 & 0.10(0.09) & 0.94(0.08) & 2.52(1.97)\\      \hline
\end{tabular} 
\end{table}

\begin{table}[!htb]
    \centering
    \captionof{table}{Accuracy metrics for S1-S3 using DAG-GNN and a sample size of $500$ averaged across 100 simulated datasets.}
    \label{accs135d}
\begin{tabular}{clll}
  \hline
  \multirow{2}{*}{ } 
      & \multicolumn{3}{c}{$n=500$}\\
      \cline{2-4}
  & FDR & TPR & SHD \\  \hline
   S1 & 0.03(0.09) & 0.97(0.08) & 0.14(0.42)\\      \hline
  S2 & 0.05(0.04) & 0.97(0.07) & 1.43(1.51)\\      \hline
  S3 & 0.03(0.05) & 0.98(0.04) & 0.70(1.05)\\      \hline
\end{tabular} 
\end{table}

\begin{figure}[!htb]
\centering
\begin{subfigure}{.2\linewidth}
  \centering
  \includegraphics[width=.9\linewidth]{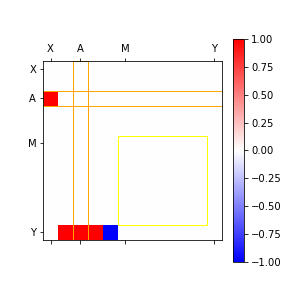}
  \caption{True}% $\bm{B}$}
\end{subfigure}%
\begin{subfigure}{.2\linewidth}
  \centering
  \includegraphics[width=.9\linewidth]{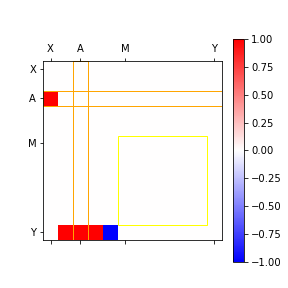}
   \caption{NOTEARS($n=500$)}%$\widehat{\bm{B}}$ (NOTEARS ($n=500$))}
\end{subfigure}%
\begin{subfigure}{.2\linewidth}
  \centering
  \includegraphics[width=.9\linewidth]{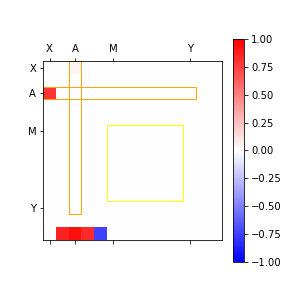}
   \caption{DAG-GNN($n=500$)}%$\widehat{\bm{B}}$ (NOTEARS ($n=500$))}
\end{subfigure}%
\begin{subfigure}{.2\linewidth}
  \centering
  \includegraphics[width=.9\linewidth]{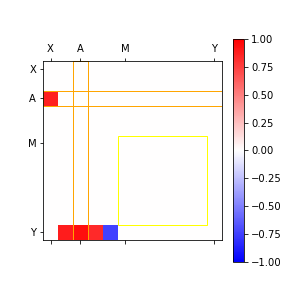}
  \caption{ISL($n=500$)}%$\widehat{\bm{B}}$ ($n=500$)}
\end{subfigure}%
\begin{subfigure}{.2\linewidth}
  \centering
  \includegraphics[width=.9\linewidth]{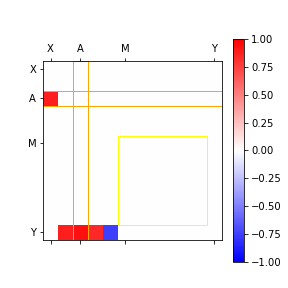}
  \caption{ISL($n=1000$)}%{$\widehat{\bm{B}}$ ($n=1000$)}
\end{subfigure}%
\caption{Estimated Causal Graph for Scenario 1 without mediator effects for $n=500$ and $1000$ with NOTEARS and DAG-GNN shown for comparison.}
    \label{s1b}
\end{figure}

\begin{figure}[!htb]
\centering
\begin{subfigure}{.2\linewidth}
  \centering
  \includegraphics[width=.9\linewidth]{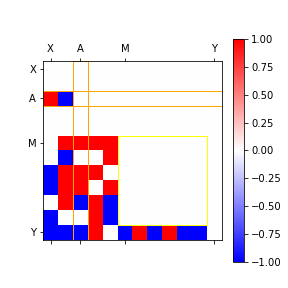}
  \caption{True}% $\bm{B}$}
\end{subfigure}%
\begin{subfigure}{.2\linewidth}
  \centering
  \includegraphics[width=.9\linewidth]{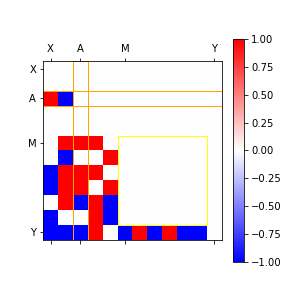}
   \caption{NOTEARS($n=500$)}%$\widehat{\bm{B}}$ (NOTEARS ($n=500$))}
\end{subfigure}%
\begin{subfigure}{.2\linewidth}
  \centering
  \includegraphics[width=.9\linewidth]{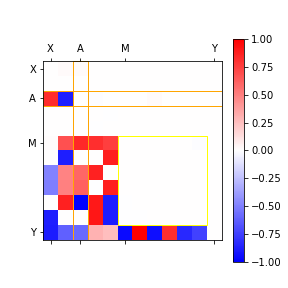}
   \caption{DAG-GNN($n=500$)}%$\widehat{\bm{B}}$ (NOTEARS ($n=500$))}
\end{subfigure}%
\begin{subfigure}{.2\linewidth}
  \centering
  \includegraphics[width=.9\linewidth]{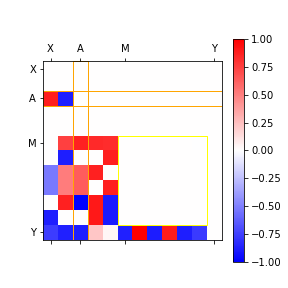}
  \caption{ISL($n=500$)}%$\widehat{\bm{B}}$ ($n=500$)}
\end{subfigure}%
\begin{subfigure}{.2\linewidth}
  \centering
  \includegraphics[width=.9\linewidth]{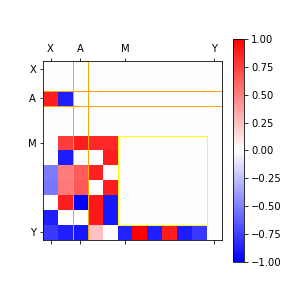}
  \caption{ISL($n=1000$)}%{$\widehat{\bm{B}}$ ($n=1000$)}
\end{subfigure}%
\caption{Estimated Causal Graph for Scenario 2 with parallel mediators for $n=500$ and $1000$ with NOTEARS and DAG-GNN shown for comparison.}
    \label{s2b}
\end{figure}

\begin{figure}[!htb]
\centering
\begin{subfigure}{.2\linewidth}
  \centering
  \includegraphics[width=.9\linewidth]{assets/Simulation/S3.png}
  \caption{True}% $\bm{B}$}
\end{subfigure}%
\begin{subfigure}{.2\linewidth}
  \centering
  \includegraphics[width=.9\linewidth]{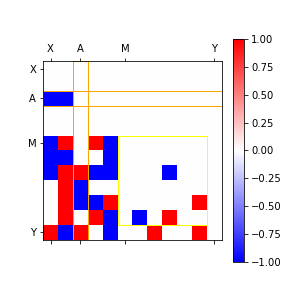}
  \caption{NOTEARS($n=500$)}%$\widehat{\bm{B}}$ (NOTEARS ($n=500$))}
\end{subfigure}%
\begin{subfigure}{.2\linewidth}
  \centering
  \includegraphics[width=.9\linewidth]{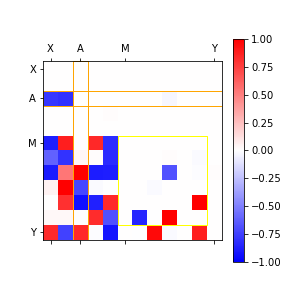}
  \caption{DAG-GNN($n=500$)}%$\widehat{\bm{B}}$ (NOTEARS ($n=500$))}
\end{subfigure}%
\begin{subfigure}{.2\linewidth}
  \centering
  \includegraphics[width=.9\linewidth]{assets/Simulation/S3est.png}
  \caption{ISL($n=500$)}%$\widehat{\bm{B}}$ ($n=500$)}
\end{subfigure}%
\begin{subfigure}{.2\linewidth}
  \centering
  \includegraphics[width=.9\linewidth]{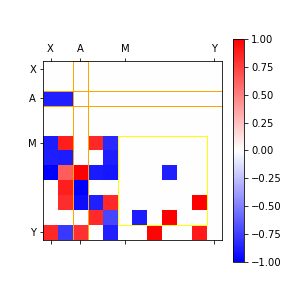}
  \caption{ISL($n=1000$)}%{$\widehat{\bm{B}}$ ($n=1000$)}
\end{subfigure}%
\caption{Estimated Causal Graph for Scenario 3 with sequentially ordered mediators or $n=500$ and $1000$ with NOTEARS and DAG-GNN shown for comparison.}
    \label{s3b} 
\end{figure}

\begin{figure}[!htb]
\centering
\begin{subfigure}{.3\linewidth}
  \centering
  \includegraphics[width=.9\linewidth]{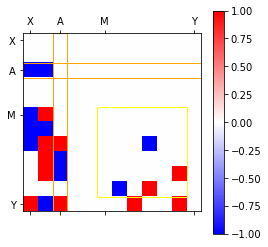}
  \caption{True}% $\bm{B}$}
\end{subfigure}%
\begin{subfigure}{.3\linewidth}
  \centering
  \includegraphics[width=.9\linewidth]{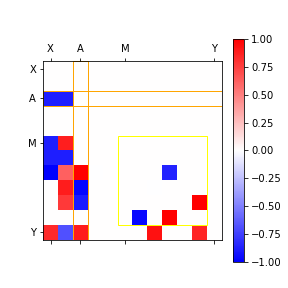}
  \caption{$n=500$}%$\widehat{\bm{B}}$ ($n=500$)}
\end{subfigure}%
\begin{subfigure}{.3\linewidth}
  \centering
  \includegraphics[width=.9\linewidth]{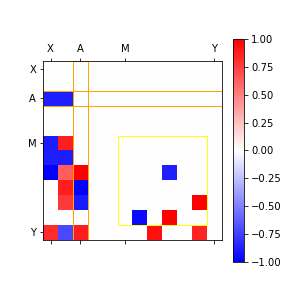}
  \caption{$n=1000$}%$\widehat{\bm{B}}$ ($n=1000$)}
\end{subfigure}%
\caption{Estimated Causal Graphs for Scenario 3 when $X$ is a moderator (d to f) with sample sizes of $n=500$ and $1000$.}
    \label{s4b} 
\end{figure}

\begin{figure}[!htb]
\centering
\begin{subfigure}{.3\linewidth}
  \centering
  \includegraphics[width=.9\linewidth]{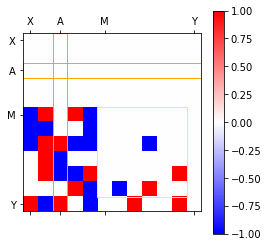}
  \caption{True}% $\bm{B}$}
\end{subfigure}%
\begin{subfigure}{.3\linewidth}
  \centering
  \includegraphics[width=.9\linewidth]{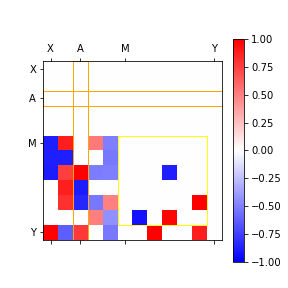}
  \caption{$n=500$}%{$\widehat{\bm{B}}$ ($n=500$)}
\end{subfigure}%
\begin{subfigure}{.3\linewidth}
  \centering
  \includegraphics[width=.9\linewidth]{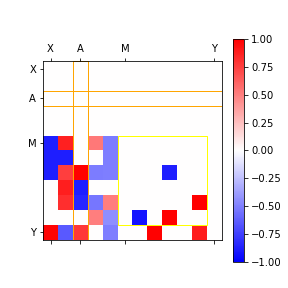}
  \caption{$n=1000$}%{$\widehat{\bm{B}}$ ($n=1000$)}
\end{subfigure}%
\caption{Estimated Causal Graphs for Scenario 3 with no interaction (a to c) with sample sizes of $n=500$ and $1000$.}
    \label{s5b} 
\end{figure}

\begin{figure}[!htb]
\centering
\begin{subfigure}{.3\linewidth}
  \centering
  \includegraphics[width=.9\linewidth]{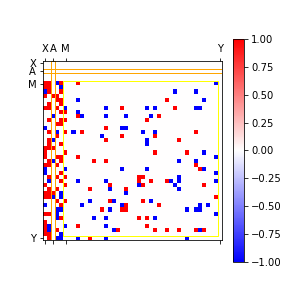}
  \caption{True S4}% $\bm{B}$}
\end{subfigure}%
\begin{subfigure}{.3\linewidth}
  \centering
  \includegraphics[width=.9\linewidth]{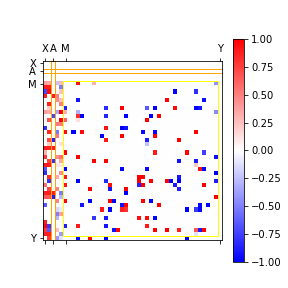}
  \caption{S4 estimate (n=1000)}%$\widehat{\bm{B}}$ ($n=500$)}
\end{subfigure}%
\begin{subfigure}{.3\linewidth}
  \centering
  \includegraphics[width=.9\linewidth]{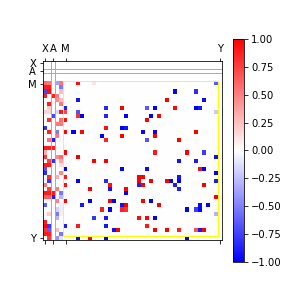}
  \caption{S4 estimate (n=1500)}%$\widehat{\bm{B}}$ ($n=500$)}
\end{subfigure}\\%\\
\begin{subfigure}{.3\linewidth}
  \centering
  \includegraphics[width=.9\linewidth]{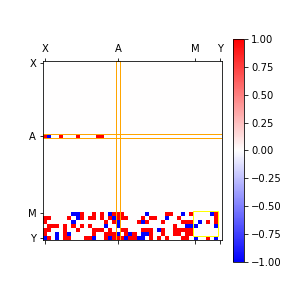}
  \caption{True S5}%$\widehat{\bm{B}}$ ($n=1000$)}
\end{subfigure}%
\begin{subfigure}{.3\linewidth}
  \centering
  \includegraphics[width=.9\linewidth]{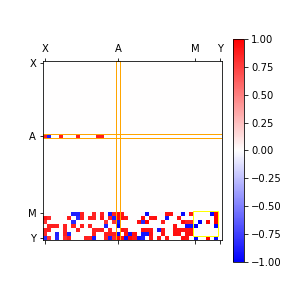}
  \caption{S5 estimate (n=1000)}% $\bm{B}$}
\end{subfigure}%
\begin{subfigure}{.3\linewidth}
  \centering
  \includegraphics[width=.9\linewidth]{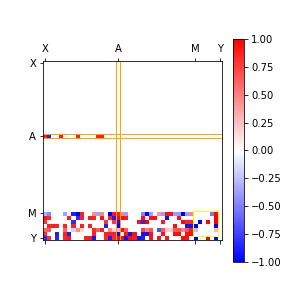}
  \caption{S5 estimate (n=1500)}% $\bm{B}$}
\end{subfigure}\\%
\begin{subfigure}{.3\linewidth}
  \centering
  \includegraphics[width=.9\linewidth]{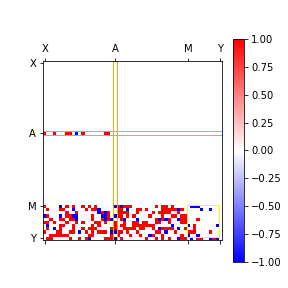}
  \caption{True S6}%{$\widehat{\bm{B}}$ ($n=500$)}
\end{subfigure}%
\begin{subfigure}{.3\linewidth}
  \centering
  \includegraphics[width=.9\linewidth]{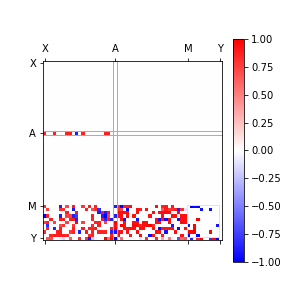}
  \caption{S6 estimate (n=1000)}%{$\widehat{\bm{B}}$ ($n=1000$)}
\end{subfigure}%
\begin{subfigure}{.3\linewidth}
  \centering
  \includegraphics[width=.9\linewidth]{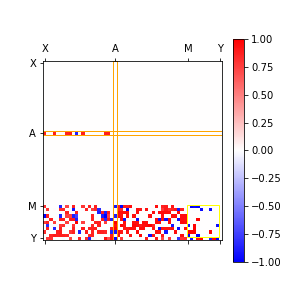}
  \caption{S6 estimate (n=1500)}%{$\widehat{\bm{B}}$ ($n=1000$)}
\end{subfigure}%
\caption{Results for Scenario 4 with more mediators, Scenario 5 with more moderators, and Scenario 6 with more mediators and moderators with sample sizes of $n=1000$ and $1500$.}
    \label{s6b} 
\end{figure}

\begin{table}[!htb]
    \centering
    \captionof{table}{Coverage for causal effects and interaction edge weights ($\bm{X}A\to Y$) for S3 averaged across 100 simulated datasets using 1000 bootstrap resamples each with the significance level of $0.05$.}
    \label{boot}
\begin{tabular}{cll}
\hline
  & Percentile & Gaussian\\  \hline
  $HDE$ & 0.94 & 0.88\\  \hline
  $HIE$ & 0.76 & 0.90\\      \hline
  $HDM_1$ & 1.00 & 1.00\\     \hline
  $HDM_2$ & 1.00 & 0.99\\      \hline
  $HDM_3$ & 0.96 & 0.96\\      \hline
  $HDM_4$ & 1.00 & 1.00\\      \hline
  $HDM_5$ & 1.00 & 1.00\\      \hline
  $HDM_6$ & 1.00 & 0.99\\      \hline
  $HIM_1$ & 1.00 & 1.00\\      \hline
  $HIM_2$ & 1.00 & 1.00\\      \hline
  $HIM_3$ & 1.00 & 1.00\\      \hline
  $HIM_4$ & 1.00 & 1.00\\      \hline
  $HIM_5$ & 0.97 & 0.93\\      \hline
  $HIM_6$ & 1.00 & 0.99\\      \hline
  $IX_1$ & 1.00 & 1.00\\      \hline
  $IX_2$ & 0.84 & 0.86\\      \hline
\end{tabular}
\end{table}

\begin{table}[!htb]
    \centering
    \captionof{table}{Accuracy metrics for S4-S6 with $n=1500$.}
    \label{accs46}
\begin{tabular}{clll}
  \hline
  \multirow{2}{*}{ } 
      & \multicolumn{3}{c}{$n=1500$}\\
      \cline{2-4}
  & FDR & TPR & SHD \\  \hline
  S4 & 0.01(0.01) & 0.91(0.02) & 12.28(3.07)\\      \hline
  S5 & 0.03(0.03) & 0.96(0.04) & 6.99(7.44)\\      \hline
  S6 & 0.00(0.02) & 1.00(0.02) & 1.08(6.09)\\      \hline
\end{tabular} 
\end{table}

% \newpage
\emptycomment{
\subsection{Results for High-dimensional Scenarios}

\begin{figure}[!t]
\centering
\begin{subfigure}{.33\linewidth}
  \centering
  \includegraphics[width=.9\linewidth]{assets/Simulation/S4.png}
  \caption{True $\bm{B}$}
\end{subfigure}%
\begin{subfigure}{.33\linewidth}
  \centering
  \includegraphics[width=.9\linewidth]{assets/Simulation/S4est.png}
  \caption{$\widehat{\bm{B}}$ ($n=500$)}
\end{subfigure}%
\begin{subfigure}{.33\linewidth}
  \centering
  \includegraphics[width=.9\linewidth]{assets/Simulation/S4est-1k.png}
  \caption{$\widehat{\bm{B}}$ ($n=1000$)}
\end{subfigure}

\caption{Results for S4 with increased mediators.}
    \label{s4b}
     %\vspace{-0.3cm}
\end{figure}

\begin{figure}[!t]
\centering
\begin{subfigure}{.33\linewidth}
  \centering
  \includegraphics[width=.9\linewidth]{assets/Simulation/S5.png}
  \caption{True $\bm{B}$}
\end{subfigure}%
\begin{subfigure}{.33\linewidth}
  \centering
  \includegraphics[width=.9\linewidth]{assets/Simulation/S5est.png}
  \caption{$\widehat{\bm{B}}$ ($n=500$)}
\end{subfigure}%
\begin{subfigure}{.33\linewidth}
  \centering
  \includegraphics[width=.9\linewidth]{assets/Simulation/S5est-1k.png}
  \caption{$\widehat{\bm{B}}$ ($n=1000$)}
\end{subfigure}

\caption{Results for S5 with increased baseline information.}
    \label{s5b} 
    %\vspace{-0.3cm}
\end{figure}

\begin{figure}[!t]
\centering
\begin{subfigure}{.33\linewidth}
  \centering
  \includegraphics[width=.9\linewidth]{assets/Simulation/S6.png}
  \caption{True $\bm{B}$}
\end{subfigure}%
\begin{subfigure}{.33\linewidth}
  \centering
  \includegraphics[width=.9\linewidth]{assets/Simulation/S6est.png}
  \caption{$\widehat{\bm{B}}$ ($n=500$)}
\end{subfigure}%
\begin{subfigure}{.33\linewidth}
  \centering
  \includegraphics[width=.9\linewidth]{assets/Simulation/S6est-1k.png}
  \caption{$\widehat{\bm{B}}$ ($n=1000$)}
\end{subfigure}

\caption{Results for S6 with increased mediators and baseline information.}
    \label{s6b}
    %\vspace{-0.3cm}
\end{figure}

\begin{center}
    \captionof{table}{Accuracy metrics (standard error) for S4-S6  under $n=500$.}
    \label{accs465h}
\begin{tabular}{clll}
  \hline
  \multirow{2}{*}{ } 
      & \multicolumn{3}{c}{$n=500$} \\             \cline{2-4}
  & FDR & TPR & SHD\\  \hline
  S4 & 0.06(0.03) & 0.94(0.04) & 12.40(6.03)\\      \hline
  S5 & 0.04(0.01) & 0.94(0.01) & 9.51(1.80)\\      \hline
  S6 & 0.10(0.02) & 0.79(0.02) & 78.23(7.82)\\      \hline
\end{tabular}
\end{center}

\begin{center}
    \captionof{table}{Accuracy metrics (standard error) for S4-S6  under $n=1000$.}
    \label{accs461k}
\begin{tabular}{clll}
  \hline
  \multirow{2}{*}{ } 
      & \multicolumn{3}{c}{$n=1000$} \\             \cline{2-4}
  & FDR & TPR & SHD\\  \hline
  S4 & 0.08(0.02) & 0.93(0.04) & 15.27(6.32) \\      \hline
  S5 & 0.04(0.04) & 0.96(0.08) & 8.25(10.44) \\      \hline
  S6 & 0.06(0.02) & 0.89(0.02) & 42.36(9.27) \\      \hline
\end{tabular}
\end{center}
}

\section{Figures and Tables from the Real Data Analysis}\label{asec:real}

\begin{table}[!ht] 
     \captionof{table}{Estimated effects for the whole population.}  \label{whole_tab} 
     \centering
\begin{tabular}{llllll}
\hline
&Insomnia&PT distress &ASD&PT PTSD &PT depression \\
\hline
 Total &0.1931& -0.0036 & 0.0572 & 0.0802 & -0.0059 \\
 \hline
Direct & 0.1160 & -0.0007 & 0.0572  & 0.0227  & -0.0021\\
\hline
Indirect & 0.0771& -0.0029 & 0 & 0.0576& -0.0038\\
\hline
\end{tabular}
\end{table}

% \begin{figure}[!t]
%    \centering
%    \includegraphics[width=1\linewidth]{assets/RealData/Male_Female-confintp.pdf}
%    \caption{$95\%$ Confidence intervals for effects of male and female subgraphs, produced using 1000 bootstrap samples.} 
%      \label{male_female_confint}
% \end{figure}

\begin{table}[!htb]
    \centering
    \captionof{table}{Estimated causal effects for males.}
    \label{male_tab}
\begin{tabular}{llll}
\hline
            & Total   & Direct  & Indirect \\ \hline
Insomnia    & 0.2418   & 0.1729   & 0.0690\\
PT distress     & -0.0023    & -0.0004    & -0.0019\\
ASD  & 0.0498  & 0.0498  & 0\\
PT PTSD & 0.0831  & 0.0235 & 0.0597\\
PT Depression & -0.0108 & -0.0039 & -0.0069\\
\hline
\end{tabular}
\end{table}

\begin{table}[!htb]
    \centering
    \captionof{table}{Estimated causal effects for females.}
    \label{female_tab}
\begin{tabular}{llll}
\hline
            & Total   & Direct  & Indirect \\ \hline
Insomnia    & 0.102 & 0.0099 & 0.0921\\
PT distress     & -0.006    & -0.0011    & -0.0048\\
ASD  & 0.0709  & 0.0709  & 0\\
PT PTSD & 0.0748 & 0.0211 & 0.0537\\
PT Depression & 0.0034 & 0.0012 & 0.0021\\
\hline
\end{tabular}
\end{table}

 \begin{figure}[!htb]
   \centering
   \includegraphics[width=1\linewidth]{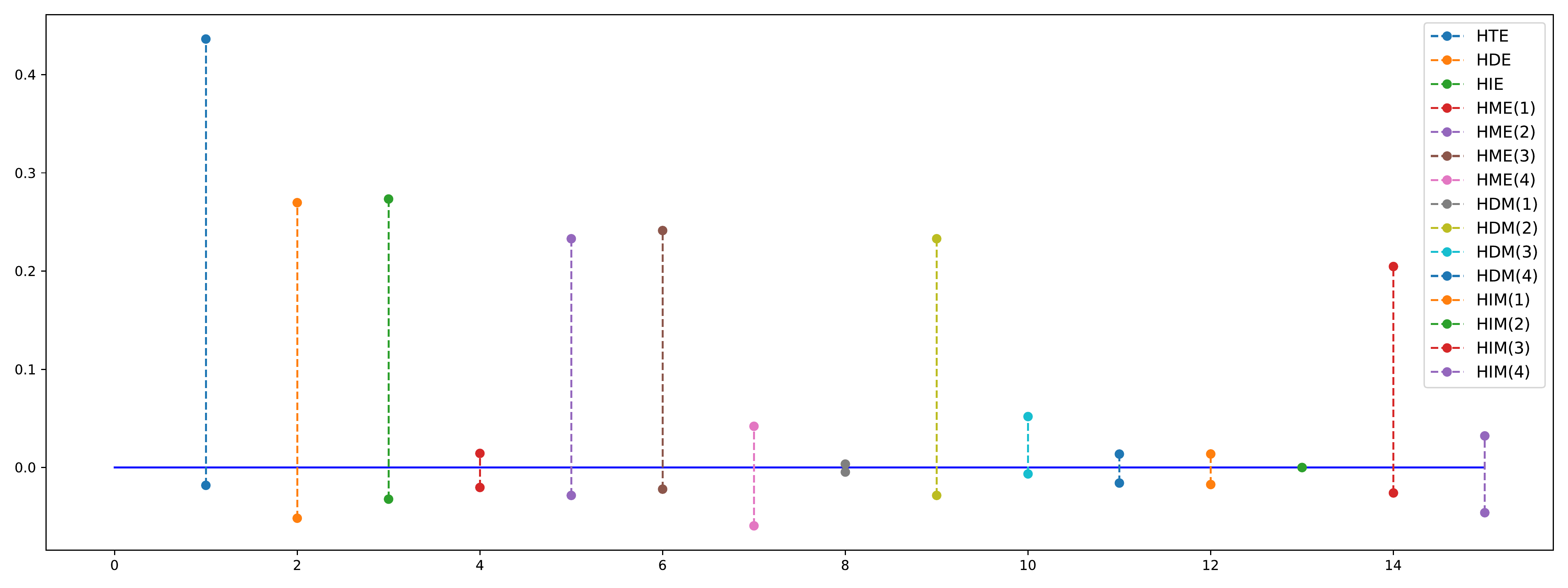}
   \captionof{figure}{$95\%$ Confidence intervals for causal effects on the population level graph, produced using 1000 bootstrap samples.}
     \label{whole_confint}
\end{figure}

 \begin{figure}[!htb]
   \centering
   \includegraphics[width=0.55\linewidth]{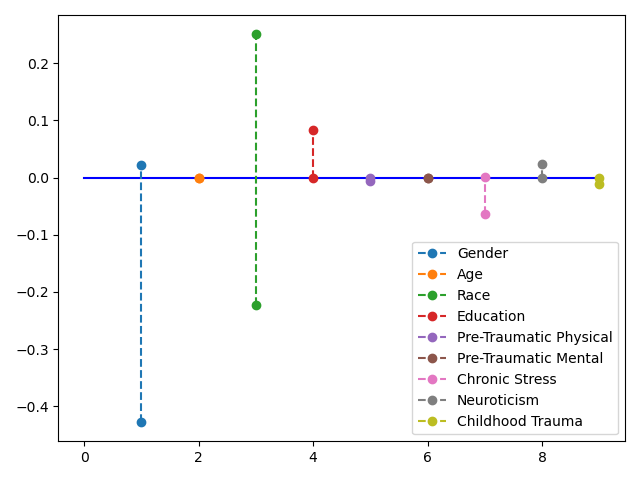}
   \caption{$95\%$ Confidence intervals for interaction effects on response computed over $1000$ bootstrap resamples.}
     \label{resp_confint}
\end{figure}
 
\begin{figure}[!htb]
   \centering
   \includegraphics[width=0.55\linewidth]{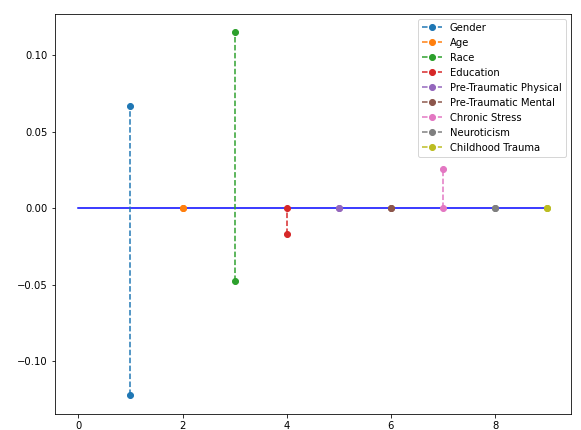}
   \caption{$95\%$ Confidence intervals for interaction effects on PT Distress computed over $1000$ bootstrap resamples.}
     \label{med1_confint}
\end{figure}
 
\begin{figure}[!htb]
   \centering
   \includegraphics[width=0.55\linewidth]{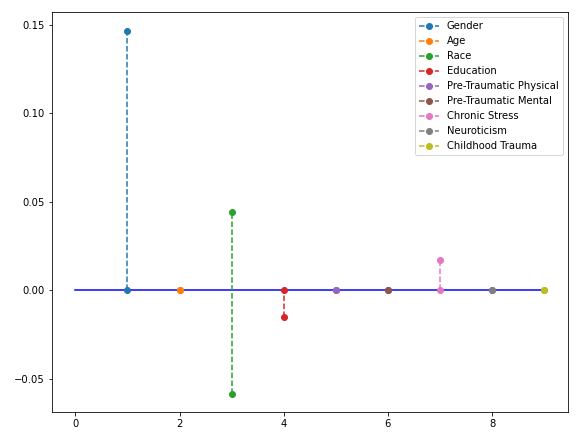}
   \caption{$95\%$ Confidence intervals for interaction effects on ASD computed over $1000$ bootstrap resamples.}
     \label{med2_confint}
\end{figure}
 
\begin{figure}[!htb]
   \centering
   \includegraphics[width=0.55\linewidth]{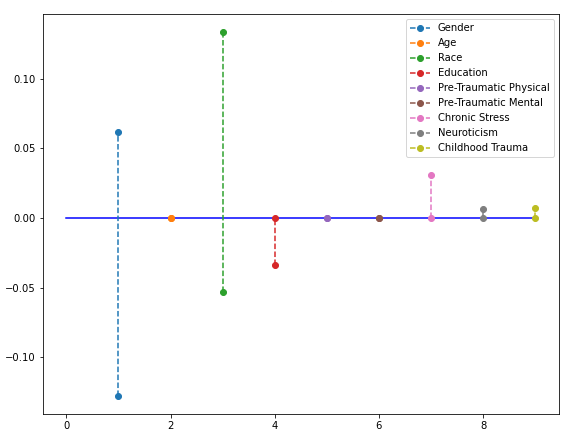}
   \caption{$95\%$ Confidence intervals for interaction effects on PT PTSD computed over $1000$ bootstrap resamples.}
     \label{med3_confint}
\end{figure}
 
\begin{figure}[!htb]
   \centering
   \includegraphics[width=0.55\linewidth]{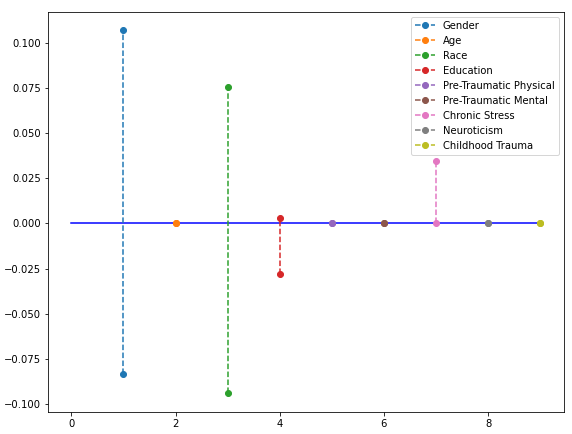}
   \caption{$95\%$ Confidence intervals for interaction effects on PT Depression computed over $1000$ bootstrap resamples.}
     \label{med4_confint}
\end{figure}

 \begin{figure}[htp]
 \centering

\subfloat[Estimated causal graph for males]{%
  \includegraphics[clip,width=.8\columnwidth]{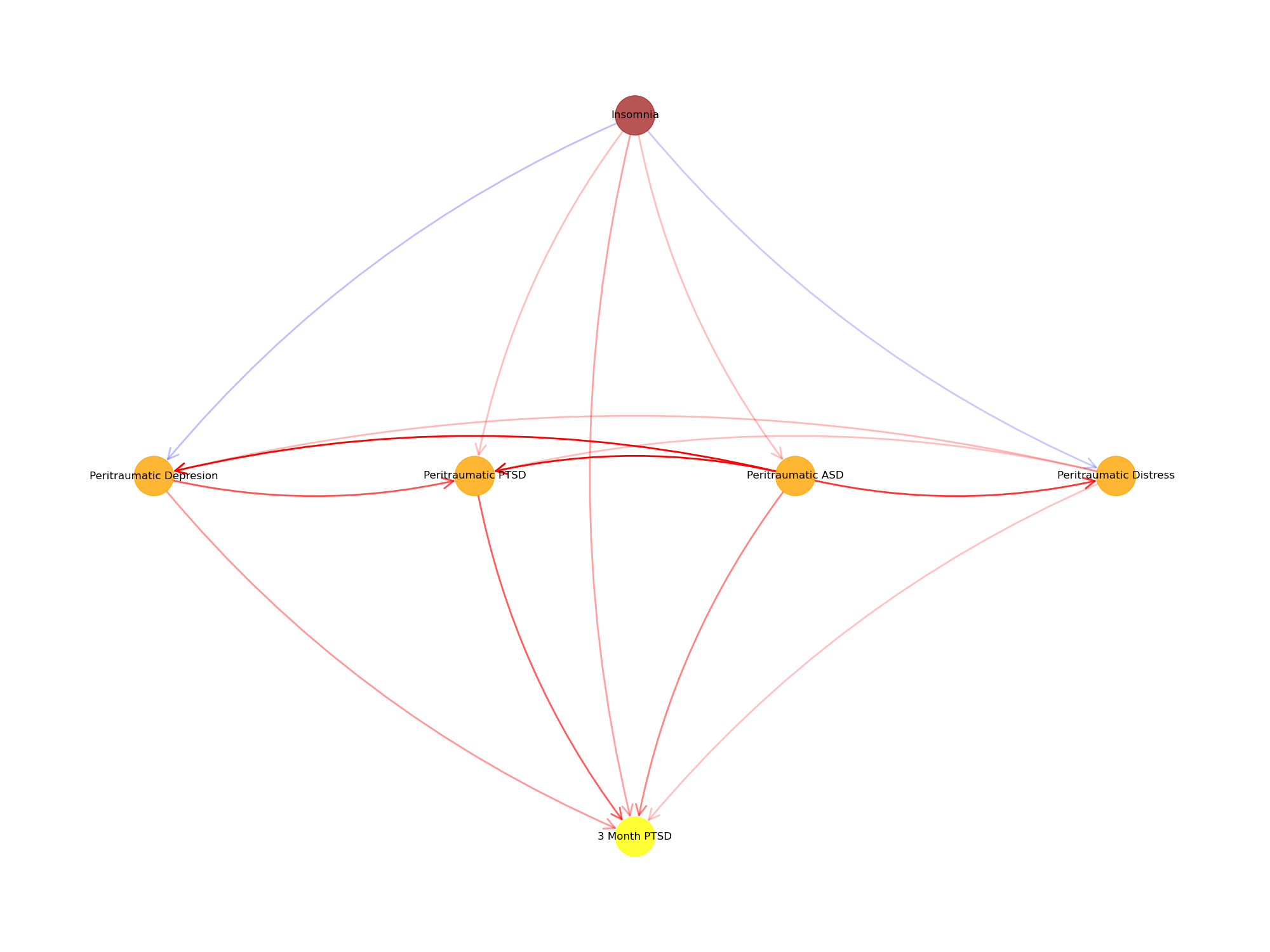}%
  \label{male_graph}
} 

\subfloat[Estimated causal graph for females]{%
  \includegraphics[clip,width=.8\columnwidth]{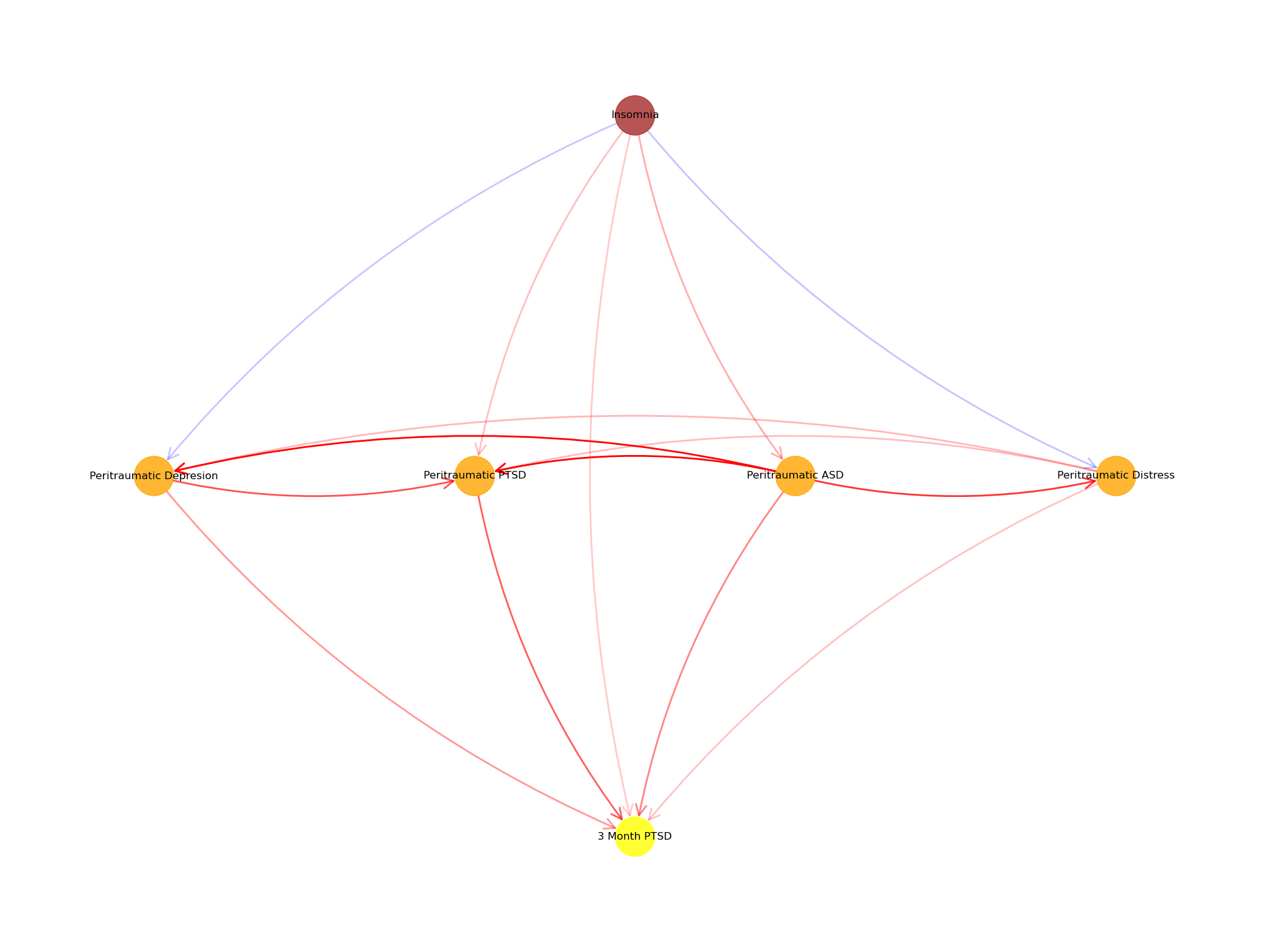}%
  \label{female_graph}
}

\caption{Causal graphs for the male and female demographic. The red node is the event of interest. Orange nodes are mediators. The yellow node is the outcome.}

\end{figure}

 \begin{figure}[htp]
 \centering

\subfloat[Estimated causal graph for the white population.]{%
  \includegraphics[clip,width=.8\columnwidth]{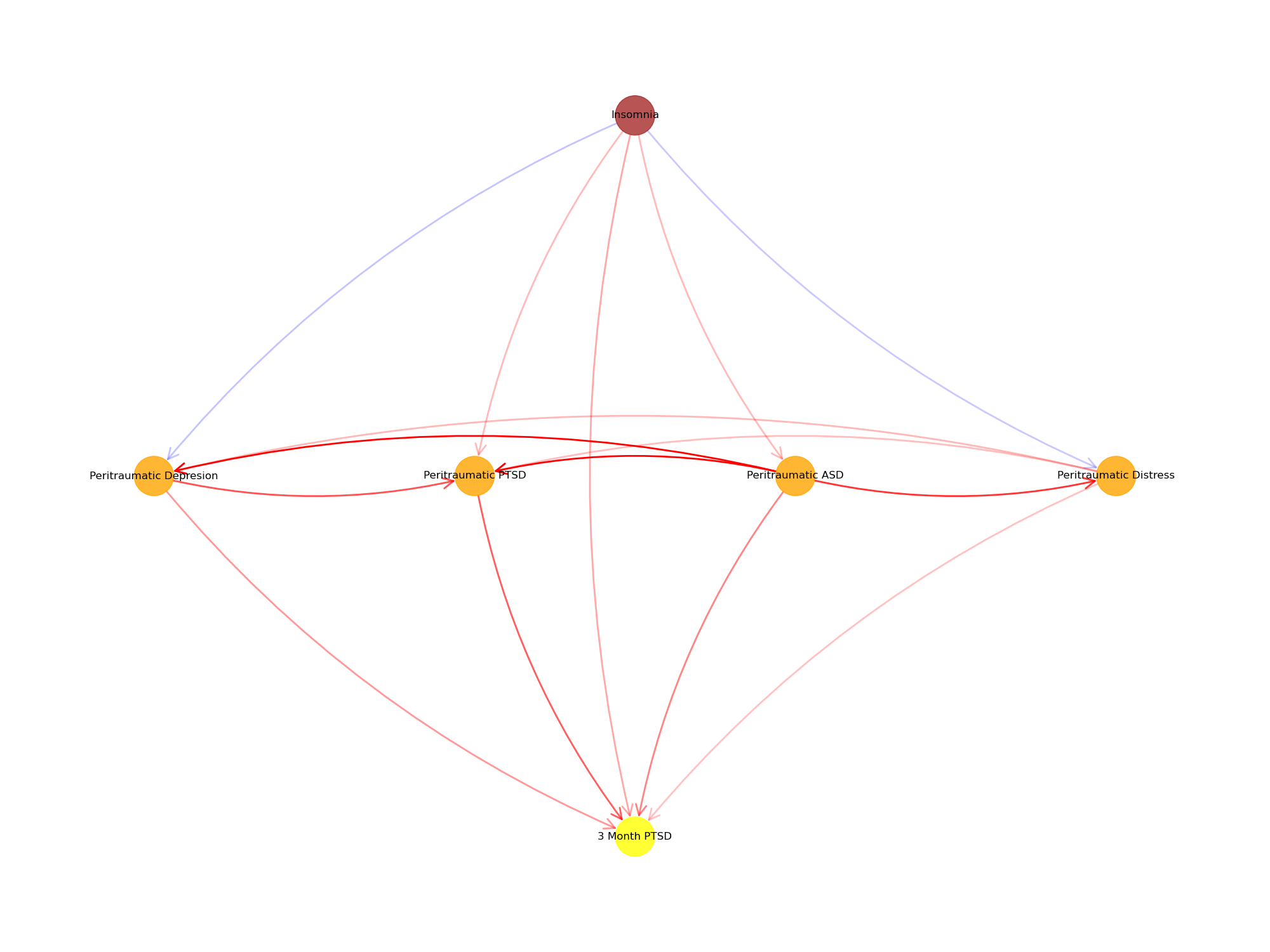}%
  \label{white_graph}
} 

\subfloat[Estimated causal graph for the black population]{%
  \includegraphics[clip,width=.8\columnwidth]{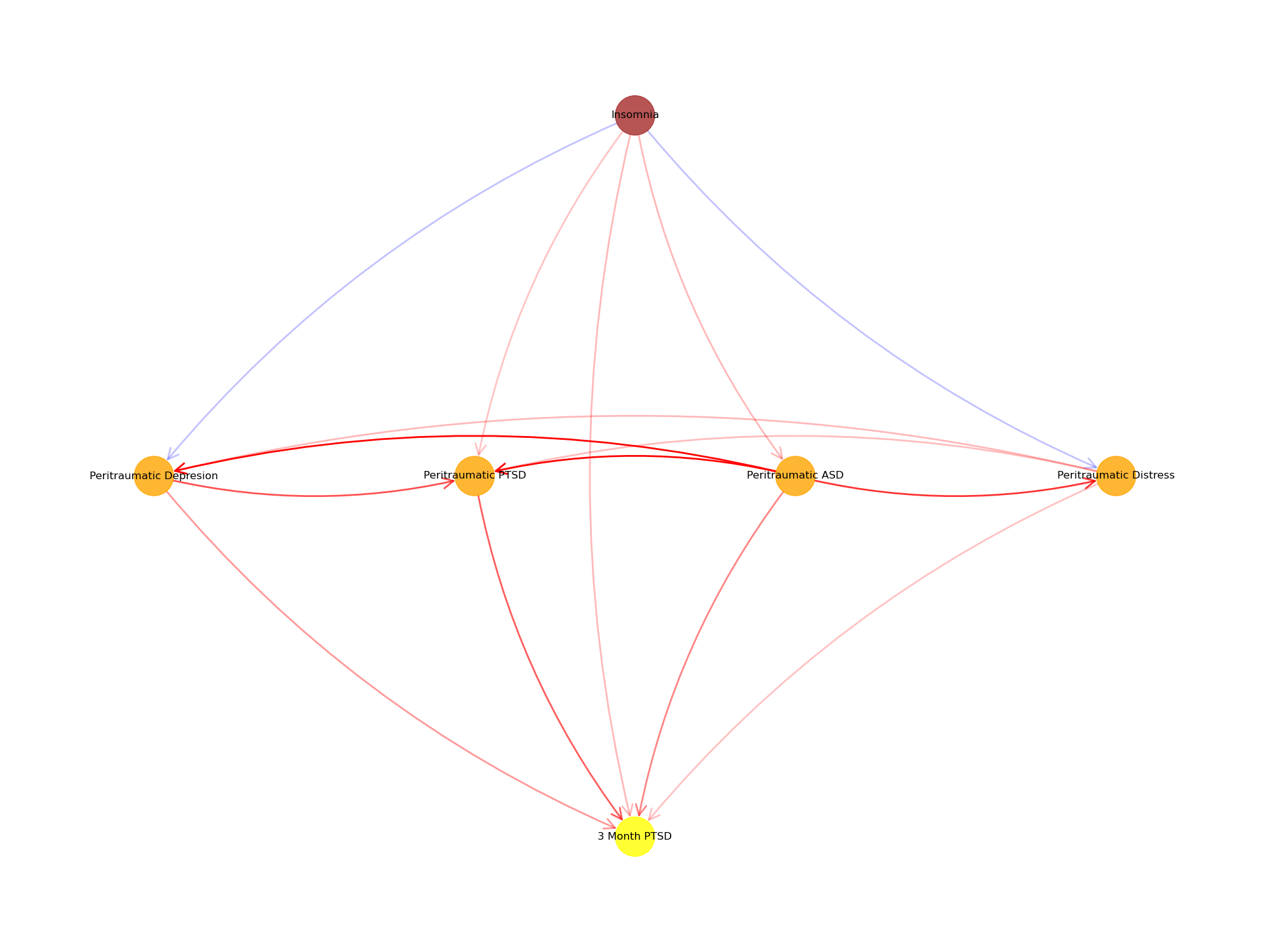}%
  \label{black_graph}
}

\caption{Causal graphs for the white and black demographic. The red node is the event of interest. Orange nodes are mediators. The yellow node is the outcome.}

\end{figure}

 \begin{figure}[htp]
 \centering

\subfloat[Estimated causal graph for the white male population.]{%
  \includegraphics[clip,width=.8\columnwidth]{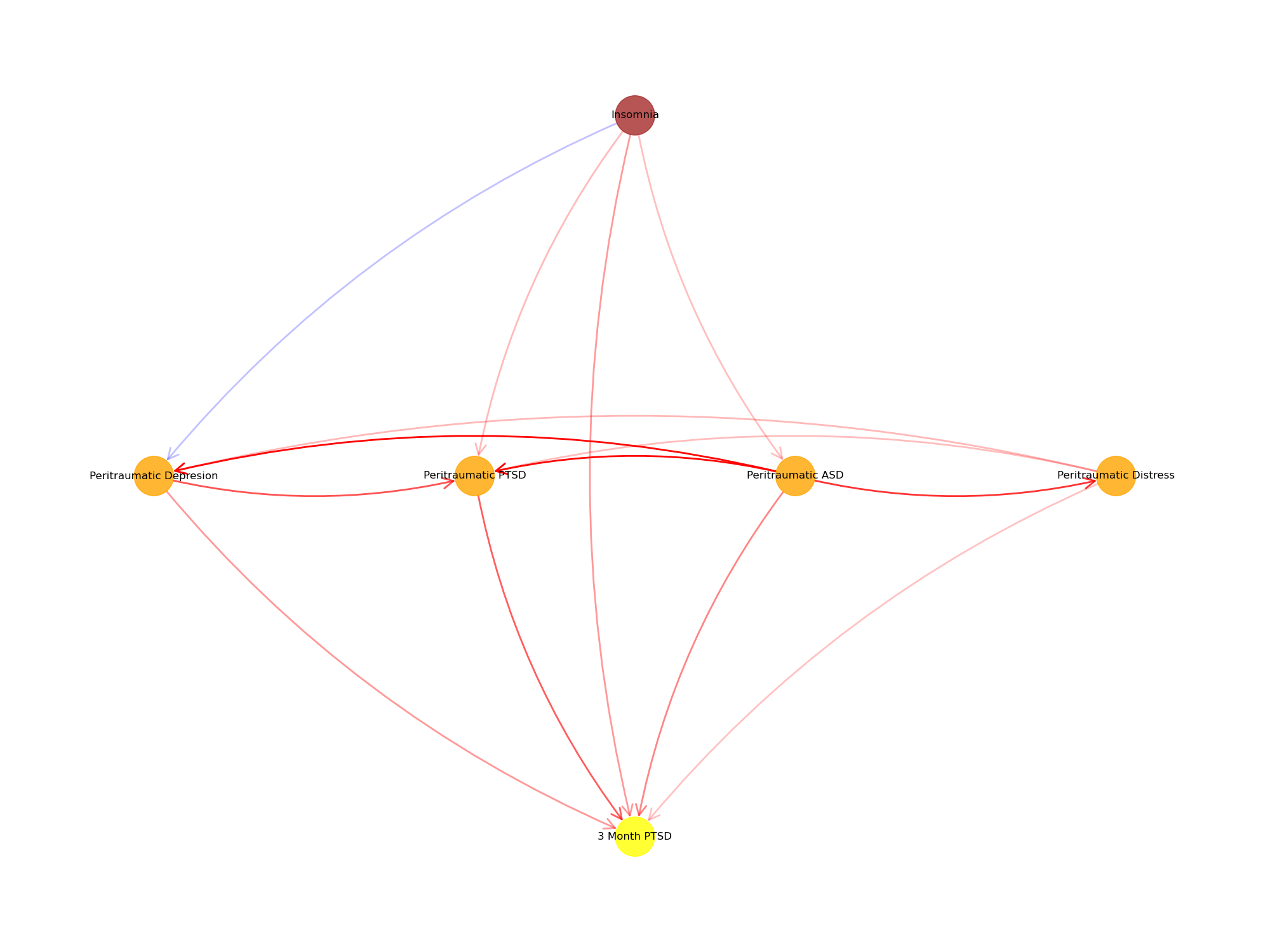}%
  \label{white_male_graph} 
}

\subfloat[Estimated causal graph for the black male population.]{%
  \includegraphics[clip,width=.8\columnwidth]{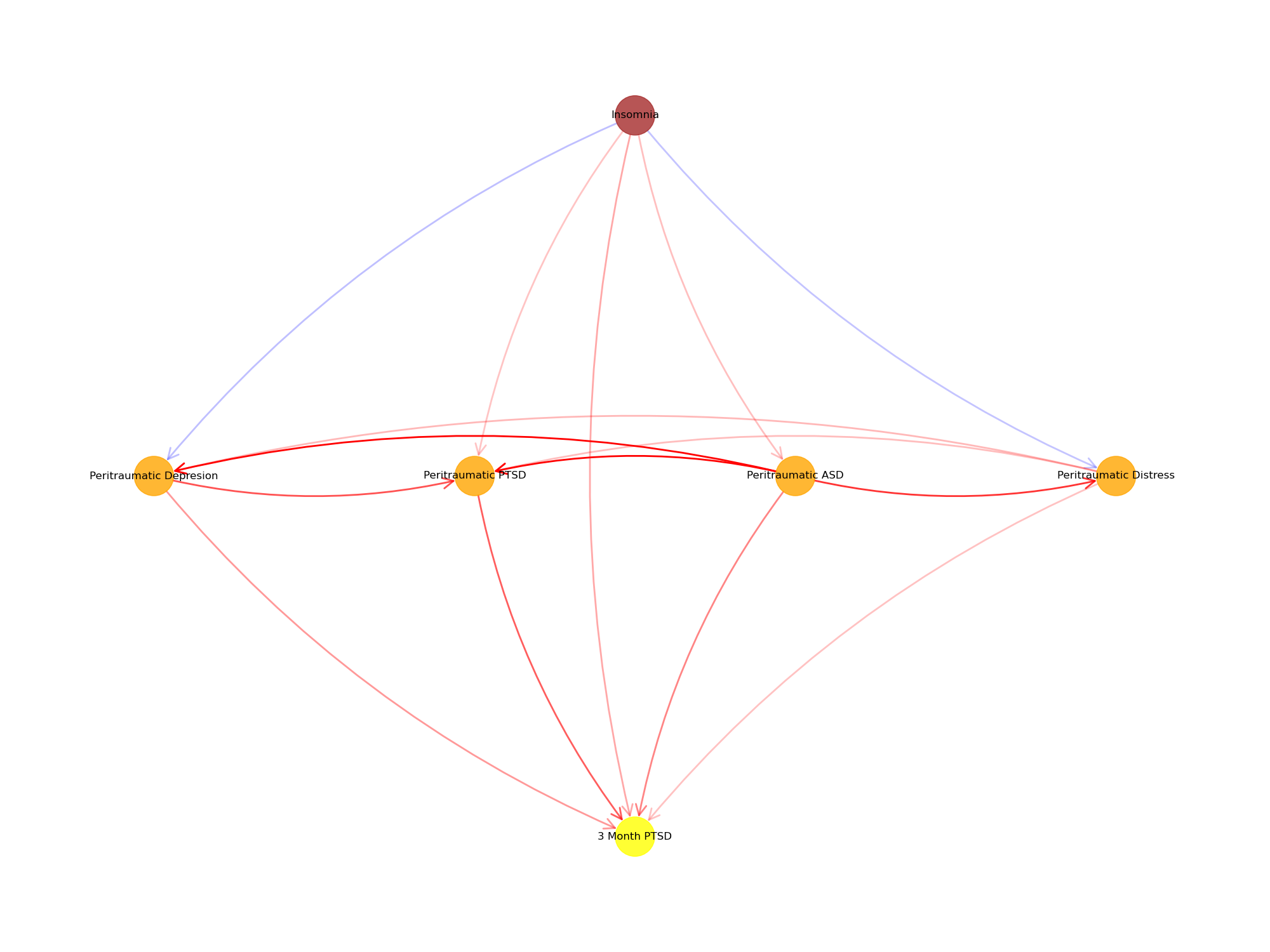}%
  \label{black_male_graph}
}

\caption{Causal graphs for the white and black male demographic. The red node is the event of interest. Orange nodes are mediators. The yellow node is the outcome.}

\end{figure}

 \begin{figure}[htp]
 \centering

\subfloat[Estimated causal graph for the white female population.]{%
  \includegraphics[clip,width=.8\columnwidth]{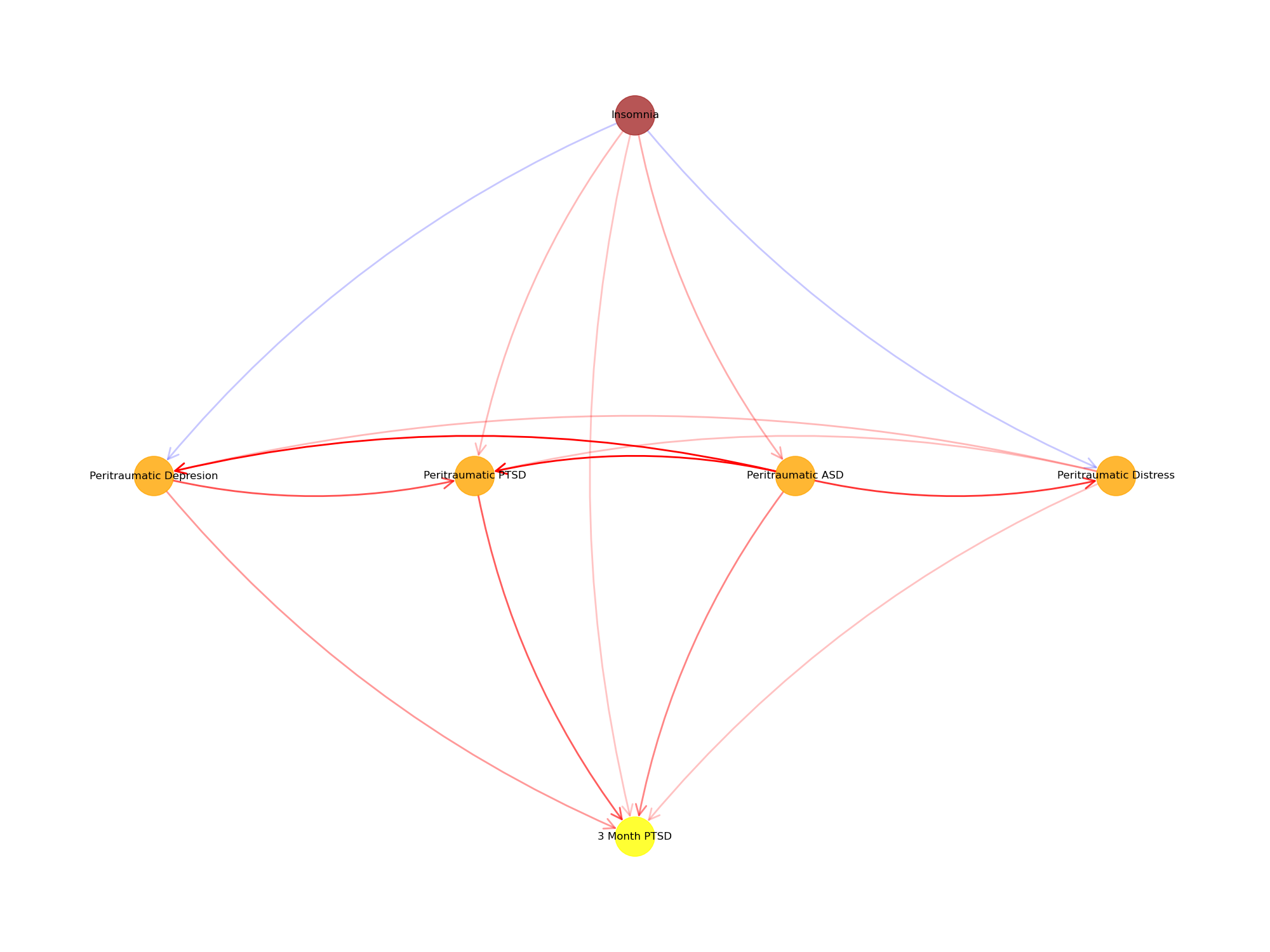}%
  \label{white_female_graph}
} 

\subfloat[Estimated causal graph for the black female population.]{%
  \includegraphics[clip,width=.8\columnwidth]{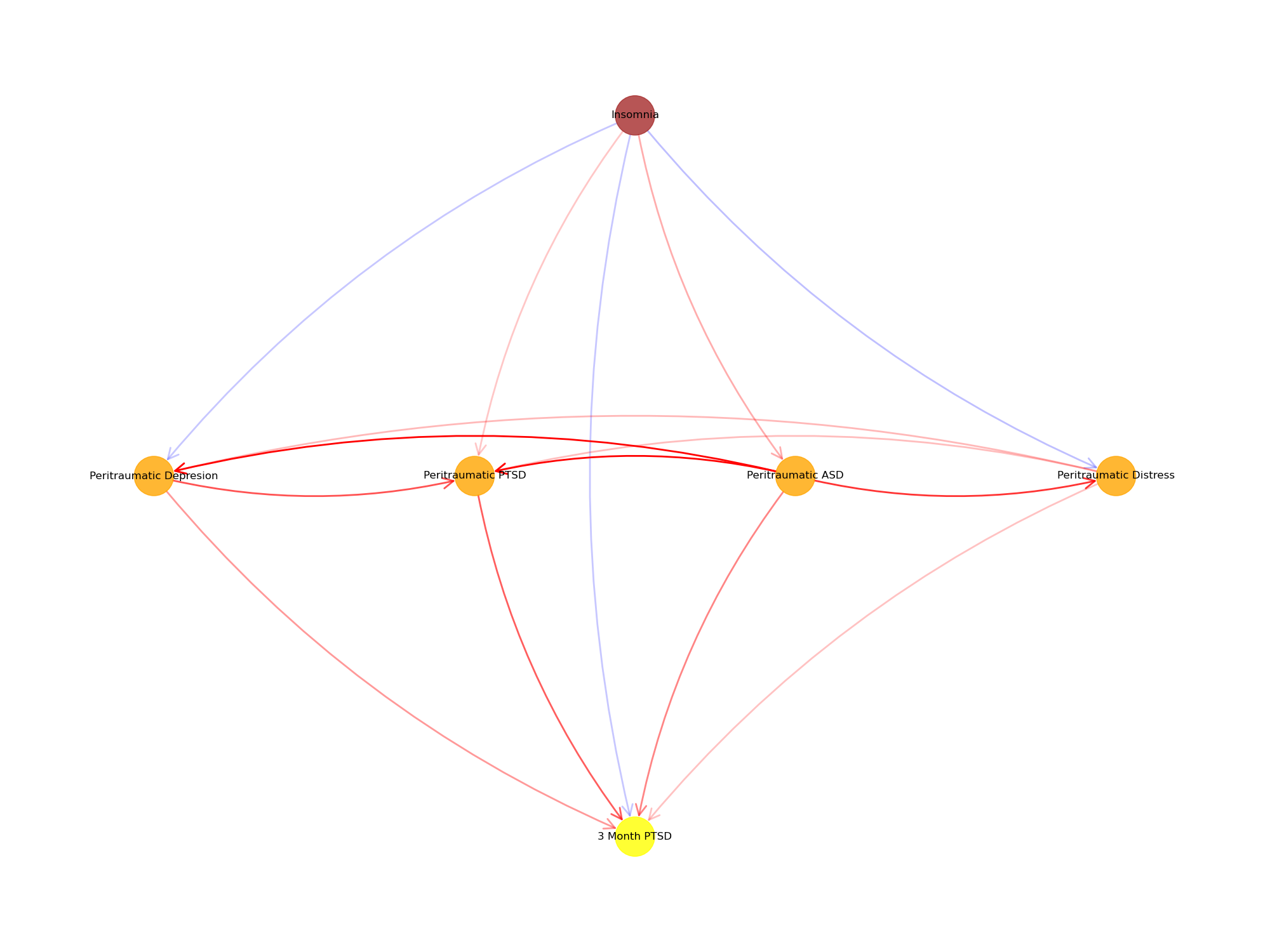}%
  \label{black_female_graph}
}

\caption{Causal graphs for the white and black female demographic. The red node is the event of interest. Orange nodes are mediators. The yellow node is the outcome.}

\end{figure}

\section{Proofs}\label{asec:proof}
\subsection{Proof of Theorem \ref{mainthm}}
\begin{theorem}
Under assumptions (A1-A3) and Model (\ref{LSEM}), we have:
\begin{itemize}
    \item $HDE(\bm{x}) = \gamma_A + \bm{\gamma}_{\bm{X}A}\bm{x}$;
    \item $HIE(\bm{x})=\bm{\gamma_M}(\bm{I}_s - \bm{B_M}^\top)^{-1}\big(\bm{\beta}_A + \bm{B}^\top_{\bm{X}A}\bm{x}\big);$
    \item $HTE(\bm{x})=HDE(\bm{x})+HIE(\bm{x}),$
\end{itemize}
where $\bm{I}_s$ is a $s\times s$ identity matrix and $\bm{x}$ is is the value of $\bm{X}$.
\end{theorem}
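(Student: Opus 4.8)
The plan is to work entirely inside the reduced structural system obtained after applying $do(\bm{X}=\bm{x})$, namely Equation~\eqref{project_HCGs}, and then obtain each effect by writing down the relevant interventional expectation and differentiating in $a$. First I would unpack \eqref{project_HCGs} into its three component equations: $A = \bm{\delta_X}\bm{x} + \epsilon_A$; $\bm{M} = \bm{B_X}^\top\bm{x} + (\bm{\beta}_A + \bm{B}^\top_{\bm{X}A}\bm{x})\,A + \bm{B_M}^\top\bm{M} + \bm{\epsilon_M}$; and $Y = \bm{\gamma_X}\bm{x} + (\gamma_A + \bm{\gamma}_{\bm{X}A}\bm{x})\,A + \bm{\gamma_M}\bm{M} + \epsilon_Y$. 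Solving the middle equation gives the reduced form $\bm{M} = (\bm{I}_s - \bm{B_M}^\top)^{-1}\big[\bm{B_X}^\top\bm{x} + (\bm{\beta}_A + \bm{B}^\top_{\bm{X}A}\bm{x})\,A + \bm{\epsilon_M}\big]$, where $\bm{I}_s - \bm{B_M}^\top$ is invertible because $\bm{B_M}$ is the adjacency block of an acyclic sub-DAG on the mediators, hence nilpotent up to permutation, so the Neumann series $\sum_{k\ge 0}(\bm{B_M}^\top)^k$ terminates. Since the noises are mean-zero and jointly independent, this yields $\bm{m}^{(a)} := E\{\bm{M}\mid do(A=a),\bm{X}=\bm{x}\} = (\bm{I}_s - \bm{B_M}^\top)^{-1}\big[\bm{B_X}^\top\bm{x} + (\bm{\beta}_A + \bm{B}^\top_{\bm{X}A}\bm{x})\,a\big]$.

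For part~1, intervening with $do(A=a,\bm{M}=\bm{m}^{(a')})$ replaces the $A$- and $\bm{M}$-equations by constants, so $E\{Y\mid do(A=a,\bm{M}=\bm{m}^{(a')}),\bm{X}=\bm{x}\} = \bm{\gamma_X}\bm{x} + (\gamma_A + \bm{\gamma}_{\bm{X}A}\bm{x})\,a + \bm{\gamma_M}\bm{m}^{(a')}$, and differentiating in $a$ leaves $HDE(\bm{x}) = \gamma_A + \bm{\gamma}_{\bm{X}A}\bm{x}$. For part~2, under $do(A=a',\bm{M}=\bm{m}^{(a)})$ the outcome equation gives $E\{Y\mid\cdot\} = \bm{\gamma_X}\bm{x} + (\gamma_A + \bm{\gamma}_{\bm{X}A}\bm{x})\,a' + \bm{\gamma_M}\bm{m}^{(a)}$, whose only $a$-dependence is through $\bm{m}^{(a)}$; hence $HIE(\bm{x}) = \bm{\gamma_M}\,\partial\bm{m}^{(a)}/\partial a = \bm{\gamma_M}(\bm{I}_s - \bm{B_M}^\top)^{-1}(\bm{\beta}_A + \bm{B}^\top_{\bm{X}A}\bm{x})$. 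For part~3, computing $E\{Y\mid do(A=a),\bm{X}=\bm{x}\}$ without intervening on $\bm{M}$ lets $\bm{M}$ respond through its reduced form, giving $\bm{\gamma_X}\bm{x} + (\gamma_A + \bm{\gamma}_{\bm{X}A}\bm{x})\,a + \bm{\gamma_M}\bm{m}^{(a)}$; differentiating produces exactly the sum of the two previous expressions, so $HTE(\bm{x}) = HDE(\bm{x}) + HIE(\bm{x})$.

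The step requiring the most care — more a bookkeeping subtlety than a genuine difficulty — is the nested intervention $do(A=a',\bm{M}=\bm{m}^{(a)})$ in the definition of $HIE$: I would argue that, because the model is linear with additive mean-zero noise, the expectation of the nested potential outcome $Y(a',\bm{M}(a))$ given $\bm{X}=\bm{x}$ is obtained by substituting $\bm{m}^{(a)} = E\{\bm{M}(a)\mid\bm{X}=\bm{x}\}$ into the (now constant-$A$) outcome equation, so that no cross moments of $\bm{\epsilon_M}$ and $\epsilon_Y$ survive. Assumptions (A1)--(A3) are precisely what licenses replacing these interventional and nested-counterfactual expectations by the SEM-based computations above (no unmeasured confounding along the treatment, treatment--mediator, and mediator--outcome paths), and I would invoke them at that step. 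The remaining ingredients — invertibility of $\bm{I}_s - \bm{B_M}^\top$, and the block sparsity of $\bm{B}^\top$ recorded in Section~\ref{sec:lsem_intera} that makes \eqref{project_HCGs} the correct post-$do$ system — are routine. Finally, setting $\bm{\gamma}_{\bm{X}A}=\bm{0}$ and $\bm{B}^\top_{\bm{X}A}=\bm{0}$ collapses all three formulas to $\bm{x}$-free constants, which I would note in passing to corroborate Remark~\ref{rem:whyinter}.
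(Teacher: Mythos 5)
Your proposal is correct and follows essentially the same route as the paper's proof: both reduce the linear SEM to the mediator block's reduced form via $(\bm{I}_s - \bm{B_M}^\top)^{-1}$, compute the interventional expectations $E\{\bm{M}\mid do(A=a),\bm{X}=\bm{x}\}$ and $E\{Y\mid do(A=a,\bm{M}=\bm{m}^{(\cdot)}),\bm{X}=\bm{x}\}$ under (A1)--(A3), and read off the three effects (the paper takes unit differences in $a$ rather than derivatives, which is equivalent under linearity, and conditions on $\bm{X}=\bm{x}$ rather than routing through the $do(\bm{X}=\bm{x})$-projected system, which coincides here since $\bm{X}$ has no parents). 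Your explicit justification of the invertibility of $\bm{I}_s - \bm{B_M}^\top$ via acyclicity is a nice touch that the paper leaves implicit.
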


\begin{proof}
We begin with the systems of equations described by  Model (\ref{LSEM}) as
\begin{equation}
\label{system}
    \begin{cases}
\bm{X} &= \bm{\epsilon_X},\\
A &= \bm{\delta_X X} + \epsilon_A,\\
\bm{M} &= \bm{B^\top_X X} + \bm{\beta}_A A + \bm{B^\top_{XA} X}A + \bm{B^\top_M M} + \bm{\epsilon_M},\\
Y &= \bm{\gamma_X X} + \gamma_A A + \bm{\gamma}_{\bm{X}A}\bm{x}A + \bm{\gamma_M M} + \epsilon_Y.
\end{cases}
\end{equation}

Under the assumptions, we have the following:
\begin{align*}
    E[\bm{M}|do(A=a),\bm{X}=\bm{x}] &= E[\bm{M}|A=a,\bm{X}=\bm{x}],\\
    E[Y|do(A=a),\bm{X}=\bm{x}] &= E[Y|A=a,\bm{X}=\bm{x}],\\
    E[Y|do(A=a,\bm{M}=\bm{m}),\bm{X}=\bm{x}] &= E[Y|A=a,\bm{M}=\bm{m},\bm{X}=\bm{x}].
\end{align*}
We can then use Equation \ref{system} to compute each of these expectations:
\begin{align}
\label{A3-computed}
    E[\bm{M}|A=a,\bm{X}=\bm{x}] =\ &\bm{(I_s - B^\top_M)^{-1}} \big(\bm{B^\top_X X} + \bm{\beta}_A A + \bm{B^\top_{XA} X}A\big),
\end{align}

\begin{align}
\label{A1-computed}
    E[Y|A=a,\bm{X}=\bm{x}] =\ &E[\bm{\gamma_X X} + \gamma_A A + \bm{\gamma}_{\bm{X}A}\bm{x}A + \bm{\gamma_M M} + \epsilon_Y|A=a,\bm{X}=\bm{x}]\nonumber\\
    =\ &\bm{\gamma_X x} + \gamma_A a + \bm{\gamma}_{\bm{X}A}\bm{x}a + \bm{\gamma_M}E[\bm{M}|A=a,\bm{X}=\bm{x}]\nonumber\\
    =\ &\bm{\gamma_X x} + \gamma_A a + \bm{\gamma}_{\bm{X}A}\bm{x}a + \bm{\gamma_M}\bm{(I_s - B^\top_M)^{-1}}\big(\bm{B^\top_X x} + \bm{\beta}_A a + \bm{B^\top_{XA} x}a\big),
\end{align}
and
\begin{align}
\label{A2-computed}
    E[Y|&A=a,\bm{M}=\bm{m}^{(a)},\bm{X}=\bm{x}] = \bm{\gamma_X x} + \gamma_A a + \bm{\gamma}_{\bm{X}A}\bm{x}a + \bm{\gamma_M m^{(a)}}.%=\bm{\gamma_X x} + \gamma_A a + \bm{\gamma}_{\bm{X}A}\bm{x}a + \bm{\gamma_M m^{(a)}}.
\end{align}

From Equations \ref{A1-computed} and \ref{A2-computed}, we can compute the HDE and HIE as follows,
\begin{align*}
    HDE =\ &E[Y|do(A=a+1,\bm{M}=\bm{m}^{(a)}),\bm{X}=\bm{x}] - E[Y|do(A=a)|\bm{X}=\bm{x}]\\
    =\ &E[Y|A=a+1,\bm{M}=\bm{m}^{(a)},\bm{X}=\bm{x}] - E[Y|A=a|\bm{X}=\bm{x}]\\
    =\ &\bm{\gamma_X x} + \gamma_A (a+1) + \bm{\gamma}_{\bm{X}A}\bm{x}(a+1) + \bm{\gamma_M m^{(a)}} - (\bm{\gamma_X x} + \gamma_A a + \bm{\gamma}_{\bm{X}A}\bm{x}a + \bm{\gamma_M m^{(a)}})\\
    =\ &\gamma_A + \bm{\gamma}_{\bm{X}A}\bm{x},
\end{align*}
and
\begin{align*}
    HIE =\ &E[Y|do(A=a,\bm{M}=\bm{m}^{(a+1)}),\bm{X}=\bm{x}] - E[Y|do(A=a)|\bm{X}=\bm{x}]\\
    =\ &E[Y|A=a,\bm{M}=\bm{m}^{(a+1)},\bm{X}=\bm{x}] - E[Y|A=a|\bm{X}=\bm{x}]\\
    =\ &\bm{\gamma_X x} + \gamma_A a + \bm{\gamma}_{\bm{X}A}\bm{x}a + \bm{\gamma_M m^{(a+1)}} - (\bm{\gamma_X x} + \gamma_A a + \bm{\gamma}_{\bm{X}A}\bm{x}a + \bm{\gamma_M m^{(a)}})\\
    =\ &\bm{\gamma_M}\bigg( \bigg[\bm{(I_s - B^\top_M)^{-1}}\big(\bm{B^\top_X x} + \bm{\beta}_A (a+1) + \bm{B^\top_{XA} x}(a+1)\big)\bigg] \\
    \ &- \bigg[\bm{(I_s - B^\top_M)^{-1}}\big(\bm{B^\top_X x} + \bm{\beta}_A a + \bm{B^\top_{XA} x}a\big)\bigg]\bigg)\\
    =\ &\bm{\gamma_M}\bm{(I_s - B^\top_M)^{-1}}\big(\bm{\beta}_A + \bm{B^\top_{XA} x}\big).
\end{align*}
And finally, we have the HTE as
\[HTE = HDE + HIE = \gamma_A + \bm{\gamma}_{\bm{X}A}\bm{x} + \bm{\gamma_M}\bm{(I_s - B^\top_M)^{-1}}\big(\bm{\beta}_A + \bm{B^\top_{XA} x}\big).\]
\end{proof}

\subsection{Proof of Theorem \ref{secthm}}

\begin{theorem}
Under assumptions (A1-A3) and  Model (\ref{LSEM}), we have:

\begin{itemize}
    \item $HDM_i(\bm{x}) = \{\bm{\gamma_M}\}_i\{(\bm{I}_s - \bm{B_M}^\top)^{-1}\big(\bm{\beta}_A + \bm{B}^\top_{\bm{X}A}\bm{x}\big)\}_i;$
    \item $\sum_{i=1}^s HDM_i (\bm{x}) = HIE(\bm{x});$
    \item $HIM_i(\bm{x})=HTM_i(\bm{x}) - HDM_i(\bm{x});$
    \item $HTM_i(\bm{x})=HIE(\bm{x}) - HIE_{\mathbb{G}(-i)}(\bm{x})$
\end{itemize}

where $\{\cdot\}_i$ is the $i$th element of a vector and $HIE_{\mathbb{G}(-i)}$ is the HIE under the causal graph $\mathbb{G}(-i)$ in which the $i$th mediator is removed from the original causal graph $\mathbb{G}$.
\end{theorem}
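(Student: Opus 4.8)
The plan is to reduce everything to the explicit linear expressions already computed inside the proof of Theorem~\ref{mainthm}, then do a little bookkeeping. First I would pin down the two building blocks of Definition~\ref{HIM_HDM}. By assumption (A1) and Equation~(\ref{A3-computed}), $E\{M_i\mid do(A=a),\bm{X}=\bm{x}\}=\{(\bm{I}_s-\bm{B_M}^\top)^{-1}(\bm{B_X}^\top\bm{x}+\bm{\beta}_A a+\bm{B}^\top_{\bm{X}A}\bm{x}a)\}_i$, hence $\Delta_i(\bm{x})=\{(\bm{I}_s-\bm{B_M}^\top)^{-1}(\bm{\beta}_A+\bm{B}^\top_{\bm{X}A}\bm{x})\}_i$. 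Since under $do(A=a,M_i=m_i^{(a)}+1,\bm{\Omega}_i=\bm{o}_i^{(a)})$ every mediator is held fixed, the last line of Model~(\ref{LSEM}) gives $\Phi_i(a,\bm{x})=\bm{\gamma_X}\bm{x}+\gamma_A a+\bm{\gamma}_{\bm{X}A}\bm{x}a+\bm{\gamma_M}(\bm{m}^{(a)}+\bm{e}_i)$, where $\bm{m}^{(a)}=E\{\bm{M}\mid do(A=a),\bm{X}=\bm{x}\}$ and $\bm{e}_i$ is the $i$-th unit vector, while Equation~(\ref{A1-computed}) gives $E\{Y\mid do(A=a),\bm{X}=\bm{x}\}=\bm{\gamma_X}\bm{x}+\gamma_A a+\bm{\gamma}_{\bm{X}A}\bm{x}a+\bm{\gamma_M}\bm{m}^{(a)}$. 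Subtracting, the bracket defining $HDM_i$ collapses to $\{\bm{\gamma_M}\}_i$, and multiplying by $\Delta_i(\bm{x})$ gives 1a). Summing 1a) over $i$ turns the coordinatewise product into the matrix product $\bm{\gamma_M}(\bm{I}_s-\bm{B_M}^\top)^{-1}(\bm{\beta}_A+\bm{B}^\top_{\bm{X}A}\bm{x})$, which is $HIE(\bm{x})$ by Theorem~\ref{mainthm}; that is 1b).

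For 2), I would show that both $HDM_i(\bm{x})+HIM_i(\bm{x})$ and $HTM_i(\bm{x})$ equal $\tau_i\,\Delta_i(\bm{x})$, where $\tau_i$ denotes the total causal effect of $M_i$ on $Y$. Adding the two definitions, the $\Phi_i$ terms cancel and $HDM_i+HIM_i=\big[E\{Y\mid do(A=a,M_i=m_i^{(a)}+1),\bm{X}=\bm{x}\}-E\{Y\mid do(A=a),\bm{X}=\bm{x}\}\big]\,\Delta_i(\bm{x})$. Linearity of Model~(\ref{LSEM}) makes $c\mapsto E\{Y\mid do(A=a,M_i=c),\bm{X}=\bm{x}\}$ affine with slope $\tau_i$, and evaluating it at $c=m_i^{(a)}=E\{M_i\mid do(A=a),\bm{X}=\bm{x}\}$ returns $E\{Y\mid do(A=a),\bm{X}=\bm{x}\}$; hence that bracket equals $\tau_i$. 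In $HTM_i$ only $M_i$ is intervened on, and since $A$ is a non-descendant of $M_i$ its distribution is unchanged, so the $A$-contribution cancels in the difference and the bracket is again $\tau_i$ (the base value of $M_i$ and whether or not $A$ is fixed are immaterial, since $do(M_i)$ severs only $M_i$'s incoming edges in either case). Therefore $HTM_i=HDM_i+HIM_i$, which rearranges to 2).

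For 3), I would first identify $\tau_i$ with the $i$-th entry of the row vector $\bm{\gamma_M}(\bm{I}_s-\bm{B_M}^\top)^{-1}$ --- the total effect of $M_i$ on $Y$, obtained by propagating the direct mediator-to-outcome effects $\bm{\gamma_M}$ back along the mediator sub-DAG --- so that $HTM_i(\bm{x})=\{\bm{\gamma_M}(\bm{I}_s-\bm{B_M}^\top)^{-1}\}_i\,\{(\bm{I}_s-\bm{B_M}^\top)^{-1}(\bm{\beta}_A+\bm{B}^\top_{\bm{X}A}\bm{x})\}_i$. The identity then follows from the path-sum reading of $(\bm{I}_s-\bm{B_M}^\top)^{-1}=\sum_{k\ge 0}(\bm{B_M}^\top)^k$: $HIE(\bm{x})$ is the sum of path weights over all directed $A\to Y$ paths through at least one mediator; $HIE_{\mathbb{G}(-i)}(\bm{x})$ is the same sum restricted to paths that avoid $M_i$ (deleting $M_i$ removes exactly those paths and introduces no new mediators or paths); and $\Delta_i(\bm{x})\,\tau_i$ is the sum over the paths that do pass through $M_i$, because acyclicity forces any $A\to Y$ path to meet $M_i$ at most once and hence to factor uniquely as an $A\to M_i$ segment followed by an $M_i\to Y$ segment that share only $M_i$. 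Therefore $HIE(\bm{x})-HIE_{\mathbb{G}(-i)}(\bm{x})=\Delta_i(\bm{x})\,\tau_i=HTM_i(\bm{x})$. An equivalent algebraic route relabels $M_i$ as the last coordinate and expands $(\bm{I}_s-\bm{B_M}^\top)^{-1}$ by the Schur-complement block-inversion formula; the bottom-right scalar equals $\big(1-\bm{r}^\top(\bm{I}_{s-1}-\bm{N})^{-1}\bm{d}\big)^{-1}=1$ precisely because that inner product counts cycles through $M_i$, of which a DAG has none, and the remaining blocks reassemble into $HIE-HIE_{\mathbb{G}(-i)}$.

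The step I expect to be the main obstacle is 3): making it rigorous that ``removing $M_i$ from $\mathbb{G}$'' acts on the closed form $\bm{\gamma_M}(\bm{I}_s-\bm{B_M}^\top)^{-1}(\bm{\beta}_A+\bm{B}^\top_{\bm{X}A}\bm{x})$ exactly by deleting the $i$-th row and column of each factor, and that the resulting change in the inverse is accounted for entirely by paths through $M_i$, with no stray multiplicative correction. The path-counting argument handles this transparently but relies essentially on the acyclicity assumed in Section~\ref{sec:assump}; that same fact is what makes the Schur complement trivial in the algebraic variant. By contrast, 1) and 2) are routine once the interventional expectations from the proof of Theorem~\ref{mainthm} are available.
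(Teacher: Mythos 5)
Your proposal is correct, and for results 1a), 1b), and 2) it follows essentially the same route as the paper: you compute $\Delta_i(\bm{x})$ from the interventional mean of $\bm{M}$ (the paper's Equation \ref{A3-computed}), collapse the bracket in $HDM_i$ to $\{\bm{\gamma_M}\}_i$ via Equations \ref{A1-computed} and \ref{A2-computed}, sum over $i$ to recover $HIE(\bm{x})$, and obtain the decomposition $HTM_i=HDM_i+HIM_i$ by arguing that the $do(M_i)$ bracket is insensitive to the base value $m_i$ and to whether $A$ is also fixed --- exactly the paper's argument, with your linearity justification stated a bit more explicitly. Where you genuinely diverge is result 3). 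The paper gets it almost for free by \emph{identifying} $HTM_i$ with $HTE(\bm{x})-HTE_{\mathbb{G}(-i)}(\bm{x})$ (reading the total mediation effect as ``the change in total treatment effect when $M_i$ is removed from the graph'') and then cancelling the direct effects, so the equality with $HIE-HIE_{\mathbb{G}(-i)}$ rests on that interpretive step rather than on Definition \ref{HIM_HDM}. You instead derive the closed form $HTM_i(\bm{x})=\{\bm{\gamma_M}(\bm{I}_s-\bm{B_M}^\top)^{-1}\}_i\,\Delta_i(\bm{x})$ directly from the definition and then prove, via the Neumann-series/path-sum expansion of $(\bm{I}_s-\bm{B_M}^\top)^{-1}$ (or equivalently the Schur-complement computation, trivialized by acyclicity), that deleting $M_i$ from the closed form of $HIE$ removes exactly the paths through $M_i$, whose total weight factors uniquely as $\Delta_i(\bm{x})$ times the total $M_i\!\to\!Y$ effect. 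Your route is longer but more self-contained: it makes rigorous the step the paper takes as an interpretation, makes explicit where acyclicity is used, and yields the explicit formula for $HTM_i$ as a byproduct; the paper's route buys brevity at the cost of leaning on the graph-removal reading of $HTM_i$.
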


\begin{proof}

We begin by demonstrating that the $HTM_i$ can be decomposed into $HDM_i$ and $HIM_i$ using the definitions. As the second term in the product,
\[E\{M_i|do(A=a+1),\bm{X}=\bm{x}\}-E\{M_i|do(A=a),\bm{X}=\bm{x}\},\]
is included in all mediation effects, it is only necessary to show that
\begin{align}
    \label{IMEsec}
E\{Y|do(M_i=m_i+1),\bm{X}=\bm{x}\}-E\{Y|do(M_i=m_i),\bm{X}=\bm{x}\}
\end{align}
can be decomposed into
\begin{align}
\label{IDMsec}
E[Y| do(A=a,M_i=m_i^{(a)}+1,\bm{\Omega}_i=\bm{o}_i^{(a)}),\bm{X}=\bm{x}]  - E[Y|do(A=a),\bm{X}=\bm{x}]
\end{align}
and
\begin{align}
\label{HIMsec}
E[Y|do(A=a,M_i=m_i^{(a)}+1),\bm{X}=\bm{x}] - E[Y|do(A=a,M_i=m_i^{(a)}+1,\bm{\Omega}_i=\bm{o}_i^{(a)}),\bm{X}=\bm{x}].
\end{align}

Adding \ref{IDMsec} and \ref{HIMsec}, we have
\begin{align*}
    E[Y| do(A=a,M_i=m_i^{(a)}+1),\bm{X}=\bm{x}] - E[Y|do(A=a),\bm{X}=\bm{x}]
\end{align*}
which can be shown is equivalent to \ref{IMEsec} on the interventional level in the following manner:
\begin{align*}
    \ & E\{Y|do(M_i=m_i+1),\bm{X}=\bm{x}\} -E\{Y|do(M_i=m_i),\bm{X}=\bm{x}\}\\
    =\ &E\{Y|do(M_i=m_i^{(a)}+1),\bm{X}=\bm{x}\}-E\{Y|do(M_i=m_i^{(a)}),\bm{X}=\bm{x}\}\\
    =\ &E\{Y|do(A=a,M_i=m_i^{(a)}+1),\bm{X}=\bm{x}\} - E\{Y|do(A=a,M_i=m_i^{(a)}),\bm{X}=\bm{x}\}\\
    =\ &E[Y|do(A=a,M_i=m_i^{(a)}+1),\bm{X}=\bm{x}] - E[Y|do(A=a),\bm{X}=\bm{x}],
\end{align*}
where the first equality is true because the mediator value chosen, $m_i$, is arbitrary and the effect of increasing this value by a unit is assumed to be constant. The latter equalities are true by definition. Thus we have that
\begin{equation}
    \label{imedecomp}
    HTM_i(\bm{x}) = HDM_i(\bm{x}) + HIM_i(\bm{x}).
\end{equation}
This means that if we can compute the HDM for the $i$th mediator, then we can use the corresponding HTM to calculate the HIM for that mediator. From Equation \ref{A3-computed}, we have
\begin{align*}
    &   E\{M_i|do(A=a+1),\bm{X}=\bm{x}\}-E\{M_i|do(A=a),\bm{X}=\bm{x}\}\\
     =& \{\bm{(I_s - B^\top_M)^{-1}}\big(\bm{B^\top_X x} + \bm{\beta}_A(a+1) + \bm{B^\top_{XA} x}(a+1)\big)\}_i \\
     &- \{\bm{(I_s - B^\top_M)^{-1}}\big(\bm{B^\top_X x} + \bm{\beta}_Aa + \bm{B^\top_{XA} x}a\big)\}_i\\
    =& \{\bm{(I_s - B^\top_M)^{-1}}\big(\bm{\beta}_A + \bm{B^\top_{XA} x}\big)\}_i.
\end{align*}

Likewise, from Equation \ref{A2-computed}, we have
\begin{align*}
   \ & E[Y|do(A=a,M_i=m_i^{(a)}+1,\bm{\Omega}_i=\bm{o}_i^{(a)}),\bm{X}=\bm{x}] - E[Y|do(A=a),\bm{X}=\bm{x}]\\
    =\ &\big(\bm{\gamma_X x} + \gamma_A a + \bm{\gamma}_{\bm{X}A}\bm{x}a + \bm{\gamma_M (m^{(a)} + e_i)}\big)- \big(\bm{\gamma_X x} + \gamma_A a + \bm{\gamma}_{\bm{X}A}\bm{x}a + \bm{\gamma_M m^{(a)}}\big)\\
    =\ &\bm{\gamma_M e_i} = \{\bm{\gamma_M}\}_i,
\end{align*}
where $\bm{e_i}$ is the $i$th basis vector in $\bm{R}^s$. Putting these all together, we can compute the HDM of the $i$-th mediator:
\begin{align*}
    HDM_i(\bm{x}) =\ &\bigg[E[M_i|do(A=a+1),\bm{X}=\bm{x}] - E[M_i|do(A=a),\bm{X}=\bm{x}]\bigg] \\
    &\times \bigg[ E[Y|do(A=a,M_i=m_i^{(a)}+1,\bm{\Omega}_i=\bm{o}_i^{(a)}),\bm{X}=\bm{x}] - E[Y|do(A=a),\bm{X}=\bm{x}]\bigg]\\
    =\ &\{\bm{\gamma_M}\}_i \times \{\bm{(I_s - B^\top_M)^{-1}}\big(\bm{\beta}_A + \bm{B^\top_{XA} x}\big)\}_i,
\end{align*}

It is then easy to see that
\begin{align*}
\sum_{i=1}^sHDM_i(\bm{x}) =& \sum_{i=1}^s \{\bm{\gamma_M}\}_i \times \{\bm{(I_s - B^\top_M)^{-1}}\big(\bm{\beta}_A + \bm{B^\top_{XA} x}\big)\}_i \\
=& \bm{\gamma_M}\bm{(I_s - B^\top_M)^{-1}}\big(\bm{\beta}_A + \bm{B^\top_{XA} x}\big) = HIE(\bm{x})
\end{align*}
By definition we have that the HIM for the $i$-th mediator is
\[HIM_i(\bm{x}) = HTM_i(\bm{x}) - HDM_i(\bm{x}).\]
 
Thus, all that is left is the $HTM_i$ which can be interpreted as the effect of the treatment $A$ on the outcome $Y$ that is mediated through mediator $M_i$, or inversely as the change in total treatment effect that is due to $M_i$ being removed from the causal graph. Thus it can be calculated as follows,
\begin{align*}
   & HTM_i(\bm{x}) = HTE(\bm{x}) - HTE_{(i)}(\bm{x})\\
    &= \big\{HDE(\bm{x})+HIE(\bm{x})\big\}-\big\{HDE_{\mathbb{G}(-i)}(\bm{x})+HIE_{\mathbb{G}(-i)}(\bm{x})\big\}\\
    &= \big\{HDE(\bm{x})-HDE_{\mathbb{G}(-i)}(\bm{x})\big\} +  \big\{HIE(\bm{x})-HIE_{\mathbb{G}(-i)}(\bm{x})\big\}\\
    &= \big\{(\gamma_A + \bm{\gamma}_{\bm{X}A}\bm{x})-(\gamma_A + \bm{\gamma}_{\bm{X}A}\bm{x})\big\} +  \big\{HIE(\bm{x})-HIE_{\mathbb{G}(-i)}(\bm{x})\big\}\\
    &= HIE(\bm{x})-HIE_{\mathbb{G}(-i)}(\bm{x}).
\end{align*}
\end{proof}

\subsection{Proof of Theorem \ref{asec:mainthm}}

\begin{proof}
Similar to before, We begin with the systems of equations described Model \ref{NonLin-SEM}
\begin{equation*}
    \begin{cases}
\bm{X} &= \bm{\epsilon_X},\\
A &= \bm{\delta_X f_A(X)} + \epsilon_A,\\
\bm{M} &= \bm{B^\top_X f_{Mx}(X)} + \bm{\beta}^\top_A \bm{f_{Ma}}(A) + \bm{B^\top_{XA} f_{Mxa}(X},A) + \bm{B^\top_M M} + \bm{\epsilon_M},\\
Y &= \bm{\gamma_X f_{Yx}(X)} + \gamma_A f_{Ya}(A) + \bm{\gamma}_{\bm{X}A}f_{Yxa}(\bm{X},A) + \bm{\gamma_M f_{Ym}(M)} + \epsilon_Y.
\end{cases}
\end{equation*}

As before, Under the assumptions, we have the following:
\begin{align*}
    E[\bm{M}|do(A=a),\bm{X}=\bm{x}] &= E[\bm{M}|A=a,\bm{X}=\bm{x}],\\
    E[Y|do(A=a),\bm{X}=\bm{x}] &= E[Y|A=a,\bm{X}=\bm{x}],\\
    E[Y|do(A=a,\bm{M}=\bm{m}),\bm{X}=\bm{x}] &= E[Y|A=a,\bm{M}=\bm{m},\bm{X}=\bm{x}].
\end{align*}
Applying Model \ref{NonLin-SEM}, we can compute each of these expectations:
\begin{align}
\label{A3-computed2}
    E[\bm{M}|A=a,\bm{X}=\bm{x}] =\ &\bm{(I_s - B^\top_M)^{-1}} \big(\bm{B^\top_X f_{Mx}(x)} + \bm{\beta}^\top_A \bm{f_{Ma}}(a) + \bm{B^\top_{XA} f_{Mxa}(x},a)\big),
\end{align}

\begin{align}
\label{A1-computed2}
    E[Y|A=a,\bm{X}=\bm{x}] =\ &E[\bm{\gamma_X f_{Yx}(X)} + \gamma_A f_{Ya}(A) + \bm{\gamma}_{\bm{X}A}f_{Yxa}(\bm{X},A) + \bm{\gamma_M f_{Ym}(M)} + \epsilon_Y|A=a,\bm{X}=\bm{x}]\nonumber\\
    =\ &\bm{\gamma_X f_{Yx}(x)} + \gamma_A f_{Ya}(a) + \bm{\gamma}_{\bm{X}A}f_{Yxa}(\bm{x},a) + \bm{\gamma_M}E[\bm{f_{Ym}(M)}|A=a,\bm{X}=\bm{x}],
\end{align}

\begin{align}
\label{A2-computed2}
    E[Y|&A=a,\bm{M}=\bm{m}^{(a)},\bm{X}=\bm{x}]=\bm{\gamma_X f_{Yx}(x)} + \gamma_A f_{Ya}(a) + \bm{\gamma}_{\bm{X}A}f_{Yxa}(\bm{x},a) + \bm{\gamma_M} \bm{f_{Ym}}(m^{(a)}).
\end{align}

Where $\bm{m^{(a)}}$ is the value $M$ would take given $A=a$. From Equations \ref{A1-computed2} and \ref{A2-computed2}, we can compute the HDE and HIE as follows,
\begin{align*}
    HDE(\bm{x},a) =\ &\frac{\delta E[Y|do(A=a,\bm{M}=\bm{m}^{(a)}),\bm{X}=\bm{x}]}{\delta a}\\
    =\ &\frac{\delta E[Y|A=a,\bm{M}=\bm{m}^{(a)},\bm{X}=\bm{x}]}{\delta a}\\
    =\ &\frac{\delta \bigg[\bm{\gamma_X f_{Yx}(x)} + \gamma_A f_{Ya}(a) + \bm{\gamma}_{\bm{X}A}f_{Yxa}(\bm{x},a) + \bm{\gamma_M} \bm{f_{Ym}}(m^{(a)})\bigg]}{\delta a}\\
    =\ &\gamma_A\frac{\delta f_{Ya}(a)}{\delta a} + \bm{\gamma}_{\bm{X}A}\frac{\delta f_{Yxa}(\bm{x},a)}{\delta a},
\end{align*}
and
\begin{align*}
    HIE(\bm{x},a) =\ &\frac{\delta E[Y|do(A=a,\bm{M}=\bm{m}^{(a)}),\bm{X}=\bm{x}]}{\delta \bm{m}^{(a)}}\frac{\delta\bm{m}^{(a)}}{\delta a}\\
    =\ &\frac{\delta E[Y|A=a,\bm{M}=\bm{m}^{(a)},\bm{X}=\bm{x}]}{\delta \bm{m}^{(a)}}\frac{\delta\bm{m}^{(a)}}{\delta a}\\
    =\ &\frac{\delta \bigg[ \bm{\gamma_X f_{Yx}(x)} + \gamma_A f_{Ya}(a) + \bm{\gamma}_{\bm{X}A}f_{Yxa}(\bm{x},a) + \bm{\gamma_M} \bm{f_{Ym}}(m^{(a)}) \bigg]}{\delta \bm{m}^{(a)}}\frac{\delta\bm{m}^{(a)}}{\delta a}\\
    =\ &\bm{\gamma_M}\frac{\delta \bm{f_{Ym}}(m^{(a)})}{\delta \bm{m}^{(a)}}\frac{\delta\bm{m}^{(a)}}{\delta a}.
\end{align*}
where
\[\frac{\delta \bm{m}^{(a)}}{\delta a} = \frac{\delta E[\bm{M}|A=a,\bm{X}=\bm{x}]}{\delta a} = \bm{(I_s - B^\top_M)^{-1}} \frac{\delta[\bm{\beta}^\top_A \bm{f_{Ma}}(a) + \bm{B^\top_{XA} f_{Mxa}(x},a)]}{\delta a}.\]
\end{proof}

\subsection{Proof of Theorem \ref{asec:secthm}}

\begin{proof}
As it was shown in the proof of theorem \ref{secthm} that the computation of the HIM and HTM is model-independent and does not have an explicit form, we only give the computation of the HDM here. From Equation \ref{A3-computed2} and \ref{A2-computed2} and Model \ref{asec:mainthm}, we have the following:
\begin{align*}
    HDM_i(\bm{x}) =\ &\frac{\delta E[Y|do(A=a,M_i=m_i^{(a)},\bm{\Omega}_i=\bm{o}_i^{(a)}),\bm{X}=\bm{x}]}{\delta m_i^{(a)}} \times \frac{\delta E[M_i|do(A=a),\bm{X}=\bm{x}]}{\delta a} \\
     =\ &\frac{\delta E[Y|A=a,M_i=m_i^{(a)},\bm{\Omega}_i=\bm{o}_i^{(a)},\bm{X}=\bm{x}]}{\delta m_i^{(a)}} \times \frac{\delta E[M_i|A=a,\bm{X}=\bm{x}]}{\delta a} \\
     =\ &\frac{\delta \bigg[ \bm{\gamma_X f_{Yx}(x)} + \gamma_A f_{Ya}(a) + \bm{\gamma}_{\bm{X}A}f_{Yxa}(\bm{x},a) + \bm{\gamma_M f_{Ym}(m^{(a)})} \bigg]}{\delta m_i^{(a)}}\frac{\delta \bm{m}^{(a)}}{\delta a}\\
     =\ &\{\bm{\gamma_M}\}_i \frac{\delta E[\bm{f_{Ym}(m^{(a)})}]}{\delta m_i^{(a)}}\frac{\delta \bm{m}^{(a)}}{\delta a} .
\end{align*}
\end{proof}

\subsection{Proof of Theorem \ref{app_thm_XM}}

\begin{proof}
We begin with the systems of equations described in Section \ref{XMext}:
\begin{equation}
\label{LSEMXM}
    \begin{cases}
\bm{X} &= \bm{\epsilon_X},\\
A &= \bm{\delta_X X} + \epsilon_A,\\
\bm{M} &= \bm{B^\top_X X} + \bm{\beta}_A A + \bm{B^\top_{XA} X}A + \bm{B^\top_M M} + \bm{\epsilon_M},\\
Y &= \bm{\gamma_X X} + \gamma_A A + \bm{\gamma}_{\bm{X}A}\bm{X}A + \bm{\gamma_M M} + \bm{X}^\top\bm{\Gamma_{XM}}\bm{M} + \epsilon_Y.
\end{cases}
\end{equation}

Noting that $\bm{X}^\top\bm{\Gamma_{XM}}\bm{M} = vec(\bm{\Gamma_{XM}})^\top(\bm{X\otimes M})$. Under the assumptions, we have the following:
\begin{align*}
    E[\bm{M}|do(A=a),\bm{X}=\bm{x}] &= E[\bm{M}|A=a,\bm{X}=\bm{x}],\\
    E[Y|do(A=a),\bm{X}=\bm{x}] &= E[Y|A=a,\bm{X}=\bm{x}],\\
    E[Y|do(A=a,\bm{M}=\bm{m}),\bm{X}=\bm{x}] &= E[Y|A=a,\bm{M}=\bm{m},\bm{X}=\bm{x}].
\end{align*}
We can then use Equation \ref{LSEMXM} to compute each of these expectations:
\begin{align}
\label{A3-computedXM}
    E[\bm{M}|A=a,\bm{X}=\bm{x}] =\ &\bm{(I_s - B^\top_M)^{-1}} \big(\bm{B^\top_X x} + \bm{\beta}_A a + \bm{B^\top_{XA} x}a\big),
\end{align}

\begin{align}
\label{A1-computedXM}
    E[Y|A=a,\bm{X}=\bm{x}] =\ &E[\bm{\gamma_X X} + \gamma_A A + \bm{\gamma}_{\bm{X}A}\bm{x}A + \bm{\gamma_M M} + \bm{X}^\top\bm{\Gamma_{XM}}\bm{M} + \epsilon_Y|A=a,\bm{X}=\bm{x}]\nonumber\\
    =\ &\bm{\gamma_X x} + \gamma_A a + \bm{\gamma}_{\bm{X}A}\bm{x}a + (\bm{\gamma_M} + \bm{x}^\top\bm{\Gamma_{XM}})E[\bm{M}|A=a,\bm{X}=\bm{x}]\nonumber\\
    =\ &\bm{\gamma_X x} + \gamma_A a + \bm{\gamma}_{\bm{X}A}\bm{x}a + (\bm{\gamma_M} + \bm{x}^\top\bm{\Gamma_{XM}})\bm{(I_s - B^\top_M)^{-1}}\big(\bm{B^\top_X x} + \bm{\beta}_A a + \bm{B^\top_{XA} x}a\big),
\end{align}
and further
\begin{align}
\label{A2-computedXM}
    E[Y|&A=a,\bm{M}=\bm{m}^{(a)},\bm{X}=\bm{x}] = \bm{\gamma_X x} + \gamma_A a + \bm{\gamma}_{\bm{X}A}\bm{x}a + (\bm{\gamma_M + \bm{x}^\top\bm{\Gamma_{XM}}) m^{(a)}}.%=\bm{\gamma_X x} + \gamma_A a + \bm{\gamma}_{\bm{X}A}\bm{x}a + \bm{\gamma_M m^{(a)}}.
\end{align}

Where $\bm{m^{(a)}}$ is the value $M$ would take given $A=a$. From Equations \ref{A1-computedXM} and \ref{A2-computedXM}, we can compute the HDE and HIE as follows,
\begin{align*}
    HDE =\ &E[Y|do(A=a+1,\bm{M}=\bm{m}^{(a)}),\bm{X}=\bm{x}] - E[Y|do(A=a)|\bm{X}=\bm{x}]\\
    =\ &E[Y|A=a+1,\bm{M}=\bm{m}^{(a)},\bm{X}=\bm{x}] - E[Y|A=a|\bm{X}=\bm{x}]\\
    =\ &\bm{\gamma_X x} + \gamma_A (a+1) + \bm{\gamma}_{\bm{X}A}\bm{x}(a+1) + (\bm{\gamma_M + \bm{x}^\top\bm{\Gamma_{XM}}) m^{(a)}}\\
    \ &- (\bm{\gamma_X x} + \gamma_A a + \bm{\gamma}_{\bm{X}A}\bm{x}a + (\bm{\gamma_M} + \bm{x}^\top\bm{\Gamma_{XM}}) \bm{m^{(a)}})\\
    =\ &\gamma_A + \bm{\gamma}_{\bm{X}A}\bm{x},
\end{align*}
and
\begin{align*}
    HIE =\ &E[Y|do(A=a,\bm{M}=\bm{m}^{(a+1)}),\bm{X}=\bm{x}] - E[Y|do(A=a)|\bm{X}=\bm{x}]\\
    =\ &E[Y|A=a,\bm{M}=\bm{m}^{(a+1)},\bm{X}=\bm{x}] - E[Y|A=a|\bm{X}=\bm{x}]\\
    =\ &\bm{\gamma_X x} + \gamma_A a + \bm{\gamma}_{\bm{X}A}\bm{x}a + (\bm{\gamma_M} + \bm{x}^\top\bm{\Gamma_{XM}})\bm{m^{(a+1)}} - (\bm{\gamma_X x} + \gamma_A a + \bm{\gamma}_{\bm{X}A}\bm{x}a + (\bm{\gamma_M} + \bm{x}^\top\bm{\Gamma_{XM}})\bm{m^{(a)}})\\
    =\ &(\bm{\gamma_M} + \bm{x}^\top\bm{\Gamma_{XM}})\bigg( \bigg[\bm{(I_s - B^\top_M)^{-1}}\big(\bm{B^\top_X x} + \bm{\beta}_A (a+1) + \bm{B^\top_{XA} x}(a+1)\big)\bigg] \\
    \ &- \bigg[\bm{(I_s - B^\top_M)^{-1}}\big(\bm{B^\top_X x} + \bm{\beta}_A a + \bm{B^\top_{XA} x}a\big)\bigg]\bigg)\\
    =\ &(\bm{\gamma_M} + \bm{x}^\top\bm{\Gamma_{XM}})\bm{(I_s - B^\top_M)^{-1}}\big(\bm{\beta}_A + \bm{B^\top_{XA} x}\big).
\end{align*}
And finally, we have the HTE as
\[HTE = HDE + HIE = \gamma_A + \bm{\gamma_{XA} x} + (\bm{\gamma_M} + \bm{x}^\top\bm{\Gamma_{XM}})\bm{(I_s - B^\top_M)^{-1}}\big(\bm{\beta_A} + \bm{B^\top_{XA} x}\big).\]
\end{proof}

\end{document}